\journal{Artificial Intelligence}
\newtheorem{definition}{Definition}
\newtheorem{lemma}{Lemma}
\newtheorem{theorem}{Theorem}
\newtheorem{example}{Example}
\newcommand{\seq}[2]{seq_{#1}(#2)}
\DeclareMathOperator*{\argmax}{arg\,max}
\let\epsilon\varepsilon
\newcommand{\IRCFR}{CFR+IRA\xspace}
\begin{document}

\begin{frontmatter}

\title{Automated Construction\\ of Bounded-Loss Imperfect-Recall Abstractions in Extensive-Form Games}

\author{Ji\v{r}\'{i} \v{C}erm\'{a}k}
\ead{cermak@agents.fel.cvut.cz}
\author{Viliam Lis\'{y}\corref{cor1}}
\ead{lisy@agents.fel.cvut.cz}
\author{Branislav Bo\v{s}ansk\'{y}\corref{cor2}}
\ead{bosansky@agents.fel.cvut.cz}
\address{Artificial Intelligence Center, Department of Computer Science,\\ Faculty of Electrical Engineering, Czech Technical University in Prague
\vspace{-3mm}
} 
\cortext[cor1]{Corresponding author}
\begin{abstract}
{
Extensive-form games (EFGs) model finite sequential interactions between players.
The amount of memory required to represent these games is the main bottleneck of algorithms for computing optimal strategies and the size of these strategies is often impractical for real-world applications.
A common approach to tackle the memory bottleneck is to use information abstraction that removes parts of information available to players thus reducing the number of decision points in the game. 
However, existing information-abstraction techniques are either specific for a particular domain, they do not provide any quality guarantees, or they are applicable to very small subclasses of EFGs.
We present domain-independent abstraction methods for creating imperfect recall abstractions in extensive-form games that allow computing strategies that are (near) optimal in the original game. 
To this end, we introduce two novel algorithms, FPIRA and \IRCFR, based on fictitious play and counterfactual regret minimization. 
These algorithms can start with an arbitrary domain specific, or the coarsest possible, abstraction of the original game.
The algorithms iteratively detect the missing information they require for computing a strategy for the abstract game that is (near) optimal in the original game.
This information is then included back into the abstract game.
Moreover, our algorithms are able to exploit \emph{imperfect-recall} abstractions that allow players to forget even history of their own actions.
However, the algorithms require traversing the complete unabstracted game tree.
We experimentally show that our algorithms can closely approximate Nash equilibrium of large games using abstraction with as little as $0.9\%$ of information sets of the original game. 
Moreover, the results suggest that memory savings increase with the increasing size of the original games.

}

\end{abstract}

\begin{keyword}
Extensive-Form Games, Information Abstraction, Imperfect Recall, Nash Equilibrium, Fictitious Play, Counterfactual Regret Minimization
\end{keyword}
\end{frontmatter}
\section{Introduction}

Dynamic games with a finite number of moves can be modeled as extensive-form games (EFGs): a game model capable of describing scenarios with stochastic events and imperfect information.
EFGs can model recreational games, such as poker~\cite{rubin2011computer}, as well as real-world situations in physical security \cite{lisy2016-aaai}, auctions \cite{christodoulou2008bayesian}, or medicine \cite{sandholm2015steering}.
EFGs are represented as game trees where nodes correspond to states of the game and edges to actions of players. 
Imperfect information of players is represented by grouping indistinguishable states of a player into information sets, which form the decision points of the players.

{
The size of the extensive-form representation of games grows exponentially with the number of actions the players can play in a sequence (i.e., the horizon of the game). Therefore, models of many practical problems are very large. For example, the smallest version of poker played by people includes over $10^{14}$ information sets \cite{bowling2015heads}. Similarly, even small adversarial plan recognition problems in network security also include over $10^{14}$ information sets \cite{lisy2012apr}. The memory required to store the strategy (a probability distribution over actions in each information set) is often a severe limitation in computing strategies in these models. Two main approaches to tackle this issue are online computation and the use of abstractions.

Online strategy computation avoids computing the complete strategy explicitly before playing the game. Instead, the strategy is computed while playing the game and only for the situations encountered by the player. Earlier algorithms adopting this approach in imperfect information games do not provide any performance guarantees \cite{cowling2012,long2010understanding}. More recent online game playing algorithms provide performance guarantees \cite{moravcik2017deepstack,lisy2015online} and strong practical performance \cite{moravcik2017deepstack, brown2017superhuman}, but they also have severe limitations. 
First of all, these algorithms require a substantial computational effort to make each decision. This is prohibitive in many applications, mainly in robotics and on embedded devices. Furthermore, the most successful methods exploit the specific structure of poker where all actions of the players are fully observable and the amount of hidden information is restricted. Deciding whether and in what way these algorithms can be generalized to games without these simplifying properties is not determined and remains an open problem.

Abstraction methodology, instead of solving a game that is too large, solves a smaller abstract game, which is a simplification of the original game. The solution of the simplified game is then used for playing the original game. This methodology was, for a long time, in the center of attention of the computational poker community \cite{gilpin2007,kroer2014extensive,brown2015simultaneous,brown2017superhuman} and even led to the first computer program that outperformed professional poker players in the smallest variant of the game played by people \cite{Wired08}. 
However, if the original game is too large to be processed even with algorithms linear in the number of the nodes in the game tree, it is very hard to provide any domain-independent guarantees on the performance of the strategy computed using abstractions. The evaluation of such strategies is limited to tournaments, which suffer from intransitivity \cite{acpc17} and strong dependence on the pool of other participants or specific tournament rules \cite{bard2016online}. In many (e.g., security) applications, it is desirable to have worst-case guarantees on the performance of the computed strategy. Therefore, \textbf{we focus on solving games where it is feasible to traverse all nodes in the game tree, but we still want to minimize the memory required to store the computed strategy.}

Having equilibrium solving algorithms with small memory requirements is practical for several reasons. First, it allows solving larger games with more commonly available hardware. While current hard drives provide sufficient storage capacity, their access latency and the speed of reading and writing is usually the bottleneck of algorithms intensively working with data stored on these devices. Substantial speedups can be achieved by keeping the whole computation in the main memory. Second, a small computed strategy is much more practical in applications. Besides being easier to store and transfer over a network, it is also faster to query during the game play. For example, it can be accessed on small devices by deployed units such as park rangers (see, e.g., \cite{fang2017paws}). Third, a small strategy is easier to use in portfolio-based approaches, where we want to store multiple different strategies for a game in order to play better \cite{brown2018depth} or exploit suboptimal opponents \cite{bard2013online}.

The problem of reducing the amount of memory required for computing a strategy was addressed by several recent algorithms since the size of required memory is an important bottleneck for scaling up the computation~\cite{bowling2015heads}. CFR-BR \cite{johanson2012} allows computing a strategy in one-quarter of the memory required by CFR by replacing the updates of one of the players by a best response computation. CFR-D \cite{burch2014solving} allows for using a quadratic computation time to compute a strategy close to the beginning of the game, as a trade-off for requiring only in the order of square root of the storage space. The strategy for the latter parts of the game is then computed online. DOEFG \cite{bosansky2014-jair} stores data only about a small part of the game in which players can use only small subsets of their actions. This restricted game is iteratively extended with new actions, which can improve players' expected utility, until the equilibrium is provably found. All these algorithms assume it is possible to traverse the whole game tree for at least one of the players.

\subsection{Our Contribution}

In this paper, we reduce the memory required for computing and representing a (near) optimal strategy for a game using automatically-constructed imperfect-recall abstractions created by domain-independent algorithms.

\paragraph{Domain independent} Most existing methods for automatically constructing abstractions in extensive-form games were designed primarily for poker. They explicitly work with concepts like cards and rounds of the game \cite{shi2000abstraction,billings2003approximating}, or at least assume that the actions are publicly observable \cite{brown2015simultaneous} and ordered \cite{Gilpin07:Abstraction}. This is not true in many other domains (e.g., in security). 
The algorithms proposed in this paper are completely domain-independent and applicable to any extensive-form game. They only require a definition of the game and a desired distance of the solution from the equilibrium in the original game.

\paragraph{Imperfect recall}
Computationally efficient algorithms for computing (near) optimal strategies in extensive-form games~\cite{vonStengel96,zinkevich2008regret,Hoda2010} require players to remember all the information gained during the game – a property denoted as \emph{perfect recall}.
Therefore, the automated abstraction methods designed to be used with these algorithms \cite{Gilpin07:Abstraction,brown2015simultaneous} must construct perfect-recall abstractions to provide performance guarantees.
Requiring perfect recall has, however, a significant disadvantage -- the number of decision points and hence both the memory required during the computation and the memory required to store the resulting strategy grows exponentially with the number of moves.
To achieve additional memory savings, the assumption of perfect recall may need to be violated in the abstract game resulting in \emph{imperfect recall}.
Using imperfect recall abstractions can bring exponential savings in memory, and these abstractions are particularly useful in games in which exact knowledge about the past is not required for playing optimally.
While it can be easy to identify specific examples of imperfect recall abstractions for some games, it is very difficult to systematically and algorithmically identify which information is required for solving the original game and which can be removed. 
For example in imperfect information card games, it is usually important to estimate the opponent's cards. While past events generally reveal some information, it is not clear which exact event is relevant or not. 
\\

The only method for automatically constructing imperfect-recall abstractions with qualitative bounds is presented in \cite{kroer2016imperfect}. This existing method considers only a very restricted class of imperfect-recall abstractions. In short, the information sets can be merged only if they satisfy strict properties on the history of actions and there is a mapping between the applicable actions in these information sets such that future courses of the game and possible rewards are similar (see \cite{kroer2016imperfect} for all the details).
In this work, we take a different approach and instead of constraining which information sets can be merged, we design algorithms that start with a very coarse abstraction and similarly to \emph{inflation} operation~\cite{dalkey1953equivalence} refine information sets where necessary.
Our approach, however, does not require any specific structure of the abstract game or refined information sets.
We introduce two domain-independent algorithms, which can start with an arbitrary imperfect recall abstraction of the solved two-player zero-sum perfect recall EFG. The algorithms simultaneously solve the abstracted game, detect the missing information causing problems and refine the abstraction to include this information. This process is repeated until provable convergence to the desired approximation of the Nash equilibrium of the \textbf{original game}. 

Our algorithms can be initialized by an arbitrary abstraction since the choice of the initial abstraction does not affect the convergence guarantees of the algorithms. Hence, for example, in poker, we can use the existing state-of-the-art abstractions used by the top poker bots. Even though these abstractions have no guarantees that they allow solving the original poker to optimality, our algorithms will further refine these abstractions where necessary and provide the desired approximation of the Nash equilibrium in the original game. If there is no suitable abstraction available for the solved game, the algorithms can start with a simple coarse imperfect recall abstraction (we provide a domain-independent algorithm for constructing such abstraction) and again update the abstraction until it allows approximation of the Nash equilibrium of the original game to the desired precision.   
}

\subsection{Outline of the Proposed Algorithms}

The first algorithm is Fictitious Play for Imperfect Recall Abstractions (\textsc{FPIRA})\footnote{A part of this work appeared in \cite{cermak2017ijcai}. Here, we provide an improved version of the FPIRA which can use a significantly smaller initial abstraction. Additionally, we significantly extend the experimental evaluation of the algorithm.}. FPIRA is based on Fictitious Play (FP, \cite{brown1949some}). As a part of the contribution, we discuss the problems of applying FP to imperfect recall abstractions and how to resolve them. We then demonstrate how to detect the parts of the abstraction that need to be refined to enable convergence to the Nash equilibrium of the original game. We base this detection on the difference between the quality of the strategies expected from running FP directly on the original two-player zero-sum EFG with perfect recall and the result obtained from applying it to the abstraction. Finally, we prove that the guarantee of convergence of FP to the Nash equilibrium of the original two-player zero-sum EFG with perfect recall directly translates to the guarantee of convergence of FPIRA to the Nash equilibrium of this game. 

The second algorithm is CFR+ for Imperfect Recall Abstractions (\IRCFR). 
\IRCFR replaces FP by the CFR+ algorithm \cite{tammelin2014cfr+} since CFR+ is known to have a significantly faster empirical convergence to the Nash equilibrium. As a part of the contribution, we describe problems of applying CFR+ directly to imperfect recall abstractions and how to resolve them. To update the abstraction, we compare the expected theoretical convergence of CFR+ in the original game and the convergence achieved in the abstraction. The abstraction is refined when the observed convergence is slower than the theoretical guarantee provided by CFR+ in the original game with perfect recall. We provide a bound on the average external regret of \IRCFR and hence show that \IRCFR is guaranteed to converge to the Nash equilibrium of the original two-player zero-sum EFG with perfect recall. Finally, we provide an efficient heuristic for the abstraction update and demonstrate that it significantly improves the speed of convergence to the Nash equilibrium of the original EFG.

Both algorithms are conceptually similar to the Double Oracle algorithm (DOEFG, \cite{bosansky2014}) since they create a smaller version of the original game and repeatedly refine it until the desired approximation of the Nash equilibrium of the original game is found. Our algorithms, however, use imperfect recall information abstractions during the computation, while DOEFG uses a restricted perfect recall game, where the players are allowed to play only a subset of their actions. Hence, the algorithms introduced in this article exploit a completely different type of sparseness than DOEFG. 

In the experimental evaluation, we compare the memory requirements and runtime of \IRCFR, FPIRA, and DOEFG. We demonstrate that \IRCFR requires at least an order of magnitude less memory than DOEFG and FPIRA to solve a diverse set of domains. Hence it is the most suitable algorithm for (approximately) solving games with limiting memory requirements. We show that even if \IRCFR is initialized with a trivial automatically built abstraction, it requires building of information abstractions with as few as $0.9\%$ of information sets of the original game to find the desired approximation of the Nash equilibrium of the original game. Moreover, the results suggest that the relative size of the abstraction built by \IRCFR will further decrease as the size of the solved game increases.
{From the runtime perspective, we demonstrate that the \IRCFR may converge similarly fast to CFR+ applied directly to the original game.}

\subsection{Related Work on Abstractions}
Here, we provide an overview of the related work concerning the use of abstractions to solve large EFGs. 
There are two distinct approaches to abstracting an EFG: \emph{information abstractions} and \emph{action abstractions}.

The information abstractions reduce the size of the original large extensive-form game by removing information available to players; hence, merging their information sets.
Since the players have to play an identical strategy in the merged information sets, the size of the strategy representation in the abstracted game can be significantly smaller than in the original game.
The abstracted game is then solved, and the small resulting strategies are used in the original game.
Most of the work on information abstractions was driven by research in the poker domain.
Information abstractions were initially created using domain-dependent knowledge \cite{shi2000abstraction,billings2003approximating}. Algorithms for creating information abstractions followed \cite{gilpin2006competitive,Gilpin07:Abstraction} and led to the development of bots with increasing quality of play in poker.
This work culminated in lossless information abstractions (i.e., abstractions where the strategies obtained by solving the abstracted game form optimal strategies in the original game) which allowed solving the Rhode Island Hold'em, a poker game with $3.1\cdot 10^9$ nodes in the game tree \cite{gilpin2007}{, after reducing the number of sequences in the sequence form representation of the game to approximately 1.4\% of their original number.}
When moving to larger games, lossless abstractions were found to be too restrictive to offer sufficient reductions in the size of the abstracted game. Hence the focus switched to lossy abstractions. A mathematical framework that can be used to create perfect recall information abstractions with bounds on solution quality was introduced \cite{kroer2014extensive}. Additionally, both lossless and lossy imperfect recall abstractions were provided in \cite{lanctot2012,kroer2016imperfect}. The authors show that running the CFR algorithm on this class of imperfect recall abstractions leads to a bounded regret in the full game. 
Even though these restrictions simplify solving of the abstracted game, they prevent us from creating sufficiently small and useful abstracted games and thus fully exploit the possibilities of imperfect recall. Existing methods for using imperfect recall abstractions without severe limitations cannot provide any guarantees of the quality of computed strategies \cite{waugh2009}, or assume that the abstraction is given and require computationally complex approximation of a bilinear program to solve it \cite{cermak2018approximating}.

Another form of abstractions are action abstractions. Instead of merging information sets, this abstraction methodology restricts the actions available to the players in the original game. Similarly to information abstractions, most of the work on action abstraction was driven by the poker domain where there is a prohibitive amount of actions available to the players (e.g., when betting the players can choose any value up to the number of chips they have available). The first automated techniques iteratively adjust the bet sizing in no-limit hold'em \cite{hawkin2011automated,hawkin2012using}. The algorithm for simultaneous action-abstraction finding and game-solving was introduced in \cite{brown2015simultaneous}. This approach is similar to the algorithms presented in this work since it starts with a coarse abstraction and iteratively refines it until guaranteed convergence to the Nash equilibrium of the solved game.

The combination of information abstractions and action abstractions was at the heart of a number of algorithms for playing heads-up no-limit Texas Hold'em poker (see, e.g., \cite{gilpin2008heads}). Furthermore, the algorithm Libratus \cite{brown2017superhuman} using both mentioned abstractions led to a superhuman performance in heads-up no-limit Texas Hold'em.

In other domains, abstractions with bounded error were introduced in Patrolling Security Games \cite{basilico2011automated}. Additionally, the first general framework for lossy game abstraction with bounds was provided for stochastic games \cite{sandholm2012lossy}.

\section{Extensive-Form Games}
A two-player extensive-form game (EFG) is a tuple $G=(\mathcal{N},\mathcal{H},\mathcal{Z},\mathcal{A},u,{\cal C},{\cal I})$, which is commonly visualized as a game tree (see Figure~\ref{fig:efg_example}).

\begin{figure}[t]
\centering
\includegraphics[height=9em]{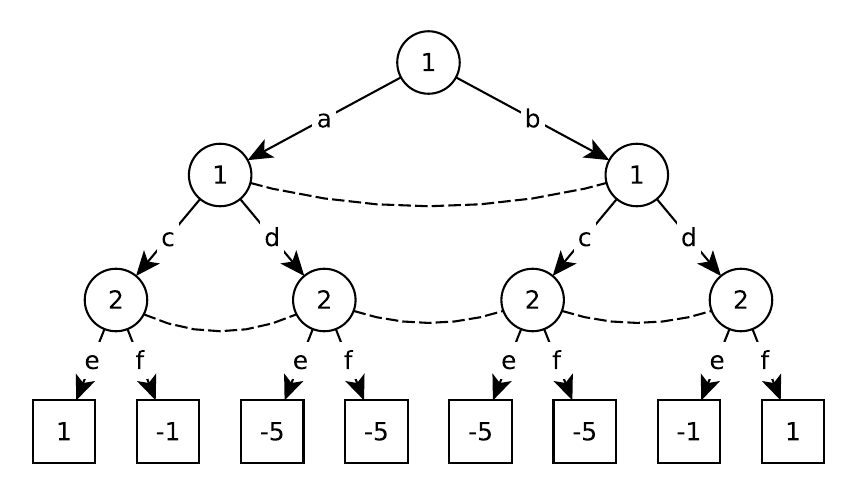}
\caption{An imperfect recall game. Circle nodes represent the states of the game, labels in the circles show which player acts in that node (player 1, player 2 or chance player $N$), dashed lines represent indistinguishable states and box nodes are the terminal states with utility value for player 1 (the game is zero-sum, hence player 1 maximizes the utility, player 2 minimizes it). {Imperfect recall is caused by player 1 forgetting their first action.}}\label{fig:efg_example}
\end{figure}

$\mathcal{N} = \{1, 2\}$ is a set of players, by $i$ we refer to one of the players, and by $-i$ to his opponent. Additionally, the chance player $N$ represents the stochastic environment of the game.
$\mathcal{A}$ denotes the set of all actions labeling the edges of the game tree.
$\mathcal{H}$ is a finite set of \emph{histories} of actions taken by all players and the chance player from the root of the game. 
Each history corresponds to a \emph{node} in the game tree; hence, we use the terms history and node interchangeably.
$\mathcal{Z} \subseteq \mathcal{H}$  is the set of all \emph{terminal states} of the game corresponding to the leaves of the game tree. 
For each $z \in \mathcal{Z}$ and $i \in \mathcal{N}$ we define a \emph{utility function} $u_i: \mathcal{Z} \rightarrow \mathbb{R}$. If $u_i(z) = -u_{-i}(z)$ for all $z \in \mathcal{Z}$, we say that the game is zero-sum. Chance player selects actions based on a fixed probability distribution known to all players. 
Function ${\cal C} : \mathcal{H} \rightarrow [0,1]$ is the probability of reaching $h$ obtained as the product of probabilities of actions of chance player preceding $h$. We further overload ${\cal C}$ and use it to denote the probability $\mathcal{C}(a)$ that action $a$ of chance player is taken.
Imperfect observation of player $i$ is modeled via \emph{information sets} ${\cal I}_i$ that form a partition over $h \in \mathcal{H}$ where $i$ takes action.
Player~$i$ cannot distinguish between nodes in any $I \in \mathcal{I}_i$. We represent the information sets as nodes connected by dashed lines in the examples.
$\mathcal{A}(I)$ denotes actions available in each $h \in I$.
The action $a$ uniquely identifies the information set where it is available, i.e., for all distinct $I, I' \in \mathcal{I}\ \forall a \in \mathcal{A}(I)\ \forall a' \in \mathcal{A}(I')\ a \neq a'$.
An ordered list of all actions of player~$i$ from the root to node $h$ is referred to as a \emph{sequence}, $\sigma_i = \seq{i}{h}$. $\Sigma_i$ is a set of all sequences of player~$i$. 
We use $\seq{i}{I}$ as a set of all sequences of player~$i$ leading to $I$.
A game has \emph{perfect recall} iff $\forall i \in {\cal N}\ \forall I \in {\cal I}_i$, for all $h, h' \in I$ holds that $seq_i(h) = seq_i(h')$. If there exists at least one information set where this does not hold (denoted as \emph{imperfect recall information set}), the game has \emph{imperfect recall}.

\subsection{Information Abstraction}
By \emph{information abstraction} of a game $G = ({\cal{N}}, {\cal H}, {\cal Z}, u, {\cal I}, \mathcal{A})$ we denote a game $G^x = ({\cal{N}}, {\cal H}, {\cal Z}, u, {\cal I}^x, \mathcal{A}^x)$ (when iteratively building the abstraction, we willl use $x$ to refer to the abstraction in a specific iteration). $G^x$ differs from $G$ in the structure of information sets and hence also in the action labeling. Each $I \in \mathcal{I}^x$ groups one or more information sets in $\mathcal{I}$. $\mathcal{A}^x$ changes $\mathcal{A}$ so that each action $a$ again uniquely identifies the information set in $\mathcal{I}^x$ where it is available. Furthermore, $\forall I \in \mathcal{I}^x \forall h, h' \in I \ \mathcal{A}^x(h) = \mathcal{A}^x(h')$.

To formally describe the information abstraction, we define mappings $\Phi_x: \mathcal{I} \rightarrow \mathcal{I}^x$, which for each $I \in \mathcal{I}$ returns the information set containing $I$ in $G^x$ and $\Phi_x^{-1}: \mathcal{I}^x \rightarrow \wp(\mathcal{I})$, the inverse of $\Phi_x$. By $\Xi_x : \mathcal{A} \rightarrow \mathcal{A}^x$ and $\Xi^{-1}_x : \mathcal{A}^x \rightarrow \wp(\mathcal{A})$ we denote the mapping of actions from $G$ to $G^x$ and vice versa.

We say that $I \in \mathcal{I}^x$ is an \emph{abstracted information set} if $|\Phi_x^{-1}(I)| > 1$. By $\widetilde{\mathcal{I}}^x \subseteq \mathcal{I}^x$ we denote the set of all abstracted information sets in $G^x$.

\begin{example}
\begin{figure}
\centering
\includegraphics[width = 3.5cm]{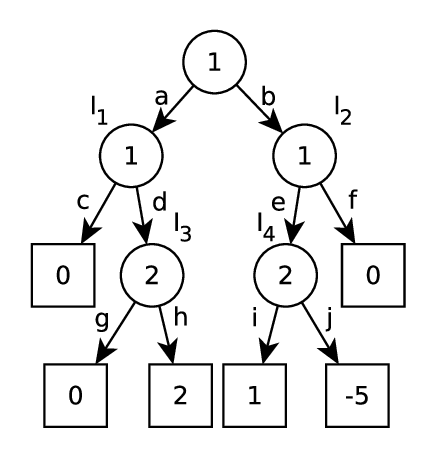}\includegraphics[width = 3.5cm]{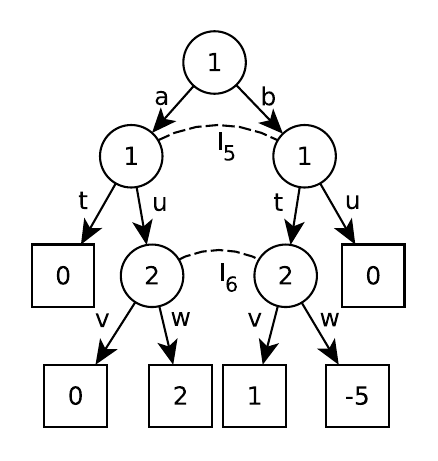}
\caption{(a) Extensive-form game $G$. (b) Imperfect recall information abstraction of $G$.}
\label{fig:abstr_example}
\end{figure}
In Figure \ref{fig:abstr_example}(a) we show an extensive-form game $G$ and in Figure \ref{fig:abstr_example}(b) its imperfect recall abstraction $G^x$. In this case, 

{\small\begin{gather*}\Phi_x: \{I_1 \rightarrow I_5, I_2 \rightarrow I_5, I_3 \rightarrow I_6, I_4 \rightarrow I_6\},\\ 
\Phi^{-1}_x: \{I_5 \rightarrow \{I_1, I_2\}, I_6 \rightarrow \{I_3, I_4\}\},\\ 
\Xi_x: \{c \rightarrow t, e \rightarrow t, d \rightarrow u, f \rightarrow u, g \rightarrow v, i \rightarrow v, h \rightarrow w, j \rightarrow w\},\\
\Xi^{-1}_x: \{t \rightarrow \{c, e\}, u \rightarrow \{d, f\}, v \rightarrow \{g, i\}, w \rightarrow \{h, j\}\},\\
\widetilde{\mathcal{I}}^x = \{I_5, I_6\}.
\end{gather*}}
   
\end{example}

\subsection{Strategies in Imperfect Recall Games}
There are several representations of strategies in EFGs.
A \emph{pure strategy} $s_i$ for player $i$ is a mapping assigning $\forall I \in {\cal I}_i$ an element of $\mathcal{A}(I)$. ${\cal S}_i$ is a set of all pure strategies for player $i$. A \emph{mixed strategy} $m_i$ is a probability distribution over ${\cal S}_i$. The set of all mixed strategies of $i$ is denoted as ${\cal M}_i$. %
\emph{Behavioral strategy} $b_i$ assigns a probability distribution over $\mathcal{A}(I)$ for each $I \in \mathcal{I}_i$. ${\cal B}_i$ is a set of all behavioral strategies for $i$, ${\cal B}^p_i \subseteq {\cal B}_i$ is the set of deterministic behavioral strategies for $i$. 
A \emph{strategy profile} is a set of strategies, one strategy for each player.  

\begin{definition}
A pair of strategies $x_i, y_i$ of player $i$ with arbitrary representation is \emph{realization equivalent} if $\forall z \in \mathcal{Z}: \pi_i^{x_i}(z) = \pi_i^{y_{i}}(z)$, where $\pi_i^{x_{i}}(z)$ is a probability that $z$ is reached due to strategy $x_i$ of player $i$ when the rest of the players play to reach $z$.
\end{definition}
We overload the notation and use $u_i$ as the expected utility of $i$ when the players play according to pure (mixed, behavioral) strategies.

Behavioral strategies and mixed strategies have the same expressive power in perfect recall games, but it can differ in imperfect recall games~\cite{Kuhn1953} (see \cite{cermak2018approximating} for more detailed discussion). 

Moreover, the size of these representations differs significantly. Mixed strategies of player $i$ state probability distribution over $\mathcal{S}_i$, where $|\mathcal{S}_i| \in \mathcal{O}(2^{|\mathcal{Z}|})$, behavioral strategies create probability distribution over the set of actions (therefore, its size is proportional to the number of information sets, which can be exponentially smaller than $|\mathcal{Z}|$). Hence, behavioral strategies are more memory efficient strategy representation. Additionally, when used in combination with information abstractions, behavioral strategies directly benefit from the reduced number of information sets in the abstracted game.

Finally, we define the Nash equilibrium, $\epsilon$-Nash equilibrium and the exploitability of a strategy.
\begin{definition}
\label{def:NE_behav}
We say that strategy profile $b = \{b^*_i, b^*_{-i}\}$ is a \emph{Nash equilibrium} (NE) in behavioral strategies iff $\forall i \in \mathcal{N}\ \forall b'_i \in {\cal B}_i: u_i(b^*_i, b^*_{-i}) \geq u_i(b'_i, b^*_{-i})$.
\end{definition}
Informally, a strategy profile is a NE if and only if no player wants to deviate to a different strategy.

\begin{definition}
\label{def:eNE_behav}
We say that strategy profile $b = \{b^*_i, b^*_{-i}\}$ is an $\epsilon$\emph{-Nash equilibrium} ($\epsilon$-NE) in behavioral strategies iff $\forall i \in \mathcal{N}\ \forall b'_i \in {\cal B}_i: u_i(b^*_i, b^*_{-i}) \geq u_i(b'_i, b^*_{-i}) - \epsilon$.
\end{definition}
Informally, a strategy profile is a $\epsilon$-NE if and only if no player can gain more than $\epsilon$ by deviating to a different strategy.

\begin{definition}
We define the \emph{exploitability} of a strategy $b_i$ as 

{\small
$$\max_{b'_i \in \mathcal{B}_i} \min_{b'_{-i} \in \mathcal{B}_{-i}} u_i(b'_i, b'_{-i}) - \min_{b'_{-i} \in \mathcal{B}_{-i}}u_i(b_i, b'_{-i}).$$}
\end{definition}
Informally, the exploitability of a strategy of player $i$ corresponds to the highest loss the player $i$ can suffer for not playing the strategy maximizing his worst case expected outcome.

\section{Iterative Algorithms for Solving EFGs}
\label{sec:cfr}
In this section we describe Fictitious Play (FP,  \cite{brown1949some}), we follow with the description of ideas behind external regret, Counterfactual Regret Minimization (CFR, \cite{zinkevich2008regret}) and its variant CFR+~\cite{tammelin2014cfr+,tammelin2015solving}.

\subsection{Fictitious Play}
Fictitious play (FP) is an iterative algorithm originaly defined for normal-form games \cite{brown1949some}. It keeps track of average mixed strategies $\bar{m}^T_i, \bar{m}^T_{-i}$ of both players. Players take turn updating their average strategy as follows. In iteration $T$, player $i$ computes $s^T_i \in BR(\bar{m}^{T-1}_{-i})$. He then updates his average strategy $\bar{m}^T_i = \frac{T_i - 1}{T_i}\bar{m}^{T - 1}_i  + \frac{1}{T_i}s^T_i$ ($T_i$ is the number of updates performed by $i$ plus 1). In two-player zero-sum games $\bar{m}^T_i, \bar{m}^T_{-i}$ converge to a NE \cite{robinson1951iterative}. Furthermore, there is a long-standing conjecture \cite{Karlin2003mathematical,Daskalakis2014counter} that the convergence rate of FP is $O(T^{-\frac{1}{2}})$, the same order as the convergence rate of CFR (though the empirical convergence of CFR and CFR+ tends to be better).

When applying FP to behavioral strategies in perfect recall zero-sum EFG $G$, one must compute the average behavioral strategy $\bar{b}^t_i$ such that it is realization equivalent to $\bar{m}^t_i$ obtained when solving the normal form game corresponding to $G$ for all $t$ and all $i \in \mathcal{N}$ to keep the convergence guarantees.
To update the behavioral strategy in such a way we use the following Lemma \cite{heinrich2015fictitious}.
\begin{lemma}
Let $b_i$, $b_i'$ be two behavioral strategies and $m_i$, $m_i'$ two mixed strategies realization equivalent to $b_i$, $b_i'$, and $\lambda_1, \lambda_2 \in [0, 1]$, $\lambda_1 + \lambda_2 = 1$. Then 
{\small
$$b_i''(I) = b_i(I) + \frac{\lambda_2 \pi_i^{b_i'}(I)}{\lambda_1 \pi_i^{b_i}(I) + \lambda_2 \pi_i^{b_i'}(I)}(b_i'(I) - b_i(I)), \forall I \in \mathcal{I}_i, $$}
defines a behavioral strategy $b_i''$ realization equivalent to the mixed strategy $m_i'' = \lambda_1 m_i + \lambda_2 m_i'$.\label{lemma:strat_update}
\end{lemma}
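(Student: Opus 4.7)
My plan is to prove realization equivalence by induction on the depth of information sets in the game tree, showing that at every information set $I \in \mathcal{I}_i$ the realization probability under $b_i''$ equals the convex combination $\lambda_1 \pi_i^{b_i}(I) + \lambda_2 \pi_i^{b_i'}(I)$. Since $m_i$ and $m_i'$ are realization equivalent to $b_i$ and $b_i'$ respectively, this combination equals $\pi_i^{m_i''}(I)$ for $m_i'' = \lambda_1 m_i + \lambda_2 m_i'$, and the same equality will immediately transfer to every terminal $z \in \mathcal{Z}$, giving the desired realization equivalence.

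The first observation is that the update formula can be rewritten as a convex combination at each information set. Setting $\alpha(I) = \frac{\lambda_1 \pi_i^{b_i}(I)}{\lambda_1 \pi_i^{b_i}(I) + \lambda_2 \pi_i^{b_i'}(I)}$ and $\beta(I) = \frac{\lambda_2 \pi_i^{b_i'}(I)}{\lambda_1 \pi_i^{b_i}(I) + \lambda_2 \pi_i^{b_i'}(I)}$, the definition becomes $b_i''(I) = \alpha(I)\, b_i(I) + \beta(I)\, b_i'(I)$, which is a valid probability distribution over $\mathcal{A}(I)$. The base case of the induction is trivial: the root information sets have $\pi_i^{b_i}(\cdot) = \pi_i^{b_i'}(\cdot) = \pi_i^{b_i''}(\cdot) = 1$. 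For the inductive step, take an information set $I$ reached from a parent information set $I'$ by a single action $a$ of player $i$ (using perfect recall so that sequences are well-defined). By the inductive hypothesis $\pi_i^{b_i''}(I') = \lambda_1 \pi_i^{b_i}(I') + \lambda_2 \pi_i^{b_i'}(I')$, and multiplying by $b_i''(I')(a) = \alpha(I')\, b_i(I')(a) + \beta(I')\, b_i'(I')(a)$ the cross terms cancel exactly and yield
\[
\pi_i^{b_i''}(I) = \lambda_1 \pi_i^{b_i}(I')\, b_i(I')(a) + \lambda_2 \pi_i^{b_i'}(I')\, b_i'(I')(a) = \lambda_1 \pi_i^{b_i}(I) + \lambda_2 \pi_i^{b_i'}(I),
\]
completing the induction. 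Propagating once more through the (possibly non-player-$i$) actions leading to any $z \in \mathcal{Z}$ gives $\pi_i^{b_i''}(z) = \lambda_1 \pi_i^{b_i}(z) + \lambda_2 \pi_i^{b_i'}(z) = \lambda_1 \pi_i^{m_i}(z) + \lambda_2 \pi_i^{m_i'}(z) = \pi_i^{m_i''}(z)$, which is exactly realization equivalence of $b_i''$ and $m_i''$.

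The only technical wrinkle, and the place I expect the argument to need a comment rather than a calculation, is the edge case where $\lambda_1 \pi_i^{b_i}(I) + \lambda_2 \pi_i^{b_i'}(I) = 0$ and the update formula becomes $0/0$. But in that case both $\pi_i^{b_i}(I)$ and $\pi_i^{b_i'}(I)$ vanish (since $\lambda_1, \lambda_2 \ge 0$), so $I$ is reached with probability zero under either component, $\pi_i^{m_i''}$ also assigns $I$ probability zero, and $b_i''(I)$ may be defined as an arbitrary distribution on $\mathcal{A}(I)$ without affecting any $\pi_i^{b_i''}(z)$. With that convention the inductive identity still holds vacuously at $I$, and the rest of the argument goes through unchanged.
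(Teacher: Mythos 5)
Your proof is correct: rewriting the update as the convex combination $b_i''(I)=\alpha(I)b_i(I)+\beta(I)b_i'(I)$ and inducting along player $i$'s sequences so that the denominator cancels against the inductive hypothesis $\pi_i^{b_i''}(I')=\lambda_1\pi_i^{b_i}(I')+\lambda_2\pi_i^{b_i'}(I')$ is exactly the standard argument; note that the paper itself states this lemma without proof, importing it from the cited fictitious self-play work of Heinrich et al., where essentially the same sequence-wise induction is used. The only blemish is the degenerate-case remark: when $\lambda_1=0$ (or $\lambda_2=0$) a zero denominator does not force both $\pi_i^{b_i}(I)$ and $\pi_i^{b_i'}(I)$ to vanish, but your conclusion still stands because the inductive identity already gives $\pi_i^{b_i''}(I)=\lambda_1\pi_i^{b_i}(I)+\lambda_2\pi_i^{b_i'}(I)=0$ at such $I$, so the arbitrary choice of $b_i''(I)$ is harmless.
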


\subsection{External Regret} 
Given a sequence of behavioral strategy profiles $b^1, ..., b^T$, the external regret for player $i$, defined as

{\small
\begin{equation}
R_i^T = \max_{b_i' \in {\cal B}_i}\sum_{t = 1}^T(u_i(b_i', b_{-i}^t) - u_i(b^t_i, b_{-i}^t)), \label{eq:ext_regret}
\end{equation}
}
is the amount of additional expected utility player $i$ could have gained if he played the best possible strategy across all time steps $t \in \{1, ..., T\}$ compared to the expected utility he got from playing $b_i^t$ in every $t$.
An algorithm is a no-regret algorithm for player $i$, if the average external regret approaches zero; i.e., 

{\small $$\lim_{T \rightarrow \infty}\frac{R_i^{T}}{T}= 0.$$}

\subsection{Counterfactual Regret Minimization}
Let $b^t_{I \rightarrow a}$ be the strategy profile $b^t$ except for $I$, where $a \in \mathcal{A}(I)$ is played. Let $\pi^b(h)$ be the probability that $h$ will be reached when players play according to the strategy profile $b$, with $\pi^b_i(h) = \prod_{a \in seq_i(h)}b(a)$  being the contribution of player $i$ and $\pi^b_{-i, c}(h) = \mathcal{C}(h)\prod_{a \in seq_{-i}(h)}b(a)$ the contribution of $-i$ and chance. $z[I]$ stands for the state $h$ which is the predecessor of $z$ in $I$. Let $\pi^b(h, h')$ be the probability that $h'$ will be reached from $h$ when players play according to $b$ and $\mathcal{Z}_I \subseteq \mathcal{Z}$ a set of leaves reachable from all $h \in I$.
Finally, let the counterfactual value of $i$ in information set $I$ when players play according to the strategy profile $b$ be

{\small
\begin{equation}
v_i(b, I) = \sum_{z \in \mathcal{Z}_I}u_i(z)\pi^b_{-i, c}(z[I])\pi^b(z[I], z).\nonumber
\end{equation}  
}
Counterfactual regret is defined for each iteration $T$, player $i$, information set $I \in \mathcal{I}_i$ and action $a \in \mathcal{A}(I)$ as 
\begin{equation}
R_i^T(I, a) = \sum_{t = 1}^T\left[ v_i(b^t_{I \rightarrow a}, I) - v_i(b^t, I)\right].\label{eq:RIA_update}
\end{equation}

Let  $\left(x\right)^+$ stand for $\max(x, 0)$. The strategy $b^T_i$ for player $i$ in iteration $T$ in the standard CFR algorithm (sometimes termed vanilla CFR) is computed from counterfactual regrets using the \emph{regret-matching} as follows

{\small
\begin{equation}
b^{T}_i(I, a) = \begin{cases}
\frac{\left(R^{T-1}_i(I, a)\right)^+}{\sum_{a'\in \mathcal{A}(I)} \left( R_i^{T-1}(I, a')\right)^+} \text{, if }\sum_{a'\in \mathcal{A}(I)} \left(R_i^{T-1}(I, a')\right)^+ > 0\\
\frac{1}{\mathcal{A}(I)}\text{, otherwise.} \label{eq:regret-matching}
\end{cases}
\end{equation}}

The immediate counterfactual regret is defined as 
{\small
$$R^T_{i, imm}(I) = \frac{1}{T}\max_{a \in A(I)}R_i^T(I, a).$$}
In games having perfect recall, minimizing the immediate counterfactual regret in every information set minimizes the average external
regret. This holds because perfect recall implies that 

{\small
\begin{equation}
\frac{R_i^T}{T} \leq\sum_{I\in {\cal I}_i}\left(R_{i, imm}^{T}(I)\right)^+,\label{eq:bound1}
\end{equation}
}

i.e., the external regret is bounded by the sum of positive parts of immediate counterfactual regrets \cite{zinkevich2008regret}.

Let $\Delta_I = \max_{z \in \mathcal{Z}_I}u_i(z) - \min_{z \in \mathcal{Z}_I}u_i(z)$ and $u_{max} = \max_{i \in \mathcal{N}}\max_{z \in \mathcal{Z}} u_i(z)$.
When player $i$ plays according to eq. \eqref{eq:regret-matching} in $I$ during iterations \{1, ..., T\}, then

{\small
\begin{equation}
R_{i, imm}^T(I)    \leq \frac{\Delta_I\sqrt{|\mathcal{A}(I)|}}{\sqrt{T}}.\label{eq:cfr_is_regret_bound}
\end{equation}}

Thus, from eq. \eqref{eq:bound1}
{\small
\begin{equation}
\frac{R_i^T}{T} \leq \frac{\Delta|{\cal I}_i|\sqrt{|A_{max}|}}{\sqrt{T}},\label{eq:bound2}
\end{equation}
} where $|A_{max}| = \max_{I\in {\cal I}_i}|\mathcal{A}(I)|$ and $\Delta = 2u_{max}$ \cite{zinkevich2008regret}.

Furthermore, let $\bar{b}_i^T$ be the average strategy for $i$ defined as 

{\small
\begin{equation}
\bar{b}_i^T(I, a) = \frac{\sum_{t = 1}^T\pi_i^{b_i^t}(I)b_i^t(I, a)}{\sum_{t = 1}^T \pi_i^{b_i^t}(I)}, \quad\forall I \in \mathcal{I}_i, \forall a \in \mathcal{A}(I).
\end{equation}
}

If $\frac{R^T_i}{T} < \epsilon$ for all $i \in \mathcal{N}$ in a two-player zero-sum EFG, the strategy profile $\bar{b}^T = (\bar{b}^T_1, \bar{b}^T_2)$ forms a $2\epsilon$-Nash equilibrium~\cite{zinkevich2008regret}.

\subsection{CFR+}
\label{sec:cfr+}
The CFR+~\cite{tammelin2014cfr+,tammelin2015solving} replaces the regret-matching shown in equation \eqref{eq:regret-matching} with \emph{regret-matching}$^+$. To do that, the algorithm maintains alternative regret values $Q_i^T(I, a)$ in each iteration $T$ and for each player $i$ defined as

{\small
\begin{align}
Q_i^0(I, a) &= 0\\
Q_i^T(I, a) &= \left( Q_i^{T-1}(I, a) + R_i^T(I, a) - R_i^{T-1}(I, a)\right)^+.
\end{align}}

The update of $Q_i^T(I, a)$ is identical to the update of $R_i^T(I, a)$ in eq. \eqref{eq:RIA_update} except for negative values which are replaced by 0. This prevents negative counterfactual regrets from accumulating. Since the strategy is then computed using the regret-matching$^+$ defined as
\begin{equation}
b^{T}_i(I, a) = \begin{cases}
\frac{Q_i^{T-1}(I, a)}{\sum_{a'\in \mathcal{A}(I)} Q_i^{T-1}(I, a')}\text{, if }\sum_{a'\in \mathcal{A}(I)} Q_i^{T-1}(I, a') > 0\\
\frac{1}{\mathcal{A}(I)}\text{, otherwise,} 
\end{cases}
\end{equation}
any future positive regret changes will immediatelly affect the resulting strategy instead of canceling out with the accumulated negative regret values.

Additionally, CFR+ uses alternating updates of the regrets, i.e., during one iteration of the algorithm only regrets of one player are updated, and the players take turn.  

Finally, as described in \cite{tammelin2015solving}, the update of the average strategy starts only after a fixed number of iterations denoted as $d$. 
The update is then weighted by the current iteration, formally for each $T > d$

{\small
\begin{equation}
\bar{b}_i^T(I, a) = \frac{2\sum_{t = d}^Tt \cdot\pi_i^{b_i^t}(I)\cdot b_i^t(I, a)}{\left((T - d)^2 + T - d\right)\sum_{t = d}^T \pi_i^{b_i^t}(I)}, \quad\forall I \in \mathcal{I}_i, \forall a \in \mathcal{A}(I).\label{eq:cfr+stratupdate}
\end{equation}
}

Intuitively, the average strategy update puts a higher weight on later strategies, as they are expected to perform better.

The bound on average external regret $\frac{R^T_i}{T}$ from eq. \eqref{eq:bound2} also holds when using regret matching$^+$. Hence the average strategy profile computed by CFR+ converges to the Nash equilibrium of the solved two-player zero-sum EFG~\cite{tammelin2015solving}. Additionally, the empirical convergence of CFR+ is significantly better compared to CFR~\cite{tammelin2015solving}.
\section{Algorithms for Constructing and Solving Imperfect Recall Abstractions}

In this section, we present the main algorithmic results of this paper.
We first discuss the initial imperfect recall abstraction of a given two-player zero-sum EFG $G$ which forms a starting point of the algorithms. Note that in this section we focus on the scenario where no initial abstraction is given, and the algorithms need to build the initial coarse imperfect recall abstraction automatically. We follow with the description of the two algorithms which iteratively solve and refine this abstraction until they reach the desired approximation of the Nash equilibrium of the original unabstracted game $G$. In Section \ref{sec:FPIRA} we present the FP based approach denoted as Fictitious Play for Imperfect Recall Abstractions (FPIRA). In Section~\ref{sec:IRCFR+} we show the approach using a modification of CFR+ to iteratively solve and refine this abstraction. We denote the algorithm CFR+ for Imperfect Recall Abstractions (\IRCFR). We provide proofs of convergence of both algorithms to the NE of the original unabstracted game and discussion of memory requirements and runtime performance of both algorithms.

\subsection{Abstraction}
As discussed before, the algorithms presented in this section can start from an arbitrary initial imperfect recall abstraction. In Section \ref{sec:init_abstr}, we demonstrate how to create a coarse imperfect recall abstraction which serves as a starting point of the algorithms if no initial abstraction is given.

We refer to the initial abstraction of the solved two-player zero-sum EFG $G$ as $G^1 = ({\cal{N}}, {\cal H}, {\cal Z}, u, {\cal I}^1, \mathcal{A}^1)$. In every iteration $t$, the algorithms operate with possibly more refined abstraction with respect to $G^1$, denoted as $G^t =  ({\cal{N}}, {\cal H}, {\cal Z}, u, {\cal I}^t, \mathcal{A}^t)$.

\subsubsection{Initial Abstraction}
\label{sec:init_abstr}
Given a game $G$, the initial abstraction $G^1$ is created in the following way:
Each $I \in \mathcal{I}_i^1$ is formed as the largest set of information sets $\mathcal{I}_I$ of $G$, so that $\forall I', I'' \in  \mathcal{I}_I\ |seq_i(I')| = |seq_i(I'')| \wedge |\mathcal{A}(I')| = |\mathcal{A}(I'')|$. Furthermore, $\bigcup_{I \in \mathcal{I}^1} \mathcal{I}_I = \mathcal{I}$.  
 Informally, for each $i \in \mathcal{N}$, the algorithm groups together information sets of $i$ with the same length of the sequence of $i$ leading there and with the same number of actions available.
Additionally, when creating some abstracted information set $I$ by grouping all information sets in $\mathcal{I}_I$, we need to specify the mapping $\Xi^{-1}_1(a)$ for all $a \in \mathcal{A}^1$. When creating the initial abstraction, the algorithm uses the order of actions given by the domain description to create $\mathcal{A}^1$ (i.e., the first action available in each $I' \in \mathcal{I}_I$ is mapped to the first action in $\mathcal{A}^1(I)$, etc.).

\subsection{The FPIRA Algorithm}\label{sec:FPIRA}
Let us now describe Fictitious Play for Imperfect Recall Abstractions (\textsc{FPIRA}). We first give a high-level idea behind \textsc{FPIRA}. Next, we provide the pseudocode with the description of all steps and prove its convergence in two-player zero-sum EFGs.  Finally, we discuss the memory requirements and runtime of FPIRA.

Given a perfect recall game $G$, \textsc{FPIRA} creates a coarse imperfect recall abstraction $G^1$ of $G$ as described in Section \ref{sec:init_abstr}. The algorithm then follows the FP procedure. It keeps track of average strategies of both players in the information set structure of the abstraction and updates the strategies in every iteration based on the best responses to the average strategies. Note that the best responses are computed directly in $G$ (see Section \ref{sec:experiments} for empirical evidence that these best responses are small), {hence it requires computation time proportional to the full unabstracted game.}
To ensure the convergence to Nash equilibrium of $G$, \textsc{FPIRA} refines the information set structure of the abstraction in every iteration to make sure that the strategy update does not lead to more exploitable average strategies in the following iterations compared to the strategy update made directly in $G$.
\begin{algorithm}[h]
\small
\DontPrintSemicolon
\SetKwInOut{Input}{input}
\SetKwInOut{Output}{output}
\SetKwInOut{Parameter}{parameters}
\SetKwProg{Function}{function}{}{}

\SetKwFunction{BuildAbstraction}{InitAbstraction}
\SetKwFunction{ActingPlayer}{ActingPlayer}
\SetKwFunction{UpdateStrategy}{UpdateStrategy}
\SetKwFunction{ComputeDelta}{ComputeDelta}
\SetKwFunction{RefineBR}{RefineForBR}
\SetKwFunction{Refine}{Refine}
\SetKwFunction{PureStrat}{PureStrat}
\SetKwFunction{BR}{BR}
\Input{$G$, $\epsilon$}
\Output{$\bar{b}_i^T$, $\bar{b}_{-i}^T$, $G^T$}
\BlankLine
$G^1 \gets$ \BuildAbstraction{$G$} \label{alg:irfp:abstr_init}\;
$\bar{b}^0_1 \gets $ \PureStrat{$G^1$}, $\bar{b}^0_2 \gets$ \PureStrat{$G^1$}\label{alg:irfp:strat_init}\;
$t \gets 1$\;
\While{$u_1($\BR{$G, \bar{b}^{t-1}_2$}$, \bar{b}^{t-1}_2) - u_1(\bar{b}^{t-1}_1,$ \BR{$G, \bar{b}^{t-1}_1$}$)  > \varepsilon$}{\label{alg:irfp:loop}
  $i \gets \ActingPlayer{t}$\;
  $b_i^t \gets$ \BR{$G, \bar{b}^{t-1}_{-i}$}\label{alg:irfp:br}\; 
  $G^t \gets $ \RefineBR{$G^t$, $b_i^t$, $\bar{b}_i^t$}\label{alg:irfp:br_refine}\;
  $\hat{b}^t_i \gets $ \UpdateStrategy{$G^{t}, \bar{b}^{t-1}_i, b_i^t$}\label{alg:irfp:hat_b1}\;
  $\widetilde{b}^t_i \gets $ \UpdateStrategy{$G, \bar{b}^{t-1}_i, b_i^t$}\label{alg:irfp:hat_b2}\;
  \If{\ComputeDelta{$G$, $\hat{b}^t_i$,  $\widetilde{b}^t_i$} $ > 0$\label{alg:irfp:delta_test}}{
       $G^{t+1} \gets $ \Refine{$G^{t}$}, $\bar{b}^t_i \gets \widetilde{b}^t_i$\label{alg:irfp:refine}\;
  }\Else{
      $G^{t+1} \gets G^{t}$, $\bar{b}^t_i \gets \hat{b}^t_i$\label{alg:irfp:update_only}\;
  }
  $t \gets t + 1$\;
}
\caption{\textsc{FPIRA} algorithm}
\label{alg:irfp}
\end{algorithm}

In Algorithm \ref{alg:irfp} we present the pseudocode of \textsc{FPIRA}. \textsc{FPIRA} is given the original perfect recall game $G = ({\cal{N}}, {\cal H}, {\cal Z}, u, {\cal I},  \mathcal{A})$ and a desired precision of NE approximation $\epsilon$. \textsc{FPIRA} first creates the coarse imperfect recall abstraction $G^1$  of $G$ (line \ref{alg:irfp:abstr_init}) as described in Section \ref{sec:init_abstr}. Next, it initializes the strategies of both players to an arbitrary pure strategy in $G^1$ (line \ref{alg:irfp:strat_init}). \textsc{FPIRA} then iterativelly solves and updates $G^1$ until convergence to $\epsilon-$Nash equilibrium of $G$. In every iteration it updates the average strategy of one of the players and if needed the information set structure of the abstraction (the game used in iteration $t$ is denoted as $G^t$). In every iteration player $i$ first computes the best response $b^t_i$ to $\bar{b}^{t-1}_{-i}$ in $G$ (line \ref{alg:irfp:br}, Section \ref{sec:br}).  Since $b^t_i$ is computed in $G$, \textsc{FPIRA} first needs to make sure that the structure of information sets in $G^t$ allows $b^t_i$ to be played. If not, $G^t$ is updated as described in Section \ref{subsec:G_update}, Case 1 (line \ref{alg:irfp:br_refine}). Next, \textsc{FPIRA} computes $\hat{b}^t_i$ as the strategy resulting from the update in $G^t$ (line \ref{alg:irfp:hat_b1}) and $\widetilde{b}^t_i$ as the strategy resulting from the update in original game $G$ (line \ref{alg:irfp:hat_b2}). \textsc{FPIRA} then checks whether the update in $G^t$ changes the expected values of the pure strategies of $-i$ compared to the update in $G$ using $\hat{b}^t_i$ and $\widetilde{b}^t_i$ (line \ref{alg:irfp:delta_test}, Section \ref{subsec:G_update} Case 2). If yes, \textsc{FPIRA} refines the information set structure of $G^t$, creating $G^{t+1}$ such that when updating the average strategies in $G^{t+1}$ no error in expected values of pure strategies of $-i$ is created (Section \ref{subsec:G_update}, Case 2). FPIRA then sets $\bar{b}^t_i= \widetilde{b}^t_i$  (line \ref{alg:irfp:refine}), and continues using $G^{t+1}$. If there is no need to update the structure of $G^t$, \textsc{FPIRA} sets $G^{t+1} = G^t, \bar{b}^t_i = \hat{b}^t_i$ and continues with the next iteration.

\subsubsection{Best Response Computation}\label{sec:br}
\begin{algorithm}[h!]
\small
\DontPrintSemicolon
\SetKwProg{Function}{function}{}{}
\SetKwFunction{IsGameEnd}{IsGameEnd}
\SetKwFunction{GetPlayerToMove}{GetPlayerToMove}
\SetKwFunction{GetInformationSetFor}{GetInformationSetFor}
\SetKwFunction{ComputePartialDelta}{ComputePartialDelta}
\SetKwFunction{GetValueFromCache}{GetValueFromCache}
\SetKwFunction{StoreToCache}{SaveToCache}
\SetKwFunction{BR}{BR}
\SetKwFunction{CleanUnreachable}{CleanUnreachable}

\Function{\BR{$G$, $\bar{b}_{-i}$}}{
 $b^{BR}_i \gets \emptyset$\;
 \BR{$G.root$, $G$,  $\bar{b}_{-i}$, $b^{BR}_i$}\;
 \Return{$b^{BR}_i$}\;
}
\BlankLine
\Function{\BR{$h$, $G$, $\bar{b}_{-i}$, $b^{BR}_i$}}{
\If{$\pi_i^{\bar{b}_{-i}}(h) = 0$}{\label{alg:br:opp_prunning_start}
\Return{$0$}\label{alg:br:opp_prunning_end}\;
}
\If{\IsGameEnd{$h$}}{
\Return{$\mathcal{C}(h)\cdot\pi_i^{\bar{b}_{-i}}(h)\cdot u_{i}(h)$}\;
}
$v \gets $ \GetValueFromCache{$h$}\label{alg:br:cache_return_start}\;
\If{$v \neq null$}{
\Return{$v$}\label{alg:br:cache_return_end}\;
}
\If{\GetPlayerToMove{$h$} $ = i$}{
$I \gets $ \GetInformationSetFor{$h, \mathcal{I}$}\label{alg:br:findIS}\;
$a_{max} \gets \argmax_{a \in \mathcal{A}(I)}\sum_{h' \in I}$ \BR{$h' \cdot a$, $G$, $\bar{b}_{-i}$, $b^{BR}_i$}\label{alg:br_redundancy}\;
For each $h' \in I$ store $v_{h'} = $ \BR{$h' \cdot a_{max}$, $G$, $\bar{b}_{-i}$, $b^{BR}_i$} to cache\label{alg:br:cache}\;
$v \gets$ \BR{$h\cdot a_{max}$, $G$, $\bar{b}_{-i}$, $b^{BR}_i$}\label{alg:br:v_storage}\;
$b^{BR}_i \gets b^{BR}_i  \cup \{I \rightarrow a_{max}\}$\label{alg:br:strategy_storage}\;
\CleanUnreachable{$b^{BR}_i$}\label{alg:br:strategy_cleanup}\;
}\Else{
$v \gets \sum_{a \in \mathcal{A}(h)}$\BR{$h\cdot a$, $G$, $\bar{b}_{-i}$, $b^{BR}_i$}\label{alg:br:rest}\;
}
\Return{$v$}\;
}
\caption{Best response computation.}
\label{alg:br}
\end{algorithm}

In Algorithm \ref{alg:br} we present the pseudocode for computing the best response $b^{BR}_i$ of $i$ against $\bar{b}_{-i}$ in $G$. The algorithm recursively traverses the parts of the game tree reachable by $\bar{b}_{-i}$ and computes the best action to be played in each $I \in \mathcal{I}_i$ encountered. More formally, when the best response computation reaches state $h$, where $i$ plays, it first finds the information set $I$ such that $h \in I$ (line \ref{alg:br:findIS}). The algorithm then finds the action $a_{max}$ which maximizes the sum of expected values of $i$, when $i$ plays the best response to $\bar{b}_{-i}$, over all $h \in I$ (line \ref{alg:br_redundancy}). The expected value of playing $a_{max}$ in $h$ is prepared to be propagated up (line \ref{alg:br:v_storage}) and the prescription of playing $a_{max}$ in $I$ is stored to the best response (line \ref{alg:br:strategy_storage}).
Notice, that to eliminate revisiting already traversed parts of the game tree on lines \ref{alg:br_redundancy} and \ref{alg:br:v_storage}, we use cache. The cache stores the values computed by the $\mathsf{BR}$ for each $h \in \mathcal{H}$ visited during the computation (line \ref{alg:br:cache}) and provides this value if $h$ is revisited (lines \ref{alg:br:cache_return_start} to \ref{alg:br:cache_return_end}). Finally, since we are searching for a pure best response $b^{BR}_i$, there is no need to store the behavior in the parts of the tree unreachable when playing $b^{BR}_i$. Hence to reduce the memory needed to store $b^{BR}_i$, we delete the prescription in all $I \in \mathcal{I}_i$ which cannot be reached due to player $i$ playing $a_{max}$ (line \ref{alg:br:strategy_cleanup}).

In case of nodes of player $-i$ and chance, the algorithm simply propagates up the values of successors (line \ref{alg:br:rest}).

\textbf{Pruning.}
The implementation of the best response used in FPIRA incorporates pruning based on the lower bound and upper bound in each node $h$ of the game tree. The lower bound represents the lowest value that needs to be achieved in $h$ to make sure that there is a chance that searching the subtree of $h$ influences the resulting best response and the upper bound represents the estimate of how much can the best responding player gain by visiting $h$. For a more detailed discussion of the pruning see \cite{bosansky2014}, Section 4.2. 

\subsubsection{Updating $G^t$}
\label{subsec:G_update}
There are two reasons for splitting some $I \in \widetilde{\mathcal{I}}^t_i$ in iteration $t$ where player $i$ computes the best response. First, the best response computed in $G$ prescribes more than one action in $I$. Second, $I$ causes the expected value of some pure strategy of $-i$ to be different against the average strategy of $i$ computed in $G^t$ compared to the expected value against the average strategy computed in $G$. This can happen since $I$ is an abstracted information set, and hence updating the average strategy in $I$ in $G^t$ changes the behavior in multiple information sets in $G$ (see Example \ref{ex:fpira} for more details). 
 
 \begin{algorithm}[h!]
\small
\DontPrintSemicolon
\SetKwProg{Function}{function}{}{}
\SetKwFunction{IsGameEnd}{IsGameEnd}
\SetKwFunction{GetPlayerToMove}{GetPlayerToMove}
\SetKwFunction{GetInformationSetFor}{GetInformationSetFor}
\SetKwFunction{ComputePartialDelta}{ComputePartialDelta}
\SetKwFunction{GetValueFromCache}{GetValueFromCache}
\SetKwFunction{StoreToCache}{StoreToCache}
\SetKwFunction{RefineForBR}{RefineForBR}
\SetKwFunction{CreateNewIS}{CreateNewIS}

\Function{\RefineForBR{$G^t$, $b^t_{i}$, $\bar{b}^t_{i}$}}{
 \For{$I \in \widetilde{\mathcal{I}}_i^t$}{\label{alg:refineBR:iter1}
     \If{$\pi_i^{b^t_{i}}(I) > 0$}{\label{alg:refineBR:iter2}
         $\mathcal{I}'' \gets \emptyset$\;
         \For{$a \in \mathcal{A}^t(I)$}{\label{alg:refineBR:starthatIa}
             $\hat{I}_a \gets \emptyset$\;
         }
         \For{$I' \in \phi_t^{-1}(I)$}{
             \If{$\exists a \in \mathcal{A}(I')$ such that $b^t_{i}(I, a) = 1$}{    
                 $a' \gets a \in \mathcal{A}(I')$ such that $b^t_{i}(I, a) = 1$\;
                $a \gets \Xi_t(a')$\;                 
                 $\hat{I}_a \gets \hat{I}_a \cup I'$\label{alg:refineBR:endhatIa}\;
             }\Else{
                 $\mathcal{I}'' \gets \mathcal{I}'' \cup I'$\label{alg:refineBR:I''}\;
             }
         }
         $\hat{\mathcal{I}} \gets \emptyset$\label{alg:refineBR:starthatI}\;
         \For{$a \in \mathcal{A}^t(I)$}{
             \If{$\hat{I}_a \neq \emptyset$}{
                 $\hat{\mathcal{I}} \gets \hat{\mathcal{I}} \cup \{\hat{I}_a\}$\label{alg:refineBR:endhatI}\;
             }
         }
         \If{$|\hat{\mathcal{I}}| > 1$}{\label{alg:refineBR:split_check}
      \For{$I' \in \hat{\mathcal{I}} \cup \mathcal{I}''$}{\label{alg:refineBR:strat_update_start}
        $\bar{b}^t_{i}(I') \gets \bar{b}^t_{i}(I)$\;
      }     
      $\bar{b}^t_{i} \gets \bar{b}^t_{i} \setminus  \bar{b}^t_{i}(I)$\label{alg:refineBR:strat_update_end}\;
             $\mathcal{I}^t \gets \mathcal{I}^t \setminus I$\label{alg:refineBR:deleteI}\;
             \For{$\hat{I}_a \in \hat{\mathcal{I}}$}{\label{alg:refineBR:addhatIstart}
        $\mathcal{I}^t \gets \mathcal{I}^t\ \cup $ \CreateNewIS{$\hat{I}_a$}\label{alg:refineBR:addhatIend}\;
      }
      \If{$\mathcal{I}'' \neq \emptyset$}{\label{alg:refineBR:addI''start}
                  $\mathcal{I}^t \gets \mathcal{I}^t\ \cup $ \CreateNewIS{$\mathcal{I}''$}\label{alg:refineBR:addI''end}\;
             }
         }        
     }
 }
 \Return{$G^t$}
}
\caption{Abstraction update to accommodate best response.}
\label{alg:refineBR}
\end{algorithm}
 
\vspace{0.3em}
\noindent\textbf{Case 1:}
Here we discuss the abstraction update which guarantees that the best response $b^t_i$ computed in $G$ is applicable in the resulting abstraction. 
This abstraction update first detects every $I \in \widetilde{\mathcal{I}}^t_i$ where the $b^t_i$ prescribes more than one action. It then splits each such $I$ by grouping the information sets in $\phi_t^{-1}(I)$ to the largest possible subsets where $b^t_i$ prescribes the same action. One additional information set is created containing all $I' \in \phi_t^{-1}(I)$ which are not reachable when playing $b^t_i$.

In Algorithm \ref{alg:refineBR} we show the pseudocode for this abstraction update. The algorithm iterates over all abstracted information sets $I \in \widetilde{\mathcal{I}}_i^t$ that can be visited when playing $b_i^t$ (lines \ref{alg:refineBR:iter1} and  \ref{alg:refineBR:iter2}). Each such $I$ is first divided to a sets of information sets $\hat{\mathcal{I}}$ and $\mathcal{I}''$. Let $\hat{I}_a \subseteq \Phi_t^{-1}(I)$ be a union of all $I' \in \Phi_t^{-1}(I)$  where $b^t_i(I', a') = 1$ for $\Xi_t(a') = a$ (lines \ref{alg:refineBR:starthatIa} to \ref{alg:refineBR:endhatIa}). $\hat{\mathcal{I}}$ is a union of $\hat{I}_a$ for all $a \in \mathcal{A}^t(I)$ (lines \ref{alg:refineBR:starthatI} to \ref{alg:refineBR:endhatI}). $\mathcal{I}''$ contains all $I' \in  \Phi_t^{-1}(I)$ for which $b_i^t$ does not prescribe any action (line \ref{alg:refineBR:I''}). If there is more than one element in $\hat{\mathcal{I}}$ (line \ref{alg:refineBR:split_check}), the average strategy before the strategy update in all information sets which are about to be created is set to the strategy previously played in $I$ (lines \ref{alg:refineBR:strat_update_start} to \ref{alg:refineBR:strat_update_end}). Next, the algorithm removes $I$ from $G^t$ (line \ref{alg:refineBR:deleteI}). Finally, it creates new informations set for each $\hat{I}_a \in \hat{\mathcal{I}}$, that contains all $I' \in \hat{I}_a$ and an information set containing all $I' \in \mathcal{I}''$ (lines \ref{alg:refineBR:addhatIstart} to \ref{alg:refineBR:addI''end}).

 \begin{algorithm}[h!]
\small
\DontPrintSemicolon
\SetKwProg{Function}{function}{}{}
\SetKwFunction{IsGameEnd}{IsGameEnd}
\SetKwFunction{GetPlayerToMove}{GetPlayerToMove}
\SetKwFunction{GetInformationSetFor}{GetInformationSetFor}
\SetKwFunction{ComputePartialDelta}{ComputePartialDelta}
\SetKwFunction{GetValueFromCache}{GetValueFromCache}
\SetKwFunction{StoreToCache}{StoreToCache}
\SetKwFunction{RefineForBR}{RefineForBR}
\SetKwFunction{CreateNewIS}{CreateNewIS}

\Function{\Refine{$G^t$, $b^t_{i}$, $\bar{b}^t_{i}$}}{
 \For{$I \in \widetilde{\mathcal{I}}_i^t$}{\label{alg:refine:iter1}
     \If{$\pi_i^{b^t_{i}}(I) > 0$}{\label{alg:refine:iter2}
         $\hat{\mathcal{I}} \gets \{I' \in \phi^{-1}_t(I) | \pi_i^{b^t_{i}}(I') > 0\}$\label{alg:refine:hatI}\;
         $\mathcal{I}'' \gets  \phi^{-1}_t(I) \setminus \hat{\mathcal{I}}$\label{alg:refine:I''}\;
         $\mathcal{I}^t \gets \mathcal{I}^t \setminus I$\;
    \For{$I' \in \hat{\mathcal{I}} \cup \mathcal{I}''$}{\label{alg:refine:strat_update_start}
      $\bar{b}^t_{i}(I') \gets \bar{b}^t_{i}(I)$\;
    }     
    $\bar{b}^t_{i} \gets \bar{b}^t_{i} \setminus  \bar{b}^t_{i}(I)$\label{alg:refine:strat_update_end}\;
         \If{$\mathcal{I}'' \neq \emptyset$}{\label{alg:refine:addI''start}
             $\mathcal{I}^t \gets \mathcal{I}^t\ \cup $ \CreateNewIS{$\mathcal{I}''$}\;
         }
         \For{$I' \in \hat{\mathcal{I}}$}{
             $\mathcal{I}^t \gets \mathcal{I}^t\ \cup $ \CreateNewIS{$I'$}\label{alg:refine:addhatIend}\;
         }
     }
 }
 \Return{$G^t$}\;
}
\caption{Abstraction update guaranteeing that the expected values of all pure strategies of $-i$ are the same against the strategy resulting from update in $G$ and $G^t$.}
\label{alg:refine}
\end{algorithm}

 \vspace{0.3em}
 
\noindent\textbf{Case 2:} 
Now we turn to the abstraction update guaranteeing that the expected values of all pure strategies of $-i$ against the average strategy of $i$ computed in $G^t$ are equal to the expected values against the average strategy computed in $G$. As a part of this abstraction update, FPIRA constructs the average strategy $\hat{b}^t_i$ resulting from the update in $G^t$ and $\widetilde{b}^t_i$ resulting from the update in $G$. It then checks whether there exists a pure strategy of player $-i$ which has different expected value against $\hat{b}^t_i$ and $\widetilde{b}^t_i$. If there exists such pure strategy, the abstraction is updated to guarantee that the average strategy update in the abstraction results in $\widetilde{b}^t_i$, and hence that there is no difference in the expected values of pure strategies of $-i$.

More formally, the algorithm first constructs the average behavioral strategy $\hat{b}^t_i$ in $G^t$ (line \ref{alg:irfp:hat_b1} in Algorithm \ref{alg:irfp}). This is done according to Lemma \ref{lemma:strat_update} from $\bar{b}^{t-1}_i$ with weight $\frac{t_i -1}{t_i}$ and  $b^t_i$ with weight $\frac{1}{t_i}$, where $t_i$ is the number of updates performed by $i$ so far, plus 1 for the initial strategy ($b^t_i$ is used with mappings $\Phi_t$ and $\Xi_t$). Next, \textsc{FPIRA} constructs $\widetilde{b}^t_i$ (line \ref{alg:irfp:hat_b2}) in the same way in the information set structure of $G$ ($\bar{b}^{t-1}_i$ is used with mappings $\Phi^{-1}_t$ and $\Xi^{-1}_t$).  \textsc{FPIRA} then computes 

{\small
$$
\Delta_i^t = \!\!\max_{b_{-i} \in \mathcal{B}^P_{-i}}\!\!| u_{-i}(\widetilde{b}^t_i, b_{-i}) - u_{-i}(\hat{b}^t_i, b_{-i})| ,
$$}

as described below (line \ref{alg:irfp:delta_test} in Algorithm \ref{alg:irfp}).
If $\Delta_i^t = 0$, all the pure strategies of $-i$ have the same expected value against both $\hat{b}^t_i$ and $\widetilde{b}^t_i$. In this case, \textsc{FPIRA} sets $G^{t+1} = G^{t}$, $\bar{b}_i^t = \hat{b}_i^t$ (line \ref{alg:irfp:update_only}). If $\Delta_i^t > 0$, the expected value of some pure strategy of $-i$ changed when updating the strategy in $G^t$, compared to the expected value it would get against the strategy updated in $G$.  \textsc{FPIRA} then creates $G^{t+1}$ according to Algorithm \ref{alg:refine} in the following way. 
Every abstracted information set $I \in \widetilde{\mathcal{I}}_i^t$ which is visited when playing $b_i^{t}$ (lines \ref{alg:refine:iter1} and \ref{alg:refine:iter2}) is first divided to sets of information sets $\hat{\mathcal{I}} \subseteq \Phi^{-1}_t(I)$ and $\mathcal{I}''$. $\hat{\mathcal{I}}$ contains all the $I' \in \Phi^{-1}_t(I)$  for which $\pi_i^{b_i^{t}}(I') > 0$ (line \ref{alg:refine:hatI}), $\mathcal{I}''$ contains the rest of $I' \in \Phi^{-1}_t(I)$ (line \ref{alg:refine:I''}). 
The average strategy before the strategy update in all information sets which are about to be created is set to the strategy previously played in $I$ (lines \ref{alg:refine:strat_update_start} to \ref{alg:refine:strat_update_end}).
The algorithm then creates new information set for each $\hat{I} \in \hat{\mathcal{I}}$, and an information set containing all $I' \in \mathcal{I}''$ (lines \ref{alg:refine:addI''start} to \ref{alg:refine:addhatIend}). 
The strategy resulting from update in $G$ is a valid strategy in $G^{t+1}$ after such update, hence $\bar{b}_i^t = \widetilde{b}_i^t$. 
Notice that by setting $\bar{b}_i^t = \widetilde{b}_i^t$, we made sure that $\Delta_i^t = 0$ since the update is now equal to the update that would occur in $G$. This, as we will show in Section \ref{subsec:proof}, is sufficient to guarantee the convergence of $\bar{b}^t_i, \bar{b}^t_{-i}$ to Nash equilibrium of~$G$.

\vspace{0.3em}

\noindent\textbf{Computing $\Delta^t_i$.}
Given $\hat{b}^t_i$ and $\widetilde{b}^t_i$, $\Delta^t_i$ can be computed as
{\small
$$
\Delta^t_i = \max_{b_{-i} \in \mathcal{B}^p_{-i}}\left|\sum_{z \in \mathcal{Z}}\mathcal{C}(z)\pi_{-i}^{b_{-i}}(z)\left[ \pi_i^{\widetilde{b}^t_i}(z) - \pi_i^{\hat{b}^t_i}(z)\right] u_{-i}(z)\right|.
$$}

Let

{\small
\begin{align} 
u'_{-i}(z) &=\mathcal{C}(z)\left[ \pi_i^{\widetilde{b}^t_i}(z) - \pi_i^{\hat{b}^t_i}(z)\right] u_{-i}(z), \quad\forall z \in \mathcal{Z}\label{eq:u'}\\
u''_{-i}(z) &= -\mathcal{C}(z)\left[ \pi_i^{\widetilde{b}^t_i}(z) - \pi_i^{\hat{b}^t_i}(z)\right] u_{-i}(z), \quad\forall z \in \mathcal{Z}.\label{eq:u''}
\end{align} }

\begin{algorithm}[h!]
\small
\DontPrintSemicolon
\SetKwProg{Function}{function}{}{}
\SetKwFunction{IsGameEnd}{IsGameEnd}
\SetKwFunction{GetPlayerToMove}{GetPlayerToMove}
\SetKwFunction{GetInformationSetFor}{GetInformationSetFor}
\SetKwFunction{ComputePartialDelta}{ComputePartialDelta}
\SetKwFunction{GetValueFromCache}{GetValueFromCache}
\SetKwFunction{StoreToCache}{SaveToCache}

\Function{\ComputeDelta{$G$, $\hat{b}^t_i$,  $\widetilde{b}^t_i$}}{
Let $u'_{-i}$ be the function from eq. \eqref{eq:u'}\;
Let $u''_{-i}$ be the function from eq. \eqref{eq:u''}\;
\Return{$\max ($\ComputePartialDelta{$G.root$, $u'_{-i}$, $\hat{b}^t_i$,  $\widetilde{b}^t_i$}, \ComputePartialDelta{$G.root$, $u''_{-i}$, $\hat{b}^t_i$,  $\widetilde{b}^t_i$}$)$}\;
}
\BlankLine
\Function{\ComputePartialDelta{$h$, $u'_{-i}$, $\hat{b}^t_i$,  $\widetilde{b}^t_i$}}{
\If{$\pi_i^{\hat{b}^t_i}(h) = 0 \wedge \pi_i^{\widetilde{b}^t_i}(h) = 0$}{
\Return{$0$}\;
}
\If{\IsGameEnd{$h$}}{
\Return{$u'_{-i}(h)$}\label{alg:delta_comp:leaves}\;
}
$v \gets $ \GetValueFromCache{$h$}\;
\If{$v \neq null$}{
\Return{$v$}\;
}
\If{\GetPlayerToMove{$h$} $ = -i$}{
$I \gets $ \GetInformationSetFor{$h, \mathcal{I}$}\;
$a_{max} \gets \argmax_{a \in \mathcal{A}(I)}\sum_{h' \in I}$ \ComputePartialDelta{$h' \cdot a$, $u'_{-i}$, $\hat{b}^t_i$,  $\widetilde{b}^t_i$}\label{alg:delta_comp_redundancy}\;
For each $h' \in I$ store $v_{h'} = $ \BR{$h' \cdot a_{max}$, $G$, $\bar{b}_{-i}$, $b^{BR}_i$} to cache\label{alg:delta_comp:cache}\;
$v \gets$ \ComputePartialDelta{$h\cdot a_{max}$, $u'_{-i}$, $\hat{b}^t_i$,  $\widetilde{b}^t_i$}\;
}\Else{
$v \gets \sum_{a \in \mathcal{A}(h)}$\ComputePartialDelta{$h\cdot a$, $u'_{-i}$, $\hat{b}^t_i$,  $\widetilde{b}^t_i$}\;
}
\Return{$v$}\;
}
\caption{$\Delta^t_i$  computation.}
\label{alg:delta_comp}
\end{algorithm}

$\Delta^t_i$ can be computed in $\mathcal{O}(|\mathcal{Z}|)$ by $\mathsf{ComputeDelta}$ depicted in Algorithm \ref{alg:delta_comp}. The computation consists of two calls of function $\mathsf{ComputePartialDelta}$ similar to the computation of the best response described in Section \ref{sec:br}. The only difference is the use of functions shown in eqs. \eqref{eq:u'} and \eqref{eq:u''} when evaluating the leaves (line \ref{alg:delta_comp:leaves}). $\mathsf{ComputePartialDelta}$ using $u'_{-i}$ searches for the largest positive difference in the utility of pure strategies of $-i$, while $\mathsf{ComputePartialDelta}$ using $u''_{-i}$ searches for the highest negative difference. The two calls are necessary, since $\mathsf{ComputePartialDelta}$ using $u'_{-i}$ cannot reliably detect negative differences in the utility, since it will always prefer choosing pure strategy with no difference in the utility over the pure strategy with the negative difference. 
Notice, that similarly to the best response computation, we use cache to eliminate redundant tree traversals caused by line \ref{alg:delta_comp_redundancy}. 

\begin{example}\label{ex:fpira}
\begin{figure}
\begin{subfigure}{0.22\textwidth}
\centering
\includegraphics[width=\textwidth]{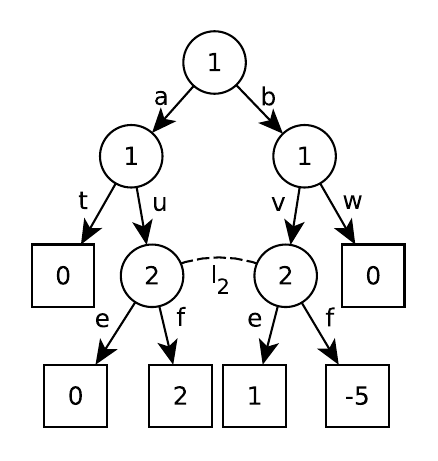}
~\\\vspace{-3mm}
$G$
\end{subfigure}
\begin{subfigure}{0.22\textwidth}
\centering
\includegraphics[width=\textwidth]{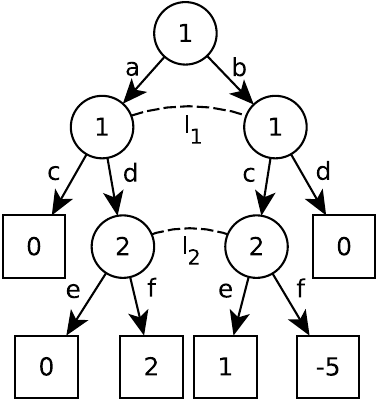}
~\\\vspace{-3mm}
$G^1$
\end{subfigure}
\begin{minipage}{0.55\textwidth}
\begin{subfigure}{0.32\textwidth}
\centering
\includegraphics[width=0.8\textwidth]{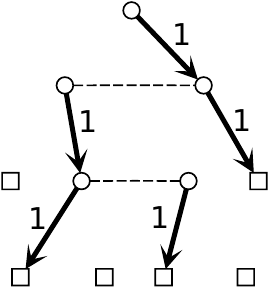}
\footnotesize{Initial $\bar{b}^0$}\\
\end{subfigure}
\begin{subfigure}{0.32\textwidth}
\centering
\includegraphics[width=0.8\textwidth]{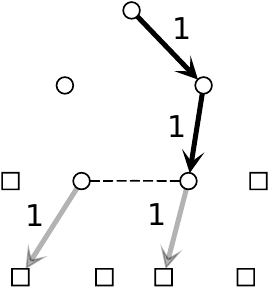}
\footnotesize{Iteration $1$}
\end{subfigure}
\begin{subfigure}{0.32\textwidth}
\centering
\includegraphics[width=0.8\textwidth]{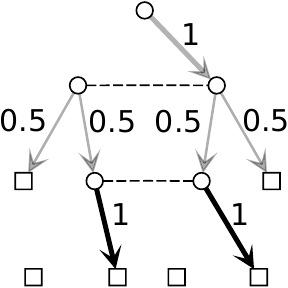}
\footnotesize{Iteration $2$}
\end{subfigure}
\begin{subfigure}{0.32\textwidth}
\centering
~\\\vspace{1mm}
\includegraphics[width=0.8\textwidth]{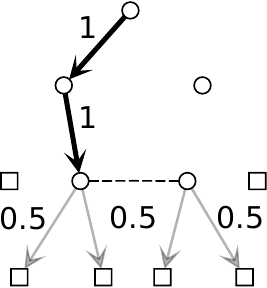}
\footnotesize{Iteration $3$}
\end{subfigure}
\begin{subfigure}{0.32\textwidth}
\centering
\includegraphics[width=0.8\textwidth]{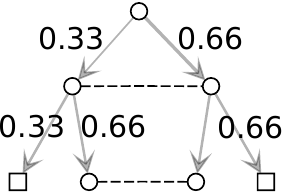}
\footnotesize{Abstract ($\hat{b}^3_1$)}
\end{subfigure}
\begin{subfigure}{0.32\textwidth}
\centering
\includegraphics[width=0.8\textwidth]{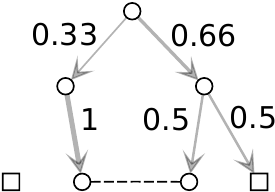}
\footnotesize{Original ($\tilde{b}^3_1$)}
\end{subfigure}
\end{minipage}
\caption{{Game $G$ for demonstration of \textsc{FPIRA} with initial abstraction $G^1$. Using the initial strategies $\bar{b}^0$, we show the average strategy for one player (grey) and the best response for the other (black) in individual iterations. In the third iteration, the average strategy in the abstract and the original game are strategically different.}}
\label{fig:example}
\end{figure}
Let us demonstrate several iterations of \textsc{FPIRA} algorithm.
Consider game $G$ and its imperfect recall abstraction $G^1$ from Figure \ref{fig:example}. 
The function $\Xi_1$  is $\Xi_1(t) = \Xi_1(v) = c,  \Xi_1(u) = \Xi_1(w) = d$, identity otherwise. Note that when we apply strategies from $G$ to $G^t$ and vice versa in iteration $t$, we assume that it is done with respect to $\Xi_t$ and $\Xi_t^{-1}$. Let us assume that  \textsc{FPIRA} first initializes the strategies to $\bar{b}_1^0(b) = \bar{b}_1^0(d) = 1, \bar{b}_2^0(e) = 1$, as shown in Figure~\ref{fig:example}. 

\vspace{0.3em}
 
\noindent\textbf{Iteration 1:} Player 1 starts in iteration 1. \textsc{FPIRA} computes $b^1_1  \in BR(\bar{b}_2^0)$ in $G$, resulting in $b^1_1(b)  = b^1_1(v) = 1$. Next, \textsc{FPIRA} checks whether $b^1_1$ is playable in $G^1$. Since there is no information set in $G^1$ for which $b^1_1$ assigns more than one action, we do not need to update $G^1$ in any way. We follow by computing $\hat{b}^1_1$ and $\widetilde{b}^1_1$ according to Lemma \ref{lemma:strat_update} with $\lambda_1 = \lambda_2 = 0.5$. In this case $\hat{b}^1_1(b) = \widetilde{b}^1_1(b) = 1, \hat{b}^1_1(c) = \widetilde{b}^1_1(v) = 0.5$. Since $\hat{b}^1_1$ and $\widetilde{b}^1_1$ are equal, w.r.t. $\Xi_1$, we know that $\Delta_i = 0$. Hence we let $G^2 = G^1$, $\bar{b}^1_1 = \hat{b}^1_1$ and $\Xi_2 =\Xi_1$. 

\vspace{0.3em}
 
\noindent\textbf{Iteration 2:} Player 2, whose information sets were not abstracted, continues in iteration 2. \textsc{FPIRA} computes the best response to $\bar{b}^1_1$, resulting in $b^2_2(f) = 1$. The algorithm then computes $\hat{b}^2_2$ and $\widetilde{b}^2_2$, resulting in $\hat{b}^2_2(e) = \widetilde{b}^2_2(e) = 0.5$. Hence, we let $G^3 = G^2$, $\bar{b}^2_2 = \hat{b}^2_2$ and $\Xi_3 =\Xi_2$.

\vspace{0.3em}
 
\noindent\textbf{Iteration 3:}
The best response in this iteration is $b^3_1(a) = b^3_1(u) = 1$, which is again playable in $G^3$, hence we do not need to update $G^3$ at this point. \textsc{FPIRA} computes $\hat{b}^3_1$ resulting in $\hat{b}^3_1(a) = \frac{1}{3}, \hat{b}^3_1(d) = \frac{2}{3}$, $\widetilde{b}^3_1$ is, on the other hand, $\widetilde{b}^3_1(a) = \frac{1}{3}$, $\widetilde{b}^3_1(u) = 1, \widetilde{b}^3_1(w) = 0.5$ (both according to Lemma \ref{lemma:strat_update} with $\lambda_1 = \frac{2}{3}, \lambda_2 = \frac{1}{3}$). In this case, $\Delta^3_1 = \frac{1}{3}$ since by playing $f$ player 2 gets $-\frac{2}{3}$ $\left(=\frac{1}{3}\cdot\frac{2}{3}\cdot 2+\frac{2}{3}\cdot\frac{1}{3}\cdot(-5)\right)$ against $\hat{b}^3_1$ compared to $-1$ $\left(=\frac{1}{3}\cdot 1 \cdot 2 + \frac{2}{3}\cdot\frac{1}{2}\cdot(-5)\right)$ against $\widetilde{b}^3_1$. Hence, the algorithm splits all imperfect recall information sets reachable when playing $b^3_1$, in this case $I_1$, as described in Section \ref{subsec:G_update}, Case 2, resulting in $G$. %
\end{example}
\subsubsection{Theoretical Properties}
\label{subsec:proof}
Here, we show that the convergence guarantees of FP in two-player zero-sum perfect recall game $G$ \cite{heinrich2015fictitious} directly apply to \textsc{FPIRA} solving $G$.
\begin{theorem}\label{thm:conv}
Let $G$ be a perfect recall two-player zero-sum EFG. 
Assume that initial strategies $\bar{b}^0_1, \bar{b}^0_2$ in \textsc{FPIRA} and initial strategies $\bar{b'}^0_1, \bar{b'}^0_2$ in the FP are realization equivalent, additionally assume that the same tie breaking rules are used when more than one best response is available in any iteration. 
The exploitability of $\bar{b}^t_i$ computed by \textsc{FPIRA} applied to $G$ is exactly equal to the exploitability of $\bar{b'}^t_i$, computed by FP applied to $G$ in all iterations $t$ and for all $i \in \mathcal{N}$.
\end{theorem}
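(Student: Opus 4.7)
The plan is to proceed by strong induction on the iteration index $t$, showing the stronger claim that for every $t$ and every $i \in \mathcal{N}$, the strategy $\bar{b}^t_i$ produced by \textsc{FPIRA} in $G^t$ and the strategy $\bar{b'}^t_i$ produced by FP in $G$ yield identical expected utilities against every opponent strategy in $\mathcal{B}_{-i}$ in $G$. Once this is established, applying it with $b'_{-i}$ ranging over best responses gives equality of $\min_{b'_{-i}} u_i(\bar{b}^t_i, b'_{-i})$ and $\min_{b'_{-i}} u_i(\bar{b'}^t_i, b'_{-i})$, and since the first term in the definition of exploitability does not depend on the strategies themselves, equality of exploitabilities follows.

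\paragraph{Base case and best-response coincidence}
For $t = 0$ the strategies are realization equivalent by assumption, so they induce the same distribution over terminal nodes against any opponent strategy, hence the same expected utility. For the inductive step, suppose the claim holds up to iteration $t-1$. Let $i$ be the acting player in iteration $t$. Because $\bar{b}^{t-1}_{-i}$ and $\bar{b'}^{t-1}_{-i}$ induce identical expected utilities for $i$ against every pure strategy of $i$ in $G$ (apply the hypothesis with $-i$'s opponent being $i$), the set of best responses of $i$ in $G$ against $\bar{b}^{t-1}_{-i}$ coincides with that against $\bar{b'}^{t-1}_{-i}$, and with the shared tie-breaking rule the two algorithms select the identical pure best response $b^t_i = s^t_i$ computed on line \ref{alg:irfp:br}.

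\paragraph{Inductive step via the two update branches}
FP sets $\bar{b'}^t_i$ to the mixed-strategy update $\frac{t_i-1}{t_i}\bar{m}^{t-1}_i + \frac{1}{t_i}s^t_i$, and by Lemma \ref{lemma:strat_update} the strategy $\widetilde{b}^t_i$ that \textsc{FPIRA} constructs on line \ref{alg:irfp:hat_b2} is realization equivalent to this mixed update in $G$; in particular, $\widetilde{b}^t_i$ and $\bar{b'}^t_i$ yield identical expected utilities against any strategy of $-i$. It remains to compare $\bar{b}^t_i$ to $\widetilde{b}^t_i$. If $\Delta^t_i > 0$, the algorithm refines $G^t$ to $G^{t+1}$ and explicitly sets $\bar{b}^t_i = \widetilde{b}^t_i$, so the equality is immediate. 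If $\Delta^t_i = 0$, then by construction $u_{-i}(\widetilde{b}^t_i, b_{-i}) = u_{-i}(\hat{b}^t_i, b_{-i})$ for every deterministic behavioral strategy $b_{-i} \in \mathcal{B}^p_{-i}$, and \textsc{FPIRA} sets $\bar{b}^t_i = \hat{b}^t_i$. Since $G$ has perfect recall for $-i$, every behavioral strategy of $-i$ is realization equivalent to a mixed strategy over pure strategies (equivalently deterministic behavioral strategies), and expected utility is linear in that mixture; hence equality extends from $\mathcal{B}^p_{-i}$ to all of $\mathcal{B}_{-i}$, closing the inductive step for $i$. The non-acting player's strategy is unchanged in iteration $t$, so the hypothesis for $-i$ is trivially preserved.

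\paragraph{Main obstacle}
The delicate step is the case $\Delta^t_i = 0$: one must argue that coincidence of expected utilities merely against the deterministic behavioral strategies $\mathcal{B}^p_{-i}$ used inside \textsf{ComputeDelta} is enough to conclude coincidence against all strategies in $\mathcal{B}_{-i}$, and in particular against the best response used in the exploitability definition. This is where the perfect recall of $G$ is essential: it guarantees Kuhn's equivalence between behavioral and mixed strategies for $-i$, so any candidate exploiter decomposes as a convex combination of elements of $\mathcal{B}^p_{-i}$. A secondary subtlety is that $\hat{b}^t_i$ lives in the information-set structure of $G^t$ while $\widetilde{b}^t_i$ lives in $G$; one must verify that the reach probabilities $\pi^{\hat{b}^t_i}_i(z)$ used in \textsf{ComputeDelta} are well-defined via the mapping $\Xi_t$, which is straightforward from the definition of information abstraction but needs to be stated carefully to make the equality $u_{-i}(\widetilde{b}^t_i, b_{-i}) = u_{-i}(\hat{b}^t_i, b_{-i})$ meaningful across the two games.
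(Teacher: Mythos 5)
Your proof follows essentially the same route as the paper's: induction over iterations with the hypothesis that the two averages give equal expected utilities against every opponent strategy, coincidence of the chosen best responses via the shared tie-breaking rule, Lemma~\ref{lemma:strat_update} for the averaging step, and $\Delta^t_i = 0$ (either verified or enforced by refining $G^t$) to equate FPIRA's update with the update performed in $G$. One small repair: $\widetilde{b}^t_i$ is realization equivalent to the mixture of FPIRA's own average $\bar{b}^{t-1}_i$ (mapped into $G$) with $b^t_i$, not to FP's mixed update $\frac{t_i-1}{t_i}\bar{m}^{t-1}_i + \frac{1}{t_i}s^t_i$ as you claim -- your induction hypothesis only guarantees that the two averages are utility-equivalent, not realization equivalent -- but the conclusion you actually use (equal expected utilities against every $b_{-i}$) still follows by linearity from that hypothesis, which is exactly how the paper argues.
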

\begin{proof}
The proof is done by induction. If 

{\small  
\begin{align}
\forall b_{-i} \in \mathcal{B}^p_{-i}: u_{-i}(\bar{b}^t_i,  b_{-i}) &= u_{-i}(\bar{b'}^t_i,  b_{-i}),\label{eq:ind_assumption}\\
\forall b_i \in \mathcal{B}^p_{i}: u_{i}(b_i, \bar{b}^t_{-i}) &= u_{i}(b_i, \bar{b'}^t_{-i})\label{eq:ind_assumption1},
\end{align} }

where $\mathcal{B}^p$ is the set of pure behavioral strategies in $G$, then

{\small $$b_{-i} \in \mathcal{B}^p_{-i}: u_{-i}(\bar{b}^{t+1}_i, b_{-i}) = u_{-i}(\bar{b'}^{t+1}_i, b_{-i}).$$ }

The initial step trivially holds from the assumption that initial strategies in \textsc{FPIRA} and initial strategies and in FP are realization equivalent.
Now let us show that the induction step holds. Let $b^t_{i}$ be the best response chosen in iteration $t$ in \textsc{FPIRA} and ${b'}^t_{i}$ be the best response chosen in $t$ in FP.
From \eqref{eq:ind_assumption1} and the use of the same tie breaking rule we know that $b^t_{i} = {b'}^t_{i}$.
From Lemma \ref{lemma:strat_update} we know that 

{\small 
\begin{align}
\forall b_{-i} \in \mathcal{B}^p_{-i}:\ &u_{-i}(\bar{b'}_i^{t+1}, b_{-i}) =\nonumber\\
&\frac{t_i}{t_i+1}u_{-i}(\bar{b'}^t_i, b_{-i}) + \frac{1}{t_i+1}u_{-i}({b'}^t_i, b_{-i}).\nonumber
\end{align}}

However, same holds also for $\bar{b}^{t+1}_i$ since \textsc{FPIRA} creates $G^{t+1}$ from $G^t$ so that $\Delta^t_i = 0$. Hence

{\small 
\begin{align}
\forall b_{-i} \in \mathcal{B}^p_{-i}:\ &u_{-i}(\bar{b}_i^{t+1}, b_{-i}) =\nonumber\\
&\frac{t_i}{t_i+1}u_{-i}(\bar{b}^t_i, b_{-i}) + \frac{1}{t_i+1}u_{-i}({b}^t_i, b_{-i}).\nonumber
\end{align}}

From \eqref{eq:ind_assumption} and from the equality $b^t_{i} = {b'}^t_{i}$ follows that  

{\small $$\forall b_{-i} \in \mathcal{B}^p_{-i} u_{-i}(\bar{b}_i^{t+1}, b_{-i}) = u_{-i}(\bar{b'}_i^{t+1}, b_{-i}),$$}

and therefore also 

{\small \begin{equation}
\max_{b_{-i} \in \mathcal{B}^p_{-i}} u_{-i}(\bar{b}_i^{t+1}, b_{-i}) = \max_{b_{-i} \in \mathcal{B}^p_{-i}} u_{-i}(\bar{b'}_i^{t+1}, b_{-i}).\nonumber
\end{equation}
}
\end{proof}

\subsubsection{Storing the Information Set Map}\label{sec:fpira_abstraction_storage}
In this section we discuss the memory requirements for storing the mapping of information sets $\mathcal{I}$ of $G$ to $\mathcal{I}^t$ of $G^t$ in FPIRA.

\textbf{Initial abstraction.} As described in Section \ref{sec:init_abstr}, the mapping between any $I \in \mathcal{I}_i$ and its abstracted information set in $G^1$ is perfectly defined by $|seq_i(I)|$ and $|\mathcal{A}(I)|$. Hence the mapping can always be determined for any given $I \in \mathcal{I}$ without using any additional memory.

\textbf{Case 1 Update.} When updating the abstraction resulting in $G^t$ according to Case 1 in Section \ref{subsec:G_update}, FPIRA can split $I \in \widetilde{\mathcal{I}}^{t-1}_i$ to multiple abstracted information sets. FPIRA is then forced to store the mapping for each newly created abstracted information set $I'$, for each $I'' \in \Phi^{-1}_t(I')$. This is necessary since $|seq_i(I'')|$ and $|\mathcal{A}(I'')|$ is no longer unique identifier of $I'$. In our implementation, we use a unique integer $j$ to represent the new mapping for $I'$, i.e., the algorithm stores $j$ for each $I'' \in \Phi^{-1}_t(I')$. We use $j$ corresponding to the number of newly created information sets during the run of FPIRA. Finally, let $\mathcal{I}^m_n$ be all the information sets in $\widetilde{\mathcal{I}}_i^t$ such that $\forall I \in \mathcal{I}^m_n |seq_i(I)| = m \wedge |\mathcal{A}(I)| = n$. We can always identify the mapping to one of $\mathcal{I}^m_n$ without using any memory by the sequence length and number of actions, as long as the rest of $I \in \mathcal{I}^m_n$ uses mapping with the unique integer. We keep track of $I \in \mathcal{I}^m_n$ with largest $|\Phi^{-1}_t(I)|$ in each $\mathcal{I}^m_n$ and use the sequence length and number of actions to represent the mapping for all $\Phi^{-1}_t(I')$ to minimize the memory requirements. In Section \ref{sec:experiments} we empirically demonstrate that the memory required to store the mapping is small. Notice that in all $I' \in \mathcal{I}^t \setminus \widetilde{\mathcal{I}}^t$, the mapping is defined by the domain description of $G$ and hence no memory is required.

\textbf{Case 2 Update.} When updating the abstraction resulting in $G^t$  according to Case 2 in Section \ref{subsec:G_update}, there is no additional memory required to store the updated mapping compared to the mapping used in $G^{t-1}$. This holds since this abstraction update only removes information set from $G$ from the abstracted information sets in $G^{t-1}$. And so we use the same mapping as in $G^{t-1}$ for abstracted information sets in $G^t$, and the information set structure provided by the domain description in the rest.

\subsubsection{Memory and Time Efficiency}\label{sec:fpira_memory}
\textsc{FPIRA} needs to store the average behavioral strategy for every action in every information set of the solved game, hence storing the average strategy in $G^t$ instead of $G$ results in significant memory savings directly proportional to the decrease of information set count. When the algorithm computes $\widetilde{b}_i^t$, it can temporarily refine the information set structure of $G^t$ only in the parts of the tree that can be visited when playing the pure best response $b^{t}_i$ according to $\mathcal{I}_i$ to avoid representing and storing $G$. 
Additional memory used to store the current abstraction mapping is discussed in Section \ref{sec:fpira_abstraction_storage}. %

When computing the best response (see Section \ref{sec:br}), the algorithm needs to store the best response strategy and the cache that is used to eliminate additional tree traversals.
FPIRA stores the behavior only in the parts of the game reachable due to actions of $i$ in $b^t_i$ (line \ref{alg:br:strategy_cleanup} in Algorithm \ref{alg:br}) and due to $\bar{b}^t_{-i}$ (lines \ref{alg:br:opp_prunning_start} and \ref{alg:br:opp_prunning_end} in Algorithm \ref{alg:br}). For this reason and since $i$ plays only 1 action in his information sets in $b_i^t$, there are typically large parts of the game tree where $b_i^t$ does not prescribe any behavior. %
The cache (used also in the computation of $\Delta^t_i$) stores one number for each state visited during the computation. We show the size of the cache in Section \ref{sec:experiments}. 
If necessary, the memory requirements of the cache can be reduced by limiting its size and hence balancing the memory required and the additional tree traversals performed. Additionally, efficient domain-specific implementations of best response (e.g., on poker \cite{johanson2011accelerating}) can be employed to further reduce the memory and time requirements. 

We empirically demonstrate the size of all the data structures stored during the run of FPIRA in Section \ref{sec:experiments}.

The iteration of \textsc{FPIRA} takes approximately three times the time needed to perform one iteration of FP in $G$, as it now consists of the standard best response computation in $G$, two modified best response computations to obtain $\Delta_i^t$ and two updates of average behavioral strategies (which are faster than the update in $G$ since the average strategy is smaller).

{
\paragraph{Scalability limitations}
The computation time of FPIRA is given by the time required for computing a best response. This computation is, in general, linear in the number of histories in the game. There may be domain specific implementations of best response, which use a compact representation of the game or very efficient pruning. In poker, domain-specific best response computation proposed in \cite{johanson2011accelerating} can compute the best response for Limit Texas Hold'em with approximately $10^{18}$ histories with a highly optimized computation distributed among 72 cores in a little over one day. Since FPIRA commonly needs over ten thousands of iterations to converge, it would have to be distributed among substantially more machines to solve a game of this size. If we consider the number of hidden card combinations to be a constant, even the optimized best response computation is still linear in the number of histories in the game. Therefore, on a common desktop machine with six cores, a domain-specific implementation would likely solve to reasonable precision games of $10^{13}$ histories within a week of computation.

\subsubsection{Possible Enhancements}

FPIRA is strict in splitting the information set immediately when computing the average strategy in the abstraction or enforcing the same action in the best response would cause any error. A natural extension is to allow small errors in the strategy without the splits. However, our initial experiments in this direction did not show a substantial decrease in the size of computed abstractions, and it makes the formal analysis of the algorithm substantially more difficult. It is not clear the error does not keep growing with increasing number of iterations. Therefore, we leave this research direction to future work.

}

\subsection{CFR+ for Imperfect Recall Abstractions}\label{sec:IRCFR+}
In this section, we describe CFR+ for Imperfect Recall Abstractions (\IRCFR). We first provide a high-level idea of \IRCFR, followed by detailed explanation of all its parts with pseudocodes and proof of its convergence to NE in two-player zero-sum EFGs. Finally, we discuss the memory requirements and runtime of \IRCFR.

Given a two-player zero-sum perfect recall EFG $G$, \IRCFR first creates a coarse imperfect recall abstraction of $G$ as described in Section \ref{sec:init_abstr}. The algorithm then iteratively solves the game using CFR+, {traversing the whole unabstracted game tree in each iteration}. All regrets and average strategies computed as a part of CFR+ are stored in the information set structure of the abstraction. To ensure the convergence to the Nash equilibrium of $G$, \IRCFR updates the structure of the abstraction in every iteration based on the differences between the results obtained by CFR+ in the abstraction and the expected behavior of CFR+ in $G$.

\begin{algorithm}[h!]
\small
\DontPrintSemicolon
\SetKwInOut{Input}{input}
\SetKwInOut{Output}{output}
\SetKwProg{Function}{function}{}{}

\SetKwFunction{InitAbstraction}{InitAbstraction}
\SetKwFunction{InitRegrets}{InitRegrets}
\SetKwFunction{UniformStrategy}{UniformStrategy}
\SetKwFunction{CFR}{ComputeRegrets}
\SetKwFunction{GetPlayer}{GetPlayer}
\SetKwFunction{IsTerminal}{IsTerminal}
\SetKwFunction{ChooseInformationSets}{SampleInformationSets}
\SetKwFunction{ComputePerfectRecallRegrets}{OrigRegrets}
\SetKwFunction{UpdateAbstractionHeuristic}{UpdateAbstractionForHeuristic}
\SetKwFunction{UpdateAbstractionBound}{UpdateAbstractionForBound}
\SetKwFunction{GetInformationSetFor}{GetInformationSetFor}
\SetKwFunction{GetRegretsFor}{GetRegretsFor}
\SetKwFunction{RegretMatching}{RegretMatching+}
\SetKwFunction{UpdateAverageStrategy}{UpdateAverageStrategy}
\SetKwFunction{RemoveNegativeRegrets}{RemoveNegativeRegrets}
\SetKwFunction{Zeros}{Zeros}
\Input{$G$, $k_h$, $k_b$, delay, $\varepsilon$}
\Output{$(\bar{b}_1 , \bar{b}_2)$ -- $\varepsilon$-Nash equilibrium of $G$}

\BlankLine

$G^1 \gets $ \InitAbstraction{G} \label{alg:cfr_abstr:init_abstr}\;
$r \gets$ \InitRegrets{$\mathcal{I}^1$}\;
$\bar{b}_1 \gets$ \UniformStrategy{$\mathcal{I}$}, $\bar{b}_2 \gets$ \UniformStrategy{$\mathcal{I}$}\;
$t \gets 1$, $t_{next} \gets 1$, $t_{last} \gets 0$, $j \gets 0$\;
$r_b \gets \emptyset$\;

\While{$u_1($\BR{$G, \bar{b}_2$}$, \bar{b}_2) - u_1(\bar{b}_1,$ \BR{$G, \bar{b}_1$}$) > \varepsilon$}{\label{alg:cfr_abstr:loop}
  $\hat{\mathcal{I}}_h \gets$ \ChooseInformationSets{$\widetilde{\mathcal{I}}^t_i$, $k_h$}\label{alg:cfr_abstr:choose_is_heur}\;
  $r_h \gets$ \InitRegrets{$\hat{\mathcal{I}}_h$}\;  
  \If{$t = t_{next}$}{\label{alg:cfr_abstr:bound_resample_cond}
    $\hat{\mathcal{I}}_b \gets$ \ChooseInformationSets{$\widetilde{\mathcal{I}}^t$, $k_b$}\label{alg:cfr_abstr:choose_is_bound}\;
    $r_b \gets$ \InitRegrets{$\hat{\mathcal{I}}_b$}\;  
    $t_{last} \gets t$\;    
    $t_{next} \gets t + 2^j$\label{alg:cfr_abstr:next_it_update}\;
    $j \gets j + 1$\;
  }
  \CFR{$\mathcal{I}^t$, $G^t.root$, \GetPlayer{$t$}, 1, 1, $r$, $r_h$, $r_b$, $\hat{\mathcal{I}}_h$, $\hat{\mathcal{I}}_b$}\label{alg:cfr_abstr:cfr+}\; 
  \RemoveNegativeRegrets{$r$}\label{alg:cfr_abstr:neg_r}\;
  \If{$t > delay$}{
    \UpdateAverageStrategy{$r$, $t$}\label{alg:cfr_abstr:avg_strat}\;
  } 
    $G^{t+1} \gets$ \UpdateAbstractionHeuristic{$G^t$, $\hat{\mathcal{I}}_h$, $r_h$, $r$, $\bar{b}_1$, $\bar{b}_2$, $t$}\label{alg:cfr_abstr:abstr_update_heur}\;
    delete $r_h$\;
    \If{$t \neq t_{last}$}{   
      $G^{t+1} \gets$ \UpdateAbstractionBound{$G^{t+1}$, $\hat{\mathcal{I}}_b$, $r$, $r_b$, $\bar{b}_1$, $\bar{b}_2$, $t$, $t_{last}$}\label{alg:cfr_abstr:abstr_update_bound}\;
      }
  $t \gets t+1$\;
}

\caption{CFR+ for Imperfect Recall Abstractions}
\label{alg:cfr_abstr}
\end{algorithm}

In Algorithm \ref{alg:cfr_abstr} we provide the pseudocode of the \IRCFR. \IRCFR is given the original perfect recall game $G$, the desired precision of approximation of the NE $\epsilon$ and the limits $k_h$ and $k_b$ on the memory that can be used to update the abstraction. Finally it is given the $delay$ which represents the number of initial iterations for which the algorithm does not update the average strategy. The algorithm starts by creating a coarse imperfect recall abstraction $G^1$ of the given game $G$ (line \ref{alg:cfr_abstr:init_abstr}, see Section \ref{sec:init_abstr}). It stores the regrets $r$ and average strategies $\bar{b}_1, \bar{b}_2$ for each information set of the current abstraction. The algorithm then simultaneously solves the abstraction and updates its structure until the $\bar{b}_1, \bar{b}_2$ form an $\varepsilon$-Nash equilibrium of $G$ (line \ref{alg:cfr_abstr:loop}). The players take turn updating their strategies and regrets. In every iteration, the algorithm updates the regrets and the current strategy for the acting player $i$ according to CFR+ update (line \ref{alg:cfr_abstr:cfr+}, see Section \ref{sec:cfr+} and Algorithm \ref{alg:CFR+_func}). As a part of the CFR+ update the algorithm removes the negative regrets (line \ref{alg:cfr_abstr:neg_r}) and updates the average strategy (line \ref{alg:cfr_abstr:avg_strat}). Note that we follow CFR+ as described in Section \ref{sec:cfr+} and so the update of the average strategy starts only after a fixed number of iterations denoted as $delay$. The average strategy is then updated according to eq. \eqref{eq:cfr+stratupdate}.

The algorithm continues with the update of the current abstraction $G^t$.
There are two procedures for updating the abstraction. 

First, the abstraction is updated to guarantee the convergence of the algorithm to the Nash equilibrium (see Section \ref{sec:regr_bound_update} for more details). As a part of this abstraction update, \IRCFR samples a subset $\hat{\mathcal{I}}_b$ of information sets of $G$ (line \ref{alg:cfr_abstr:choose_is_bound}). It then checks the immediate regret in all $I \in \hat{\mathcal{I}}_b$ for a given number of iterations before again resampling $\hat{\mathcal{I}}_b$. During these iterations, it updates the abstraction so that any $I \in \hat{\mathcal{I}}_b$, where the immediate regret decreases slower than a given function, is removed from its abstracted information set.

Second, the abstraction is updated using a heuristic update which significantly improves the empirical convergence of the algorithm (see Section \ref{sec:heur_update} for more details). The heuristic update samples a subset $\hat{\mathcal{I}}_h$ of information sets of $i$ in $G$ in every iteration of \IRCFR (line \ref{alg:cfr_abstr:choose_is_heur}). 
It then keeps track of regrets in all $I \in \hat{\mathcal{I}}_h$ in this iteration. Finally, it uses these regrets to update the abstraction so that only information sets with similar regrets remain grouped.

\begin{algorithm}[h!]
\small
\DontPrintSemicolon
\SetKwInOut{Input}{input}
\SetKwInOut{Output}{output}
\SetKwProg{Function}{function}{}{}

\SetKwFunction{InitAbstraction}{InitAbstraction}
\SetKwFunction{InitRegrets}{InitRegrets}
\SetKwFunction{UniformStrategy}{UniformStrategy}
\SetKwFunction{CFR}{ComputeRegrets}
\SetKwFunction{GetPlayer}{GetPlayer}
\SetKwFunction{IsTerminal}{IsTerminal}
\SetKwFunction{ChooseInformationSets}{ChooseInformationSets}
\SetKwFunction{ComputePerfectRecallRegrets}{PRRegrets}
\SetKwFunction{UpdateAbstraction}{UpdateAbstraction}
\SetKwFunction{GetInformationSetFor}{GetInformationSetFor}
\SetKwFunction{GetRegretsFor}{GetRegretsFor}
\SetKwFunction{GetPlayerToMove}{GetPlayerToMove}
\SetKwFunction{RegretMatching}{RegretMatching+}
\SetKwFunction{UpdateAverageStrategy}{UpdateAverageStrategy}
\SetKwFunction{RemoveNegativeRegrets}{RemoveNegativeRegrets}
\SetKwFunction{Zeros}{Zeros}
\Function{\CFR{$\mathcal{I}$, ${h}$, ${i}$, $\pi_1$, $\pi_2$, ${r}$, ${r}_h$, ${r}_b$, $\hat{\mathcal{I}}_h$, $\hat{\mathcal{I}}_b$}}{
    \If{\IsTerminal{${h}$}}{
          \Return{$u_i({h})$}\;
      }
      \If{\GetPlayerToMove{${h}$}$ = c$}{
        \If{${i}$ $= 1$}{
            \Return $\sum_{a \in \mathcal{A}(h)} b_c(a)\cdot$ \CFR{$\mathcal{I}$, ${h}\cdot a$, ${i}$, $\pi_1$, $b_c(a) \cdot \pi_2$, ${r}$, ${r}_h$, ${r}_b$, $\hat{\mathcal{I}}_h$, $\hat{\mathcal{I}}_b$}\;
        }
        \Return $\sum_{a \in \mathcal{A}(h)} b_c(a)\cdot$ \CFR{$\mathcal{I}$, ${h}\cdot a$, ${i}$, $b_c(a) \cdot \pi_1$, $\pi_2$, ${r}$, ${r}_h$, ${r}_b$, $\hat{\mathcal{I}}_h$, $\hat{\mathcal{I}}_b$}\;
      }
      ${I}$ $\gets$ \GetInformationSetFor{${h}$, $\mathcal{I}$}\; 
      ${r}_I \gets$ \GetRegretsFor{${I}$, ${r}$}\; 
      ${b}^t \gets$ \RegretMatching{${r}_I$}\;
      $v_{b^t} \gets 0$\;
      $v \gets$ \Zeros{$|\mathcal{A}(h)|$}\;
      \For{$a \in \mathcal{A}(h)$}{
        \If{${i}$ $= 1$}{
            $v[a] \gets$ \CFR{$\mathcal{I}$, ${h}\cdot a$, ${i}$, $b^t(a)\cdot \pi_1$, $\pi_2$, ${r}$, ${r}_h$, ${r}_b$, $\hat{\mathcal{I}}_h$, $\hat{\mathcal{I}}_b$}\;
        }\Else{
            $v[a] \gets$ \CFR{$\mathcal{I}$, ${h}\cdot a$, ${i}$, $\pi_1$, $b^t(a)\cdot \pi_2$, ${r}$, ${r}_h$, ${r}_b$, $\hat{\mathcal{I}}_h$, $\hat{\mathcal{I}}_b$}\;
        }
        $v_{b^t} \gets v_{b^t} + b^t(a)\cdot v[a]$\;
    }
    \If{\GetPlayerToMove{${h}$}$ = {i}$}{
        \For{$a \in \mathcal{A}(h)$}{
            $r_I[a] \gets r_I[a] + \pi_{-i}\cdot (v[a] - v_{b^t})$\;
        }
    
        \If{$I \in \hat{\mathcal{I}}_h$}{\label{alg:CFR+_func:heur_update_start}
              ${r}^h_I \gets$ \GetRegretsFor{${I}$, ${r}_h$}\; 
            \For{$a \in \mathcal{A}(h)$}{
                $r^h_I[a] \gets r^h_I[a] + \pi_{-i}\cdot (v[a] - v_{b^t})$\label{alg:CFR+_func:heur_update_end}\;
            }
        }
        \If{$I \in \hat{\mathcal{I}}_b$}{\label{alg:CFR+_func:bound_update_start}
              ${r}^b_I \gets$ \GetRegretsFor{${I}$, ${r}_b$}\; 
            \For{$a \in \mathcal{A}(h)$}{
                $r^b_I[a] \gets r^b_I[a] + \pi_{-i}\cdot (v[a] - v_{b^t})$\label{alg:CFR+_func:bound_update_end}\;
            }
        }
    }
    \Return{$v_{b^t}$}
}
\caption{Regret update}
\label{alg:CFR+_func}
\end{algorithm}

\begin{algorithm}[h!]
\small
\DontPrintSemicolon
\SetKwInOut{Input}{input}
\SetKwInOut{Output}{output}
\SetKwProg{Function}{function}{}{}
\SetKw{Break}{break}
\SetKwFunction{InitAbstraction}{InitAbstraction}
\SetKwFunction{InitRegrets}{InitRegrets}
\SetKwFunction{UniformStrategy}{UniformStrategy}
\SetKwFunction{CFR}{CFR+}
\SetKwFunction{GetPlayer}{GetPlayer}
\SetKwFunction{IsTerminal}{IsTerminal}
\SetKwFunction{ChooseInformationSets}{SampleInformationSets}
\SetKwFunction{ComputePerfectRecallRegrets}{PRRegrets}
\SetKwFunction{UpdateAbstraction}{UpdateAbstraction}
\SetKwFunction{GetInformationSetFor}{GetInformationSetFor}
\SetKwFunction{GetRegretsFor}{GetRegretsFor}
\SetKwFunction{RegretMatching}{RegretMatching+}
\SetKwFunction{UpdateAverageStrategy}{UpdateAverageStrategy}
\SetKwFunction{RemoveNegativeRegrets}{RemoveNegativeRegrets}
\SetKwFunction{Zeros}{Zeros}
\SetKwFunction{Intersection}{Intersection}
\SetKwFunction{OriginalSets}{UnabstractedSets}
\SetKwFunction{GetRandomAbstractedSet}{GetRandomAbstractedSet}
\SetKwFunction{NoMoreAbstractedSets}{AllSetsSampled}
\SetKwFunction{RandomSubset}{RandomSubset}
\Function{\ChooseInformationSets{$\widetilde{\mathcal{I}}$, $k$}}{
    $\hat{\mathcal{I}} \gets \emptyset$\;
    \While{$|\hat{\mathcal{I}}|<k$}{
        \If{\NoMoreAbstractedSets{$\widetilde{\mathcal{I}}$, $\hat{\mathcal{I}}$}}{
             \Break\;
        }
        ${I} \gets $ \GetRandomAbstractedSet{$\widetilde{\mathcal{I}}$}\label{alg:choose_IS:sampling}\;
        $\mathcal{I}'' \gets \Phi_t^{-1}({I})$\;
        \If{$|\mathcal{I}''| + |\hat{\mathcal{I}}| \leq k$} {
            $\hat{\mathcal{I}} \gets \hat{\mathcal{I}} \cup \mathcal{I}''$\;
        }\Else{
            $\hat{\mathcal{I}} \gets \hat{\mathcal{I}}\ \cup\ $ random subset of $\mathcal{I}''$ with size $k - |\hat{\mathcal{I}}|$\;
        }
    }
    \Return $\hat{\mathcal{I}}$
}
\caption{Sampling of unabstracted sets for unabstracted regret storage.}
\label{alg:choose_IS}
\end{algorithm}

\subsubsection{Regret Bound Update}
\label{sec:regr_bound_update}
\begin{algorithm}[h!]
\small
\DontPrintSemicolon
\SetKwInOut{Input}{input}
\SetKwInOut{Output}{output}
\SetKwProg{Function}{function}{}{}

\SetKwFunction{InitAbstraction}{InitAbstraction}
\SetKwFunction{InitRegrets}{InitRegrets}
\SetKwFunction{UniformStrategy}{UniformStrategy}
\SetKwFunction{CFR}{CFR+}
\SetKwFunction{GetPlayer}{GetPlayer}
\SetKwFunction{IsTerminal}{IsTerminal}
\SetKwFunction{ChooseInformationSets}{ChooseInformationSets}
\SetKwFunction{ComputePerfectRecallRegrets}{PRRegrets}
\SetKwFunction{UpdateAbstractionBound}{UpdateAbstractionForBound}
\SetKwFunction{GetInformationSetFor}{GetInformationSetFor}
\SetKwFunction{GetRegretsFor}{GetRegretsFor}
\SetKwFunction{GetStrategyFor}{GetStrategyFor}
\SetKwFunction{RegretMatching}{RegretMatching+}
\SetKwFunction{UpdateAverageStrategy}{UpdateAverageStrategy}
\SetKwFunction{RemoveNegativeRegrets}{RemoveNegativeRegrets}
\SetKwFunction{Zeros}{Zeros}
\SetKwFunction{Intersection}{Intersection}
\SetKwFunction{OriginalSets}{UnabstractedSets}
\SetKwFunction{CreateNewIS}{CreateNewIS}
\SetKwFunction{InitRegret}{InitRegret}
\Function{\UpdateAbstractionBound{${G}^t$, $\hat{\mathcal{I}}_b$, ${r}$, ${r}_b$, ${\bar{b}}_1$, ${\bar{b}}_2$, ${t}$, ${t}_{last}$}}{
    \For{${I} \in \widetilde{\mathcal{I}^t}$}{\label{alg:abstr_update_bound:iter1}
        $\mathcal{I}'' \gets \Phi_t^{-1}({I})$\;
            \For{${I}'' \in\ $\Intersection{$\mathcal{I}''$, $\hat{\mathcal{I}}_b$}}    {        
                ${r}^b_{{I}''} \gets $ \GetRegretsFor{${I}''$, ${r}_b$}\;
                \If{$\frac{max_{a \in \mathcal{A}(I'')}{r}^b_{{I}''}[a]}{t - t_{last}} > L^{t_{last}}_{I''}(t)$}{\label{alg:abstr_update_bound:condition}
                    ${I}_{new} \gets $ \CreateNewIS{$\mathcal{I}'' \setminus {I}''$}\;\label{alg:abstr_update_bound:disconnect_start}
                    ${G}^t.\mathcal{I} \gets {G}^t.\mathcal{I} \setminus {I}$\;
                    ${G}^t.\mathcal{I} \gets {G}^t.\mathcal{I} \cup \{$\CreateNewIS{${I}''$}, ${I}_{new}\}$\;
                    ${r}_{I} \gets $ \GetRegretsFor{${I}$, ${r}$}\;
          ${r} \gets {r} \setminus {r}_{I}$ \;
                    ${r}_{I_{new}} \gets $ \InitRegret{${I}_{new}$}\;
                    ${r} \gets {r} \cup {r}_{I_{new}}$\;
                    ${r}_{I''} \gets $ \InitRegret{${I}''$}\;
                    ${r} \gets {r} \cup {r}_{I''}$\;
          $\hat{\mathcal{I}}_b \gets \hat{\mathcal{I}}_b \setminus I''$\;
          ${r}_b \gets {r}_b \setminus {r}^b_{{I}''}$\; 
                    \If{\GetPlayer{${t}$} = 1}{
                        ${\bar{b}}_1 \gets {\bar{b}}_1 \setminus$ \GetStrategyFor{${I}$, ${\bar{b}}_1$}\;        
                    }\Else{
                        ${\bar{b}}_2 \gets {\bar{b}}_2 \setminus$ \GetStrategyFor{${I}$, ${\bar{b}}_2$}\label{alg:abstr_update_bound:disconnect_end}\;        
                    }
                }
            }
        
    }
    \Return{${G}^t$}\;
}
\caption{Abstraction update for regret bound}
\label{alg:abstr_update_bound}
\end{algorithm}
In this section we present more detailed description of the update of the abstraction based on the regret bound.

Let $\mathcal{T}^T = (T_1, ..., T_n)$ be a sequence of iterations, where every $T_{j+1} - T_{j} = 2^{j}$ for all $j \in \{1, ..., n - 1\}$ (the elements of $\mathcal{T}$ are computed on line \ref{alg:cfr_abstr:next_it_update} in Algorithm \ref{alg:cfr_abstr}).
 As a part of the abstraction update, the algorithm samples a subset $\hat{\mathcal{I}}_b \subseteq \mathcal{I}$ of information sets of $G$ in predetermined iterations specified by elements of $\mathcal{T}$.
The subset $\hat{\mathcal{I}}_b$ is sampled on line \ref{alg:cfr_abstr:choose_is_bound} in Algorithm \ref{alg:cfr_abstr} according to Algorithm \ref{alg:choose_IS}. The sampling of  $\hat{\mathcal{I}}_b$ is done so that $\forall I \in \hat{\mathcal{I}}_b\ \Phi_t(I) \in \widetilde{\mathcal{I}}^t$ and 
 
 {\small
$$|\hat{\mathcal{I}}_b| = \min\left(k_b, \sum_{I \in \widetilde{\mathcal{I}}^t}|\Phi_t^{-1}(I)|\right).$$}
I.e., the size of $\hat{\mathcal{I}}_b$ is limited by the parameter $k_b$ and the actual number of information sets in $\mathcal{I}$ that are still mapped to some abstracted information set in iteration $t$. 
Additionally, sampling of information sets on line \ref{alg:choose_IS:sampling} in Algorithm \ref{alg:choose_IS} is performed so that the probability of adding any $I \in \mathcal{I}$ such that $\Phi_t(I) \in \widetilde{\mathcal{I}}^t$ to $\hat{\mathcal{I}}_b$ is equal to 

{\small
\begin{equation}
\frac{1}{\sum_{I' \in \widetilde{\mathcal{I}}^t} |\Phi^{-1}_t(I')|}. 
\end{equation}}

After sampling $\hat{\mathcal{I}}_b$ in some $T_j \in \mathcal{T}$, the algorithm keeps track of the regrets $r_b$ accumulated in each $I \in \hat{\mathcal{I}}_b$ for $T_{j+1} - T_j$ iterations during the CFR+ update (line \ref{alg:CFR+_func:bound_update_start} to \ref{alg:CFR+_func:bound_update_end} in Algorithm \ref{alg:CFR+_func}) before again resampling the $\hat{\mathcal{I}}_b$.

Let $L_I: \{T_j, ..., T_{j+1}\} \rightarrow \mathbb{R}$ be any function for which

{\small
$$L^{T_j}_I(t) \leq \frac{\Delta_I\sqrt{|\mathcal{A}(I)|}}{\sqrt{t - T_j}}, \quad\forall t \in \{T_j+1, ..., T_{j+1}\}.$$}
The actual abstraction update in each iteration $T' \in \{T_j+1, ... T_{j+1}\}$ is done according to Algorithm \ref{alg:abstr_update_bound} in the following way. The algorithm iterates over $I \in \widetilde{\mathcal{I}}^t$ in $G^t$ (line \ref{alg:abstr_update_bound:iter1}). For all $I'' \in \Phi^{-1}_t(I) \cap \hat{\mathcal{I}}_h$
the algorithm checks the immediate regret 

{\small
$$R^{T'}_{T_j, imm}(I) = \frac{1}{T'-T_j}\max_{a \in A(I)}\sum_{t = T_j}^{T'}\left[ v_i(b^t_{I \rightarrow a}, I) - v_i(b^t, I)\right] = \frac{\max_{a \in A(I)}r^b_I[a]}{T'-T_j}.$$ }

If $R^{T'}_{T_j, imm}(I) > L^{T_j}_I(T')$
for some  $T' \in \{T_j+1, ..., T_{j+1}\}$ (line \ref{alg:abstr_update_bound:condition} in Algorithm \ref{alg:abstr_update_bound}), the algorithm disconnects $I$ in iteration $T'$ from its abstracted information set in $G^{T'}$, resets its regrets to 0 and removes it from $\hat{\mathcal{I}}_b$ (lines \ref{alg:abstr_update_bound:disconnect_start} to \ref{alg:abstr_update_bound:disconnect_end} in Algorithm \ref{alg:abstr_update_bound}).
If $I$ is disconnected, the average strategy in $I$ in all $T > T'$ is computed $\forall a \in \mathcal{A}(I)$ as  

{\small
\begin{equation}\label{eq:avg_strat_update}
\bar{b}_i^T(I, a) = \begin{cases}
\frac{1}{|\mathcal{A}(I)|} \text{, \quad if }T = T'+1,\\
\frac{2\sum_{t' = T'+1}^Tt' \cdot\pi_i^{b_i^{t'}}(I)\cdot b_i^{t'}(I, a)}{\left((T - T' - 1)^2 + T - T' - 1\right)\sum_{t' = T'+1}^T \pi_i^{b_i^{t'}}(I)} \text{, \quad otherwise.} 
\end{cases}
\end{equation}
}

The average strategy update in eq. \eqref{eq:avg_strat_update} corresponds to the average strategy update described in Section \ref{sec:cfr+}, i.e., it does a weighted average of the $b_i^t$ strategies over iterations $\{T'+1, ..., T\}$ with weight corresponding to the iteration. 

\subsubsection{Heuristic Update}
\label{sec:heur_update}

\begin{algorithm}[h!]
\small
\DontPrintSemicolon
\SetKwInOut{Input}{input}
\SetKwInOut{Output}{output}
\SetKwProg{Function}{function}{}{}

\SetKwFunction{InitAbstraction}{InitAbstraction}
\SetKwFunction{InitRegrets}{InitRegrets}
\SetKwFunction{UniformStrategy}{UniformStrategy}
\SetKwFunction{CFR}{CFR+}
\SetKwFunction{GetPlayer}{GetPlayer}
\SetKwFunction{IsTerminal}{IsTerminal}
\SetKwFunction{ChooseInformationSets}{ChooseInformationSets}
\SetKwFunction{ComputePerfectRecallRegrets}{PRRegrets}
\SetKwFunction{UpdateAbstractionHeuristic}{UpdateAbstractionForHeuristic}
\SetKwFunction{GetInformationSetFor}{GetInformationSetFor}
\SetKwFunction{GetRegretsFor}{GetRegretsFor}
\SetKwFunction{GetStrategyFor}{GetStrategyFor}
\SetKwFunction{RegretMatching}{RegretMatching+}
\SetKwFunction{UpdateAverageStrategy}{UpdateAverageStrategy}
\SetKwFunction{RemoveNegativeRegrets}{RemoveNegativeRegrets}
\SetKwFunction{Zeros}{Zeros}
\SetKwFunction{Intersection}{Intersection}
\SetKwFunction{OriginalSets}{UnabstractedSets}
\SetKwFunction{CreateNewIS}{CreateNewIS}
\SetKwFunction{InitRegret}{InitRegret}
\Function{\UpdateAbstractionHeuristic{${G}^t$, $\hat{\mathcal{I}}_h$, ${r}_h$, ${r}$, ${\bar{b}}_1$, ${\bar{b}}_2$, ${t}$}}{
    \For{${I} \in \widetilde{\mathcal{I}}^t_i$}{\label{alg:abstr_update_heur:iter1}
        $\mathcal{I}'' \gets \Phi_t^{-1}({I})$\;
        \If{\Intersection{$\mathcal{I}''$, $\hat{\mathcal{I}}_h$} $\neq \emptyset$}{\label{alg:abstr_update_heur:cond1}
            \For{$I'' \in\ $\Intersection{$\mathcal{I}''$, $\hat{\mathcal{I}}_h$}}    {    
        ${r^h_{I''}} \gets $ \GetRegretsFor{$I'', {r}_h$}\;    
                $\gamma_{I''} \gets $ set of indices of $\{a \in \mathcal{A}(I'') | {r}^h_{I''}[a] \in \left[ \max_{a' \in \mathcal{A}(I'')}\left({r}^h_{I''}[a'] - \frac{1}{5\sqrt{t}},\max_{a' \in \mathcal{A}(I'')}{r}^h_{I''}[a']\right)\right] \}$\label{alg:abstr_update_heur:action_idcs}\;
            }
            Split $\mathcal{I}''$ to subsets $\mathcal{I}''_1, ..., \mathcal{I}''_k$ \label{alg:abstr_update_heur:split_start}\;
            Each $\mathcal{I}''_j \in \{\mathcal{I}''_1, ..., \mathcal{I}''_k\}$ contains all the $I'' \in \mathcal{I}''$ with the same $\gamma_{I''}$\label{alg:abstr_update_heur:split_end}\;
            $\mathcal{I}''_{max} \gets \argmax_{\mathcal{I}''_j \in \{\mathcal{I}''_1, ..., \mathcal{I}''_k\}}|\mathcal{I}''_j |$\label{alg:abstr_update_heur:I_max}\;
            $\mathcal{I}''_f \gets \mathcal{I}''_{max} \cup \mathcal{I}'' \setminus \hat{\mathcal{I}}_h$\label{alg:abstr_update_heur:join_rest}\;
            ${G}^t.\mathcal{I} \gets {G}^t.\mathcal{I}\setminus I$\label{alg:abstr_update_heur:replace_start}\;
            ${r} \gets {r} \setminus$ \GetRegretsFor{${I}$, ${r}$}\;
            \For{$\mathcal{I}''_j \in \left(\{\mathcal{I}''_1, ..., \mathcal{I}''_k\} \cup \mathcal{I}''_f  \right) \setminus \mathcal{I}''_{max}$}{
                ${I}_{new} \gets $ \CreateNewIS{$\mathcal{I}''_j$}\;
                ${G}^t.\mathcal{I} \gets {G}^t.\mathcal{I} \cup  {I}_{new}$\;
                ${r}_{I_{new}} \gets $ \InitRegret{${I}_{new}$}\;
                ${r} \gets {r} \cup {r}_{I_{new}}$\;                
            }
      \If{\GetPlayer{${t}$} = 1}{
          ${\bar{b}}_1 \gets {\bar{b}}_1 \setminus $ \GetStrategyFor{${I}$, ${\bar{b}}_1$}\;        
        }\Else{
          ${\bar{b}}_2 \gets {\bar{b}}_2 \setminus $ \GetStrategyFor{${I}$, ${\bar{b}}_2$}\label{alg:abstr_update_heur:replace_end}\;  
      }
        }
    }
    \Return{${G}^t$}\;
}
\caption{Abstraction update for heuristic}
\label{alg:abstr_update_heur}
\end{algorithm}

In this section, we focus on the description of the heuristic update of the abstraction. 

Let $i$ be the player who's regrets are updated in iteration $t$. As a part of the abstraction update, the algorithm samples $\hat{\mathcal{I}}_h \subseteq \mathcal{I}_i$ of information sets of $G$ in $t$. $\hat{\mathcal{I}}_h$ is sampled on line \ref{alg:cfr_abstr:choose_is_heur} in Algorithm \ref{alg:cfr_abstr} according to Algorithm \ref{alg:choose_IS}. In this case, the sampling in Algorithm \ref{alg:choose_IS} is done so that $\forall I \in \hat{\mathcal{I}}_h\ \Phi_t(I) \in \widetilde{\mathcal{I}}_i^t$ and 

{\small
$$|\hat{\mathcal{I}}_h| = \min\left(k_h, \sum_{I \in \widetilde{\mathcal{I}}_i^t}|\Phi_t^{-1}(I)|\right).$$}
I.e., the size of $\hat{\mathcal{I}}_h$ is limited by the parameter $k_h$ and the actual number of information sets in $\mathcal{I}_i$ that are still mapped to some abstracted information set in iteration $t$.
Additionally, sampling of information sets on line \ref{alg:choose_IS:sampling} in Algorithm \ref{alg:choose_IS} is performed so that the probability of adding any $I \in \mathcal{I}$ such that $\Phi_t(I) \in \widetilde{\mathcal{I}}^t$ to $\hat{\mathcal{I}}_h$ is equal to 

{\small
\begin{equation}
\frac{1}{\sum_{I' \in \widetilde{\mathcal{I}}_i} |\Phi^{-1}_t(I')|}. 
\end{equation}}

The algorithm keeps track of the regrets $r_h$ in each $I \in \hat{\mathcal{I}}_h$ during the CFR+ update in iteration $t$ (lines \ref{alg:CFR+_func:heur_update_start} to \ref{alg:CFR+_func:heur_update_end} in Algorithm \ref{alg:CFR+_func}). 

The actual abstraction update is done in a following way (Algorithm \ref{alg:abstr_update_heur}).
The algorithm iterates over $I \in \widetilde{\mathcal{I}}_i^t$ in $G^t$ which contain some of the $I' \in \hat{\mathcal{I}}_h$ (lines \ref{alg:abstr_update_heur:iter1} and \ref{alg:abstr_update_heur:cond1}). For all $I'' \in \Phi^{-1}_t(I) \cap \hat{\mathcal{I}}_h$ it creates a set of action indices $\gamma_{I''}$ corresponding to actions with regret in $r_h$ at most $\frac{1}{5\sqrt{t}}$ distant from the maximum regret for $I''$ in $r_h$ (line \ref{alg:abstr_update_heur:action_idcs}). The abstraction update then splits the set $\Phi^{-1}_t(I) \cap \hat{\mathcal{I}}_h$ to largest subsets $\mathcal{I}''_1, ..., \mathcal{I}''_k$ such that $\forall  \mathcal{I}''_j \in \{\mathcal{I}''_1, ..., \mathcal{I}''_k\} \forall I''_1, I''_2 \in \mathcal{I}''_j\ \gamma_{I''_1} = \gamma_{I''_1}$ (lines \ref{alg:abstr_update_heur:split_start} to \ref{alg:abstr_update_heur:split_end}). Next, the algorithm selects $\mathcal{I}''_{max}$, the largest element of $\{\mathcal{I}''_1, ..., \mathcal{I}''_k\}$ (line \ref{alg:abstr_update_heur:I_max}) and adds all the $I'' \in \Phi^{-1}_t(I)$ which are not in $\hat{\mathcal{I}}_h$ to $\mathcal{I}''_{max}$, creating $\mathcal{I}''_f$ (line \ref{alg:abstr_update_heur:join_rest}). This is done to avoid unnecessary splits caused by not tracking regrets in $\Phi^{-1}_t(I) \setminus \hat{\mathcal{I}}_h$.
Finally, the abstracted set $I$ is replaced in $G^t$ by the set of new information sets $\mathcal{I}^t_n$ created from $\left(\{\mathcal{I}''_1, ..., \mathcal{I}''_k\} \cup \mathcal{I}''_f  \right) \setminus \mathcal{I}''_{max}$. The regrets in each $I_n \in \mathcal{I}^t_n$ are set to 0 in $r$ and the average strategies are discarded (lines \ref{alg:abstr_update_heur:replace_start} to  \ref{alg:abstr_update_heur:replace_end}). 
Finally, let $T$ be an iteration such that $T > t$. Assuming that $I_n \in \mathcal{I}^t_n$ was not split further during iterations $\{t+1, ..., T\}$, the average strategy in $I_n$ is computed $\forall a \in \mathcal{A}(I_n)$ as
{\small
\begin{equation}\label{eq:avg_strat_update_heur}
\bar{b}_i^T(I, a) = \begin{cases}
\frac{1}{|\mathcal{A}(I)|} \text{, \quad if }T = t+1,\\
\frac{2\sum_{t' = t+1}^Tt' \cdot\pi_i^{b_i^{t'}}(I)\cdot b_i^{t'}(I, a)}{\left((T - t - 1)^2 + T - t - 1\right)\sum_{t' = t+1}^T \pi_i^{b_i^{t'}}(I)} \text{, \quad otherwise.} 
\end{cases}
\end{equation}
}

The average strategy update in eq. \eqref{eq:avg_strat_update_heur} corresponds to the average strategy update described in Section \ref{sec:cfr+}, i.e., it does a weighted average of the $b_i^t$ strategies over iterations $\{t+1, ..., T\}$ with weight corresponding to the iteration.

\subsubsection{Theoretical Properties}
In this section we present the bound on the average external regret of the \IRCFR algorithm. We first derive the bound for the case where the algorithm uses only the regret bound abstraction update described in Section \ref{sec:regr_bound_update}. Since \IRCFR randomly samples information sets during the abstraction update, we provide a probabilistic bound on the average external regret of the algorithm.
We then show that the regret bound still holds when also using the heuristic update described in Section \ref{sec:heur_update}. Finally, we discuss why it is insufficient to use only the heuristic abstraction update.

Given iteration $T$, let $\mathcal{T}^T = (T_1, ..., T_k)$ be a subsequence of $\mathcal{T}$ such that $T_k$ is the largest element in $\mathcal{T}$ for which $T_k < T$. $\uptau^T$ is the sequence of  iteration counts corresponding to $\mathcal{T}^T$. Let $\uptau^T_{\bar{R}_i}$ be a sequence containing all the iteration counts $\uptau_j \in \uptau^T$, where for the corresponding $T_j, T_{j+1}$ holds that 

{\small 
\begin{equation}
L_I^{T_{j}}(T_{j+1}) < \frac{\bar{R}_i}{|\mathcal{I}_i|}, \quad \forall I \in \mathcal{I}_i,\label{eq:l_bound}
\end{equation}}

\noindent for a given regret $\bar{R}_i$. 
Next, we define $p_r(T, \bar{R}_i)$, as 

{\small
\begin{equation}
p_r(T, \bar{R}_i) = 
\begin{cases}
0, \text{if } |\uptau^T_{\bar{R}_i}| < |\mathcal{I}|,\\
1 - \left(1 - \left(\frac{k_b}{|\mathcal{I}|}\right)^{|\mathcal{I}|}\right)^{\binom{|\uptau^T_{\bar{R}_i}|}{|\mathcal{I}|}}\text{, otherwise.}\label{eq:pr}
\end{cases}
\end{equation}}

\begin{lemma}\label{lemma:p_r}
$p_r(T, \bar{R}_i) $ is the lower bound on the probability, that the abstraction in iteration $T$ allows representation of average strategy with average external regret $\bar{R}_i$ for player $i$.
\end{lemma}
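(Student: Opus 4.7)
The plan is to connect the probabilistic event in the lemma to a combinatorial \emph{coverage} event: namely, that by iteration $T$ every information set $I \in \mathcal{I}_i$ has been sampled into some $\hat{\mathcal{I}}_b$ during a phase $j$ whose iteration count $\uptau_j$ lies in $\uptau^T_{\bar{R}_i}$. Once coverage is established, the desired regret bound falls out of the CFR$^+$ machinery already developed in Section~\ref{sec:cfr}; the probabilistic content of the statement is entirely in bounding the coverage probability.

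First I would prove a structural claim: if $I \in \mathcal{I}_i$ is sampled into $\hat{\mathcal{I}}_b$ during a good phase $j$ (i.e.\ $\uptau_j \in \uptau^T_{\bar{R}_i}$), then for every iteration $T' \geq T_{j+1}$, the immediate counterfactual regret at $I$ satisfies $R^{T'}_{i,imm}(I) \leq \bar{R}_i / |\mathcal{I}_i|$. Two cases arise. Either the test on line~\ref{alg:abstr_update_bound:condition} of Algorithm~\ref{alg:abstr_update_bound} fired at some $T' \in \{T_j+1,\dots,T_{j+1}\}$, in which case $I$ was disconnected from its abstracted set and from then on receives its own regret updates, so the standard bound \eqref{eq:cfr_is_regret_bound} applies and, by \eqref{eq:l_bound}, stays below $\bar{R}_i/|\mathcal{I}_i|$; or the test never fired, meaning the immediate regret inside the abstracted set remained below $L_I^{T_j}(T_{j+1}) < \bar{R}_i/|\mathcal{I}_i|$ throughout the phase.

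With this in hand, assume the coverage event: every $I \in \mathcal{I}_i$ has been sampled in at least one good phase before iteration $T$. Then, applying \eqref{eq:bound1},
\begin{equation}
\frac{R_i^T}{T} \;\leq\; \sum_{I \in \mathcal{I}_i} \bigl(R_{i,imm}^T(I)\bigr)^+ \;\leq\; \sum_{I \in \mathcal{I}_i} \frac{\bar{R}_i}{|\mathcal{I}_i|} \;=\; \bar{R}_i, \nonumber
\end{equation}
so the current abstraction indeed supports an average strategy of external regret $\bar{R}_i$. It thus suffices to lower-bound the probability of coverage by the expression in \eqref{eq:pr}. If $|\uptau^T_{\bar{R}_i}| < |\mathcal{I}|$, coverage is impossible by pigeonhole (each phase samples only $k_b \leq |\mathcal{I}|$ sets but the required refinement hits each good phase at most once in the worst case for the argument), and the bound $0$ is trivial. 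Otherwise, I would fix an arbitrary bijection $\psi$ from a chosen size-$|\mathcal{I}|$ subfamily $\mathcal{J} \subseteq \uptau^T_{\bar{R}_i}$ onto $\mathcal{I}$, and observe from Algorithm~\ref{alg:choose_IS} together with the uniformity property stated after it that, for a fixed target $I \in \mathcal{I}$, the probability that phase $j$ samples $I$ into $\hat{\mathcal{I}}_b$ is at least $k_b/|\mathcal{I}|$. Since samplings in distinct phases are independent, the probability that the bijection $\psi$ is \emph{realised} by the actual samples on $\mathcal{J}$ is at least $(k_b/|\mathcal{I}|)^{|\mathcal{I}|}$, and any such realisation certifies coverage. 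Ranging over the $\binom{|\uptau^T_{\bar{R}_i}|}{|\mathcal{I}|}$ choices of $\mathcal{J}$ and applying the standard $1-(1-p)^n$ inequality yields the claimed lower bound.

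The main obstacle is the last step: the $\binom{|\uptau^T_{\bar{R}_i}|}{|\mathcal{I}|}$ events indexed by different subfamilies $\mathcal{J}$ are not independent, since overlapping subfamilies share phases whose sample is a single random draw. To turn the product into a genuine lower bound rather than a heuristic, I would either (a) restrict attention to a maximal packing of pairwise-disjoint size-$|\mathcal{I}|$ subfamilies, which gives honest independence but a weaker combinatorial factor, or (b) invoke a positive-association (FKG-type) argument on the underlying independent phase samples to show that the indicator of any single realised-bijection event is a monotone function of the samples and that the complements satisfy $\Pr[\bigcap_j \bar{E}_j] \leq \prod_j \Pr[\bar{E}_j]$. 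Making this rigorous is where the probabilistic heart of the lemma sits; the structural and regret-bookkeeping parts above are comparatively routine.
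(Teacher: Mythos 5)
Your proposal follows the paper's general template (worst case in which the whole game must be reconstructed, count length-$|\mathcal{I}|$ subsequences of good phases, apply a $1-(1-p)^N$ formula), but the event you bound is not the one the lemma needs, and the step you flag as the obstacle is left open. First, the paper's worst case is \emph{sequential}: the information sets must be disconnected one by one in a fixed order, because whether a given $I$ exhibits immediate regret above $L^{T_j}_I$ (and hence is split when sampled, line~\ref{alg:abstr_update_bound:condition} of Algorithm~\ref{alg:abstr_update_bound}) may only become true after earlier splits have changed the dynamics. Your unordered coverage event does not capture this: in the second case of your structural claim, the fact that the test did not fire only tells you the regret over that one window $\{T_j+1,\dots,T_{j+1}\}$ stayed below $L^{T_j}_I(T_{j+1})$; it says nothing about later phases, so ``every $I$ sampled once in a good phase'' does not certify that the final structure supports regret $\bar{R}_i$. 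The relevant event is the existence of an \emph{increasing} subsequence of good phases realizing a prescribed order of sets, which is exactly what the $\binom{|\uptau^T_{\bar{R}_i}|}{|\mathcal{I}|}$ in \eqref{eq:pr} counts. Second, your conclusion $\frac{R_i^T}{T}\le\sum_{I}(R^T_{i,imm}(I))^+\le\bar{R}_i$ overreaches: the cumulative immediate regrets at $T$ still contain the contribution of the iterations before a set was disconnected (the stored regrets are reset, but the external-regret accounting in \eqref{eq:bound1} is not), which is precisely why the paper keeps Lemma~\ref{lemma:p_r} purely structural (``allows representation'') and defers the achieved-regret statement to Lemma~\ref{lemma:ircfr_bound}, where the extra $\Delta|\mathcal{I}_i|T_{lim}$ term absorbs the pre-split iterations.

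On the probabilistic step you are right that the non-independence of the subsequence events is the crux, but neither proposed repair yields the formula in \eqref{eq:pr}: a disjoint packing replaces the exponent $\binom{|\uptau^T_{\bar{R}_i}|}{|\mathcal{I}|}$ by roughly $\lfloor|\uptau^T_{\bar{R}_i}|/|\mathcal{I}|\rfloor$, so it proves a weaker statement than the lemma; and the FKG/Harris route goes the wrong way, since the events $E_J$ are increasing functions of the independent phase samples, hence their complements are decreasing and positively associated, giving $\Pr[\bigcap_J\bar{E}_J]\ge\prod_J\Pr[\bar{E}_J]$ -- the reverse of the inequality you need. For completeness, the paper's own proof does not close this point either: it simply treats the $\binom{|\uptau^T_{\bar{R}_i}|}{|\mathcal{I}|}$ subsequence choices as independent ``attempts,'' and a small example ($|\mathcal{I}|=2$, $k_b=1$, a handful of good phases) shows that $1-\bigl(1-(k_b/|\mathcal{I}|)^{|\mathcal{I}|}\bigr)^{\binom{n}{|\mathcal{I}|}}$ can exceed the true probability of the ordered event. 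So you have correctly located the soft spot of the argument, but as written your proposal neither reproduces the paper's reasoning for the structural part nor establishes the claimed probability bound.
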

\begin{proof}
$p_r$ is computed for the worst case where all the $I \in \mathcal{I}$ are in some abstracted information set in $G^1$, and where it is necessary to reconstruct the complete original game by removing all $I \in \mathcal{I}$ from their abstracted information sets one by one in a fixed order to allow representation of the average strategy with average external regret $\bar{R}_i$ for player $i$. 
The rest of the proof is conducted in the following way: First, we show that the iteration counts in $\uptau^T_{\bar{R}_i}$ are large enough to guarantee that if there is an abstracted information set preventing representation of average strategies with average external regret bellow $\bar{R}_i$, it will be split. We then provide the probability that the information set structure in a given iteration allows \IRCFR to compute strategies with average external regret $\bar{R}_i$ for player $i$ as a function of the number of iteration counts in $\uptau_{\bar{R}_i}^T$.

We know that 

{\small
\begin{equation}
\bar{R}_i^T \leq\sum_{I\in {\cal I}_i}\left(R_{i, imm}^{T}(I)\right)^+ \leq |\mathcal{I}_i|\max_{I\in {\cal I}_i}\left(R_{i, imm}^{T}(I)\right)^+, \quad\forall T. \label{eq:pr_proof_bound}
\end{equation}}

Let us assume that the information set structure of the current abstraction $G^T$ in iteration $T$ does not allow representation of average strategy with average external regret $\bar{R}_i$ and that we do not update $G^T$ any further. Then, from eq. \eqref{eq:pr_proof_bound}, there must exist $I \in \mathcal{I}_i$ such that for each iterations $T', T''$, $T'' > T'$

{\small
\begin{equation}
\left(R_{T', imm}^{T''}(I)\right)^+ > \frac{\bar{R}_i}{|\mathcal{I}_i|}. \label{eq:pr_proof_bound1}
\end{equation}}

To remove an information set $I$ from its abstracted information set as a part of the regret bound abstraction update during the sequence of iterations $(T_j, ..., T_{j+1})$, there must exists $T' \in (T_j, ..., T_{j+1})$ such that $R^{T'}_{T_j, imm}(I) > L^{T_j}_I(T')$. Therefore, from eq. \eqref{eq:pr_proof_bound1} follows, that to guarantee that $I$ preventing convergence is removed from its abstracted information set during $(T_j,..., T_{j+1})$, it needs to hold that $L_I^{T_{j}}(T_{j+1}) < \frac{\bar{R}_i}{|\mathcal{I}_i|}$. Hence, from eq. \eqref{eq:l_bound} in definition of $\uptau^T_{\bar{R}_i}$ follows that $\uptau^T_{\bar{R}_i}$ contains only the iteration counts that guarantee that such information set is removed from its abstracted information set.

Now we turn to the formula in eq. \eqref{eq:pr}.
 The first case of the piecewise function in eq. \eqref{eq:pr} handles the situation where there are not enough iteration counts $\uptau_j \in \uptau_{\bar{R}_i}^T$ large enough to guarantee the required $|\mathcal{I}|$ splits. The second case computes the probability that given $|\uptau^T_{\bar{R}_i}|$ samples we correctly sample the required sequence of the length $|\mathcal{I}|$. The second case has the following intuition. $\binom{|\uptau^T_{\bar{R}_i}|}{|\mathcal{I}|}$ is the number of possibilities how to choose a subsequence of the length $|\mathcal{I}|$ from a sequence of length $|\uptau^T_{\bar{R}_i}|$. $\left(\frac{k_b}{|\mathcal{I}|}\right)^{|\mathcal{I}|}$ is the probability that a specific sequence of all information sets, i.e., a sequence of length $|\mathcal{I}|$, is sampled when we sample $k_b$ from $|\mathcal{I}|$ elements with a uniform probability. Hence, $\left(1 - \left(\frac{k_b}{|\mathcal{I}|}\right)^{|\mathcal{I}|}\right)^{\binom{|\uptau^T_{\bar{R}_i}|}{|\mathcal{I}|}}$ is the probability that the sequence of length $|\mathcal{I}|$ is not sampled in $\binom{|\uptau^T_{\bar{R}_i}|}{|\mathcal{I}|}$ attempts.

Since $p_r(T, \bar{R}_i)$ is computed assuming that there is the worst case number of splits necessary and that it takes the maximum possible number of iterations to split each information set, it is a lower bound on the actual probability that the abstraction in iteration $T$ allows representation of average strategy with average external regret $\bar{R}_i$ for player $i$.
\end{proof}

\begin{lemma}\label{lemma:ircfr_bound}
The average external regret of the \IRCFR is bounded in the following way

{\small
\begin{equation}
\frac{R_i^T}{T}  \leq B_i(T) = \frac{1}{T}\left(\Delta |\mathcal{I}_i|\sqrt{A_{max}}\sqrt{T} +  \Delta |\mathcal{I}_i| T_{lim}\right), \label{eq:regr}
\end{equation}}

with probability at least $p_r(T_{lim}, B_i(T))$.
\end{lemma}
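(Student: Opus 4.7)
The plan is to split the sum defining $R_i^T$ at iteration $T_{lim}$, bound each piece separately by a different technique, and fold the probabilistic guarantee of Lemma \ref{lemma:p_r} into the final statement. Writing
\[
R_i^T \;=\; \max_{b_i'\in\mathcal{B}_i}\sum_{t=1}^{T}\!\bigl(u_i(b_i',b_{-i}^t)-u_i(b_i^t,b_{-i}^t)\bigr) \;\le\; A + B,
\]
with $A$ collecting the first $T_{lim}$ summands and $B$ the remaining $T-T_{lim}$ summands, I would first bound $A$ by the trivial per-iteration inequality $|u_i(b_i',b_{-i}^t)-u_i(b_i^t,b_{-i}^t)|\le\Delta$ to obtain $A\le\Delta T_{lim}\le\Delta|\mathcal{I}_i|T_{lim}$, already matching the second summand of $B_i(T)$.

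For $B$ I would instantiate Lemma \ref{lemma:p_r} with $\bar R_i := B_i(T)$: with probability at least $p_r(T_{lim},B_i(T))$ the abstraction $G^{T_{lim}}$ contains enough information-set distinctions for player $i$'s strategy space to realize average external regret at most $B_i(T)$. By the definition of $\uptau^T_{B_i(T)}$ and the splitting criterion in Algorithm \ref{alg:abstr_update_bound}, this event coincides with the event that every information set whose continued aggregation within its abstracted parent would force the immediate counterfactual regret above $B_i(T)/|\mathcal{I}_i|$ has been disconnected by iteration $T_{lim}$. Conditional on that event, the perfect-recall inequality in eq. \eqref{eq:bound1} together with the regret-matching$^+$ per-information-set bound from eq. \eqref{eq:cfr_is_regret_bound} applies to the tail run of length $T-T_{lim}$, yielding $B\le\Delta|\mathcal{I}_i|\sqrt{A_{max}}\sqrt{T-T_{lim}}\le\Delta|\mathcal{I}_i|\sqrt{A_{max}}\sqrt{T}$. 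Summing $A+B$ and dividing by $T$ reproduces $B_i(T)$ on the right, and the probability of the whole statement is inherited from the single application of Lemma \ref{lemma:p_r}.

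The main obstacle will be the translation performed in the second paragraph: Lemma \ref{lemma:p_r} is phrased abstractly --- "the abstraction allows representation of a strategy with regret $\bar R_i$" --- whereas eq. \eqref{eq:bound1} is a structural statement about CFR$^+$ that presupposes perfect recall of the information sets the algorithm actually touches. I would close the gap by arguing that, once all $I\in\mathcal{I}_i$ for which the per-set target $B_i(T)/|\mathcal{I}_i|$ could be exceeded have been separated from their abstracted parents --- which is exactly the content of the condition $L_I^{T_j}(T_{j+1})<\bar R_i/|\mathcal{I}_i|$ defining $\uptau^T_{\bar R_i}$ --- any residual imperfect recall involves only information sets whose contribution to $\sum_I(R_{i,imm}^T(I))^+$ is already below the target, and further splits performed after $T_{lim}$ can only shrink that contribution. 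A secondary subtlety is that freshly split information sets inherit zero regret tables (the reset step of Algorithm \ref{alg:abstr_update_bound}), so the regret-matching$^+$ immediate-regret bound $R_{i,imm}^T(I)\le\Delta\sqrt{|\mathcal{A}(I)|}/\sqrt{T-T_{lim}}$ applies cleanly from the split onwards, which is what keeps the tail bound at $\sqrt{T-T_{lim}}$ rather than degrading as the abstraction is further refined between $T_{lim}$ and $T$.
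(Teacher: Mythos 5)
Your proposal is correct and follows essentially the same route as the paper: the decomposition at $T_{lim}$ into a ``bad'' phase bounded trivially by $\Delta$ per iteration and a tail phase bounded by the regret-matching$^+$ guarantee of eq.~\eqref{eq:cfr_is_regret_bound} via eq.~\eqref{eq:bound1}, with the probability inherited from a single invocation of Lemma~\ref{lemma:p_r} at $\bar R_i = B_i(T)$. The only (cosmetic) difference is that you split the external-regret sum in time before summing over information sets, whereas the paper bounds each per-information-set counterfactual regret by $\Delta T_{lim} + \Delta\sqrt{|\mathcal{A}(I)|}\sqrt{T}$ and then applies eq.~\eqref{eq:bound1} once; note only that a set split at $t_{split} < T_{lim}$ gives a tail bound $\sqrt{T - t_{split}}$ rather than $\sqrt{T-T_{lim}}$, which your final relaxation to $\sqrt{T}$ already absorbs.
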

\begin{proof}
$B_i(T)$ decomposes the bound on the average external regret to two parts. First, in iterations $\{1, ..., T_{lim}\}$ it assumes that the structure of the abstraction prevents the algorithm from convergence. Second, it assumes that the abstraction is updated so that it allows convergence of CFR+ in iterations $\{T_{lim} + 1, ..., T\}$. Hence, the regret in all $I \in \mathcal{I}_i$ has the following property:

{\small
\begin{equation}
\max_{a\in \mathcal{A}(I)}R_i^T(I, a) \leq \Delta T_{lim} + \Delta \sqrt{|\mathcal{A}(I)|} \sqrt{T}.\label{eq:ircfr_proof_I}
\end{equation}}

This bound holds since $\Delta \sqrt{|\mathcal{A}(I)|} \sqrt{T}$ is the bound on $\max_{a\in \mathcal{A}(I)}R_i^T(I, a)$ in $I$ provided by regret matching$^+$. Additionally, since the regret matching$^+$ in $I$ in each $t \in \{1, ..., T_{lim}\}$ uses regrets computed for information set $\phi_t(I)$ and not directly for $I$, it can lead to arbitrarily bad outcomes with respect to the utility structure of the solved game (as we assume that the structure of the abstraction prevents convergence of the algorithm in these iterations). The $\Delta T_{lim}$ corresponds to the worst case regret that can be accumulated for each action in $I$ during the first $T_{lim}$ iterations.
From eqs. \eqref{eq:ircfr_proof_I} and \eqref{eq:bound1} follows that 

{\small
\begin{equation}
\frac{R_i^T}{T}  \leq B_i(T) = \frac{1}{T}\left(\Delta |\mathcal{I}_i|\sqrt{A_{max}}\sqrt{T} +  \Delta |\mathcal{I}_i| T_{lim}\right).\nonumber
\end{equation}}

Since we assume that the abstraction in iteration $T_{lim}$ of \IRCFR allows computing the regret $B_i(T)$, this bound holds with probability at least $p_r(T_{lim}, B_i(T))$ (Lemma \ref{lemma:p_r}).
 \end{proof}
 
Finally, we provide the bound on the external regret of \IRCFR as a function of the probability that the bound holds.

\begin{theorem}\label{thm:bound}
Let 

{\small
\begin{equation}
\alpha(\delta) =  \sqrt[3]{2^{\sqrt[|\mathcal{I}|]{|\mathcal{I}|!\log_{1-\left(\frac{k_b}{|\mathcal{I}|}\right) ^{|\mathcal{I}|}}(\delta)}+ |\mathcal{I}| + 3}  A_{max}}.
\end{equation}}

In each $T \geq (\alpha(\delta) + 1)^3$, the average external regret of \IRCFR is bounded in the following way

{\small
\begin{equation}
\frac{R_i^T}{T} \leq \Delta |\mathcal{I}_i|\frac{\sqrt{A_{max}}}{\sqrt{T}} + \Delta |\mathcal{I}_i|\frac{\alpha(\delta)}{\sqrt[3]{T}} + \Delta |\mathcal{I}_i|\frac{1}{T}\in \mathcal{O}\left(\frac{1}{\sqrt[3]{T}}\right).\label{eq:thm_bound}
\end{equation}
}

with probability $1 - \delta$.
\end{theorem}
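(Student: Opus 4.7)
The plan is to apply Lemma \ref{lemma:ircfr_bound} with a value of $T_{lim}$ chosen as a function of both $T$ and $\delta$, and then verify that the resulting probability $p_r(T_{lim}, B_i(T))$ is at least $1-\delta$. Concretely, I would take $T_{lim} = \lceil \alpha(\delta)\,T^{2/3}\rceil$, so that $T_{lim} \leq \alpha(\delta)\,T^{2/3}+1$. Substituting into Lemma \ref{lemma:ircfr_bound} yields
\begin{equation*}
\frac{R_i^T}{T} \leq B_i(T) \leq \Delta|\mathcal{I}_i|\frac{\sqrt{A_{max}}}{\sqrt{T}} + \Delta|\mathcal{I}_i|\frac{\alpha(\delta)}{\sqrt[3]{T}} + \Delta|\mathcal{I}_i|\frac{1}{T},
\end{equation*}
which is exactly the inequality claimed in eq.~\eqref{eq:thm_bound}. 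The assumption $T \geq (\alpha(\delta)+1)^3$ is invoked here precisely to guarantee $T_{lim} \leq T$, so that the ``bad prefix'' of $T_{lim}$ iterations assumed by Lemma \ref{lemma:ircfr_bound} actually fits inside the horizon; the asymptotic rate $\mathcal{O}(T^{-1/3})$ is then read off from the middle term.

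It remains to show $p_r(T_{lim}, B_i(T)) \geq 1-\delta$. Set $\beta = \sqrt[|\mathcal{I}|]{|\mathcal{I}|!\,\log_{1-(k_b/|\mathcal{I}|)^{|\mathcal{I}|}}(\delta)}$, so that unfolding the definition of $\alpha$ gives the key identity $\alpha(\delta)^3 = 2^{\beta+|\mathcal{I}|+3}\,A_{max}$. Combining the formula for $p_r$ in eq.~\eqref{eq:pr} with the binomial lower bound $\binom{n}{|\mathcal{I}|} \geq (n-|\mathcal{I}|+1)^{|\mathcal{I}|}/|\mathcal{I}|!$, the probability bound reduces to proving $|\uptau^{T_{lim}}_{B_i(T)}| \geq \beta + |\mathcal{I}| - 1$. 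A given $\uptau_j = 2^j$ belongs to $\uptau^{T_{lim}}_{B_i(T)}$ provided both $2^j \leq T_{lim}$ and $\Delta\sqrt{A_{max}}/\sqrt{2^j} < B_i(T)/|\mathcal{I}_i|$. Using the lower bound $B_i(T)/|\mathcal{I}_i| \geq \Delta\,\alpha(\delta)/\sqrt[3]{T}$ together with the cubic identity above, the second condition simplifies to $2^j > \alpha(\delta)\,T^{2/3}/2^{\beta+|\mathcal{I}|+3}$, essentially $T_{lim}/2^{\beta+|\mathcal{I}|+3}$. Taking $\log_2$ of both sides of each condition, the admissible integer range for $j$ has length at least $\beta + |\mathcal{I}| + 3$, which comfortably exceeds the required $\beta + |\mathcal{I}| - 1$, giving the desired count.

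The main obstacle I anticipate is the careful constant-tracking needed to align (i) the rounding introduced by $T_{lim} = \lceil \alpha(\delta)\,T^{2/3}\rceil$, (ii) the slack in the binomial lower bound (where $\beta$ need not be an integer), and (iii) the cubic relationship hidden inside $\alpha(\delta)$, so that the $\log_2$-length of the admissible $j$-range truly contains at least $\beta + |\mathcal{I}| - 1$ integers. Once these bookkeeping steps are in place, Lemma \ref{lemma:p_r} and eq.~\eqref{eq:pr} give $p_r(T_{lim}, B_i(T)) \geq 1-\delta$, and chaining this with Lemma \ref{lemma:ircfr_bound} immediately yields the stated regret bound with the required probability.
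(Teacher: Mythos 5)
Your proposal is correct and follows essentially the same route as the paper: you pick the same $T_{lim} = \lceil \alpha(\delta)\,T^{2/3}\rceil$, plug it into Lemma \ref{lemma:ircfr_bound}, and establish $p_r(T_{lim}, B_i(T)) \geq 1-\delta$ by counting how many of the doubling-length sampling blocks before $T_{lim}$ are long enough, exploiting the cubic identity built into $\alpha(\delta)$. Your inline estimates (binomial lower bound, dropping the $\sqrt{A_{max}}\sqrt{T}$ term, $\log_2$ range-length count) are exactly the content of the paper's Lemmas \ref{lemma:taur}, \ref{lemma:tbt}, and \ref{lemma:Tlim}, and the extra factor $2^{3}$ inside $\alpha(\delta)$ indeed absorbs the rounding slack you flag.
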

\begin{proof}
The bound in eq. \eqref{eq:thm_bound} is created by substituting 

{\small
\begin{equation}
T_{lim} =  \raisebox{1.8ex}{\Bigg\lceil}\sqrt[3]{2^{\sqrt[|\mathcal{I}|]{|\mathcal{I}|!\log_{1-\left(\frac{k_b}{|\mathcal{I}|}\right) ^{|\mathcal{I}|}}(\delta)}+ |\mathcal{I}| + 3} A_{max}}\cdot T^{\frac{2}{3}}\raisebox{1.8ex}{\Bigg\rceil},
\end{equation}}

to the bound from Lemma \ref{lemma:ircfr_bound}.
Hence, we need to show that choosing this $T_{lim}$ guarantees that 

{\small
\begin{equation}
p_r(T_{lim}, B_i(T)) \geq 1 - \delta.\label{eq:prob_bound}
\end{equation}}

The proof is conducted in the following way.
First, in Lemma \ref{lemma:taur} we show the size of $\uptau^{T_{lim}}_{B_i(T)}$ sufficient to guarantee that inequality \eqref{eq:prob_bound} holds. In Lemma \ref{lemma:tbt} we derive the lower bound $T_{B_i(T)}$ on each element of $\uptau^{T_{lim}}_{B_i(T)}$. $T_{B_i(T)}$ is the number of iterations sufficient to guarantee that some information set preventing the abstraction from allowing representation of average strategy with average external regret $B_i(T)$ is split during the regret bound abstraction update. Finally, in Lemma \ref{lemma:Tlim} we derive the $T_{lim}$ that implies sufficient number of elements in $\uptau^{T_{lim}}_{B_i(T)}$.
\begin{lemma}\label{lemma:taur}
{\small
\begin{equation}
|\uptau^{T_{lim}}_{B_i(T)}| > \sqrt[|\mathcal{I}|]{|\mathcal{I}|!\log_{1 - \left(\frac{k_b}{\mathcal{I}} \right)^{|\mathcal{I}|}} (\delta)} + |\mathcal{I}|
\end{equation}}

guarantees that $p_r(T_{lim}, B_i(T)) \geq 1 - \delta$.
\end{lemma}
\begin{proof}
{\small
 \begin{align}
&&|\uptau^{T_{lim}}_{B_i(T)}| &>  \sqrt[|\mathcal{I}|]{|\mathcal{I}|!\log_{1 - \left(\frac{k_b}{\mathcal{I}} \right)^{|\mathcal{I}|}} (\delta)} + |\mathcal{I}|\\
&\implies&\left(|\uptau^{T_{lim}}_{B_i(T)}| - |\mathcal{I}|\right)^{|\mathcal{I}|} &> |\mathcal{I}|!\log_{1 - \left(\frac{k_b}{\mathcal{I}} \right)^{|\mathcal{I}|}} (\delta)\\
&\implies&\prod_{j = |\uptau^{T_{lim}}_{B_i(T)}|-|\mathcal{I}| + 1}^{|\uptau^{T_{lim}}_{B_i(T)}|}j &>  |\mathcal{I}|!\log_{1 - \left(\frac{k_b}{\mathcal{I}} \right)^{|\mathcal{I}|}} (\delta)\\
&\implies&\frac{\prod_{j = |\uptau^{T_{lim}}_{B_i(T)}|-|\mathcal{I}| + 1}^{|\uptau^{T_{lim}}_{B_i(T)}|}j}{|\mathcal{I}|!} &>  \log_{1 - \left(\frac{k_b}{\mathcal{I}} \right)^{|\mathcal{I}|}} (\delta)\\
&\implies&\frac{|\uptau^{T_{lim}}_{B_i(T)}|!}{(|\uptau^{T_{lim}}_{B_i(T)}| - |\mathcal{I}|)!|\mathcal{I}|!} &> \log_{1 - \left(\frac{k_b}{\mathcal{I}} \right)^{|\mathcal{I}|}} (\delta)\\
&\implies&\binom{|\uptau^{T_{lim}}_{B_i(T)}|}{|\mathcal{I}|} &> \log_{1 - \left(\frac{k_b}{\mathcal{I}} \right)^{|\mathcal{I}|}} (\delta)\\
&\implies&\left(1 - \left(\frac{k_b}{\mathcal{I}} \right)^{|\mathcal{I}|}\right)^{\binom{|\uptau^{T_{lim}}_{B_i(T)}|}{|\mathcal{I}|}} &< \delta\\
&\implies&1 - \left(1 - \left(\frac{k_b}{\mathcal{I}} \right)^{|\mathcal{I}|}\right)^{\binom{|\uptau^{T_{lim}}_{B_i(T)}|}{|\mathcal{I}|}} &> 1 - \delta
\end{align}}
\end{proof}

\begin{lemma}\label{lemma:tbt}
{\small
\begin{equation}
T_{B_i(T)} = \left\lceil \left[\frac{T\sqrt{A_{max}}}{T_{lim}+\sqrt{A_{max}}\sqrt{T}}\right]^2\right\rceil\nonumber
\end{equation}}

is a sufficient number of iterations to guarantee that the regret bound abstraction update splits some information set preventing the abstraction from allowing representation of average strategy with average external regret $B_i(T)$.
\end{lemma}
\begin{proof}
The regret bound abstraction update removes an information set $I \in \mathcal{I}$ from its abstracted information set during the sequence of iterations $(T_j, ..., T_{j+1})$, when there exists $T' \in (T_j, ..., T_{j+1})$ such that $R^{T'}_{T_j, imm}(I) > L^{T_j}_I(T')$. As discussed in the proof of Lemma \ref{lemma:p_r}, to guarantee that $I$ is removed from its abstracted information set if it prevents the regret bellow $B_i(T)$, we need to make sure that 

{\small 
\begin{equation}
L_I^{T_{j}}(T_{j+1}) < \frac{B_i(T)}{|\mathcal{I}_i|}.\label{eq:T_R_cond}
\end{equation}}

 Since

 {\small
 $$L^{T_j}_I(T') \leq\frac{\Delta_I\sqrt{|\mathcal{A}(I)|}}{\sqrt{T' - T_j}}, \quad\forall T' \in \{T_j+1, ..., T_{j+1}\}, \forall I \in \mathcal{I}_i,$$}
 it follows that

{\small
\begin{equation}
\max_{I \in \mathcal{I}_i}L_I^{T_{j}}(T_{j+1}) \leq \frac{\Delta\sqrt{A_{max}}}{\sqrt{T_{j+1} - T_j}}.\label{eq:T_R_cond1}
\end{equation}}
And so from eqs. \eqref{eq:T_R_cond} and \eqref{eq:T_R_cond1}, it is sufficient for $T_{B_i(T)}$ to satisfy

{\small 
$$\frac{\Delta\sqrt{A_{max}}}{\sqrt{T_{B_i(T)}}} < \frac{B_i(T)}{|\mathcal{I}_i|}.$$}

Smallest $T_{B_i(T)}$ satisfying this inequality is

{\small
\begin{equation}
T_{B_i(T)} = \left\lceil \left[\frac{T\sqrt{A_{max}}}{T_{lim}+\sqrt{A_{max}}\sqrt{T}}\right]^2\right\rceil.
\end{equation}}
\end{proof}

Hence, all elements in $\uptau^{T_{lim}}_{B_i(T)}$ must be greater or equal to $T_{B_i(T)}$ to make sure that the structure of the abstraction in iteration $T_{lim}$ allows representation of average strategy with average external regret $B_i(T)$ for player $i$ with sufficient probability.

The number of elements in $\uptau^{T_{lim}}$ is at least $\log_2(T_{lim} + 1) - 1$, since $\forall j \in \{1, ..., | \uptau^{T_{lim}}|\}\  \uptau_j = 2^{j-1}$. From Lemma \ref{lemma:taur} we know that we need the last 

{\small
$$
\left\lceil \sqrt[|\mathcal{I}|]{|\mathcal{I}|!\log_{1 - \left(\frac{k_b}{\mathcal{I}} \right)^{|\mathcal{I}|}} (\delta)} + |\mathcal{I}| \right\rceil
$$}

elements of $\uptau^{T_{lim}}$ (which form $\uptau^{T_{lim}}_{B_i(T)}$) to be higher or equall to $T_{B_i(T)}$.

\begin{lemma}\label{lemma:Tlim}
When using 

{\small
$$
T_{lim} =  \raisebox{1.8ex}{\Bigg\lceil}\sqrt[3]{2^{\sqrt[|\mathcal{I}|]{|\mathcal{I}|!\log_{1-\left(\frac{k_b}{|\mathcal{I}|}\right) ^{|\mathcal{I}|}}(\delta)}+ |\mathcal{I}| + 3} A_{max}}\cdot T^{\frac{2}{3}}\raisebox{1.8ex}{\Bigg\rceil},
$$}

the last 

{\small
$$
\left\lceil\sqrt[|\mathcal{I}|]{|\mathcal{I}|!\log_{1 - \left(\frac{k_b}{\mathcal{I}} \right)^{|\mathcal{I}|}} (\delta)} + |\mathcal{I}|\right\rceil
$$}

elements in $\uptau^{T_{lim}}$ are higher or equal to $T_{B_i(T)}$. 
\end{lemma}
\begin{proof}
{\small
\begin{align}
&\quad &&T_{lim} = \raisebox{1.8ex}{\Bigg\lceil}\sqrt[3]{2^{\sqrt[|\mathcal{I}|]{|\mathcal{I}|!\log_{1-\left(\frac{k_b}{|\mathcal{I}|}\right) ^{|\mathcal{I}|}}(\delta)}+ |\mathcal{I}| + 3} A_{max}}\cdot T^{\frac{2}{3}}\raisebox{1.8ex}{\Bigg\rceil}\\
&\implies &&T_{lim} \geq \sqrt[3]{2^{\sqrt[|\mathcal{I}|]{|\mathcal{I}|!\log_{1-\left(\frac{k_b}{|\mathcal{I}|}\right) ^{|\mathcal{I}|}}(\delta)}+ |\mathcal{I}| + 3}  A_{max}} \cdot T^{\frac{2}{3}}\\
&\implies &&T_{lim}^3 \geq 2\cdot2^{\sqrt[|\mathcal{I}|]{|\mathcal{I}|!\log_{1 - \left(\frac{k_b}{\mathcal{I}} \right)^{|\mathcal{I}| + 1}} (\delta)} + |\mathcal{I}| + 2}A_{max}T^2\\
&\implies &&(T_{lim} +1)\left(T_{lim} + \sqrt{A_{max}}\sqrt{T}\right)^2  \geq\\
&\quad &&\qquad\geq2^{\sqrt[|\mathcal{I}|]{|\mathcal{I}|!\log_{1 - \left(\frac{k_b}{\mathcal{I}} \right)^{|\mathcal{I}|}} (\delta)} + |\mathcal{I}| +2} 2A_{max}T^2\\
&\implies &&\frac{T_{lim} + 1}{2^{\sqrt[|\mathcal{I}|]{|\mathcal{I}|!\log_{1 - \left(\frac{k_b}{\mathcal{I}} \right)^{|\mathcal{I}|}} (\delta)} + |\mathcal{I}| + 2}} \geq \frac{2A_{max}T^2}{\left(T_{lim} + \sqrt{A_{max}}\sqrt{T}\right)^2} \\
&\implies &&\frac{T_{lim} + 1}{2^{\sqrt[|\mathcal{I}|]{|\mathcal{I}|!\log_{1 - \left(\frac{k_b}{\mathcal{I}} \right)^{|\mathcal{I}|}} (\delta)} + |\mathcal{I}| + 2}} \geq \left\lceil\frac{A_{max}T^2}{\left(T_{lim} + \sqrt{A_{max}}\sqrt{T}\right)^2}\right\rceil\\
&\implies &&2^{\log_2(T_{lim} +1) -2 - \sqrt[|\mathcal{I}|]{|\mathcal{I}|!\log_{1 - \left(\frac{k_b}{\mathcal{I}} \right)^{|\mathcal{I}|}} (\delta)} - |\mathcal{I}|} \geq T_{B_i(T)}\\
&\implies &&2^{\log_2(T_{lim} +1) -1 - \left\lceil\sqrt[|\mathcal{I}|]{|\mathcal{I}|!\log_{1 - \left(\frac{k_b}{\mathcal{I}} \right)^{|\mathcal{I}|}} (\delta)} + |\mathcal{I}|\right\rceil} \geq T_{B_i(T)}\label{eq:proof_finish}
\end{align}}

Eq. \eqref{eq:proof_finish} states that last

{\small $$
\left\lceil \sqrt[|\mathcal{I}|]{|\mathcal{I}|!\log_{1 - \left(\frac{k_b}{\mathcal{I}} \right)^{|\mathcal{I}|}} (\delta)} + |\mathcal{I}|\right\rceil
$$}

elements of $\uptau^{T_{lim}}$ are at least $T_{B_i(T)}$, since the elements in $\uptau^{T_{lim}}$ are increasing.
\end{proof}

Hence, using $T_{lim}$ from Lemma \ref{lemma:Tlim} guarantees that $p_r(T_{lim}, B_i(T)) > 1-\delta$ from Lemma \ref{lemma:taur}.
When substituting this $T_{lim}$ to the bound from Lemma \ref{lemma:ircfr_bound}, i.e., to 

{\small
\begin{equation}
\frac{R_i^T}{T}  \leq \frac{1}{T}\left( \Delta |\mathcal{I}_i|\sqrt{A_{max}}\sqrt{T}  +\Delta |\mathcal{I}_i|T_{lim}\right),
\end{equation}}

we get

{\small
\begin{align}
\frac{R_i^T}{T} &\leq \frac{1}{T}\left( \Delta |\mathcal{I}_i|\sqrt{A_{max}}\sqrt{T}  +\Delta |\mathcal{I}_i|\left\lceil\alpha(\delta)T^{\frac{2}{3}} \right\rceil \right)\\ 
&\leq \Delta |\mathcal{I}_i|\frac{\sqrt{A_{max}}}{\sqrt{T}} + \Delta|\mathcal{I}_i|\frac{\alpha(\delta)}{\sqrt[3]{T}} + \Delta|\mathcal{I}_i|\frac{1}{T} \\
&\in \mathcal{O}\left(\frac{1}{\sqrt[3]{T}}\right).
\end{align}}

Finally, we need to show that using $T \geq (\alpha(\delta) + 1)^3$ guarantees that $T\geq T_{lim}$:

{\small
\begin{align}
&&T &\geq (\alpha(\delta) + 1)^3\\
&\implies &\sqrt[3]{T} &\geq (\alpha(\delta) + 1)\\
&\implies &T &\geq (\alpha(\delta) + 1)T^{\frac{2}{3}}\\
&\implies &T &\geq \left\lceil\alpha(\delta)T^{\frac{2}{3}}\right\rceil\\
&\implies &T &\geq T_{lim}.
\end{align}}
\end{proof}

\noindent\textbf{Combining Regret Bound and Heuristic Abstraction Update.}

When using the heuristic abstraction update in combination with the regret bound update, Theorem \ref{thm:bound} still holds, since in the worst case the heuristic update does not perform any splits and all the information set splits need to be performed by the regret bound update. Adding the heuristic abstraction update can only reduce the number of iterations needed by the algorithm.

\noindent\textbf{Counterexample for Heuristic Abstraction Update.}
\begin{table}[t]
\centering
\caption{Expected utilities after actions in $I'$ and $I''$
\label{tab:utils}}{
\begin{tabular}{l|c|c}
 & $I'$ & $I''$ \\
\hline 
$c$ & 0 & 0\\
$d$ & 1  & 10\\
\end{tabular}
\hspace{0.5cm}
\begin{tabular}{l|c|c}
 & $I'$ & $I''$ \\
\hline 
$c$ & 10 & 1\\
$d$ & 0  & 0\\
\end{tabular}
}
\end{table}

Here we show the intuition why using only the heuristic abstraction update does not guarantee convergence of \IRCFR to the NE of the solved game $G$.

Let $I$ be an information set in $G^t$ such that $\phi^{-1}_t(I) = \{I', I''\}$. Let $\mathcal{A}^t(I) = \{c, d\}$ and $\mathcal{A}(I') = \{x, y\}$, $\mathcal{A}(I'') = \{v, w\}$. Let us assume that the expected values for actions $c$ and $d$ in $I'$ and $I''$ oscillate between values depicted in Table \ref{tab:utils} (left) and (right). Hence, when computing the regret $r^h_{\hat{I}}(a) = v_i(b^t_{\hat{I} \rightarrow a}, \hat{I}) - v_i(b^t, \hat{I})$ for $\hat{I} \in \{I', I''\}$ and $a \in \mathcal{A}(\hat{I})$ during the heuristic abstraction update in iteration $t$ where the expected utilities from Table \ref{tab:utils}(left) occur, we get

{\small
\begin{align}
r^h_{I'}(c) &= - b^t_i(I, d)\\
r^h_{I'}(d) &= 1 - b^t_i(I, d)\\
r^h_{I''}(c) &= - 10b^t_i(I, d)\\
r^h_{I''}(d) &= 10-10b^t_i(I, d).
\end{align}}

When computing the regret $r^h$ for heuristic update in iteration $t'$ where the expected utilities from Table \ref{tab:utils} (right) occur, we get

{\small
\begin{align}
r^h_{I'}(c) &= 10-10b^{t'}_i(I, c)\\
r^h_{I'}(d) &= -10b^{t'}_i(I, c)\\
r^h_{I''}(c) &= 1 - b^{t'}_i(I, c)\\
r^h_{I''}(d) &= -b^{t'}_i(I, c).
\end{align}}

The actions corresponding to $d$ are always preferred in both $I'$ and $I''$ when the utilities are given by Table \ref{tab:utils} (left) by the same margin as the actions corresponding to $c$ are always prefer in both $I'$ and $I''$ when the utilities are given by Table \ref{tab:utils} (right). Hence, $\gamma_{I'} = \gamma_{I''}$ (i.e., the indices of actions with the largest regret are always equal) in all iterations $t''$ for any $b_i^{t''}$. Therefore, $I$ is never split during the heuristic update. However, \IRCFR in $I$ converges to a uniform strategy with the average expected value $2.75$ in both $I'$ and $I''$ since the regret updates lead to equal regrets for both $c$ and $d$, while CFR+ converges to strategy $b_i(I', x) = 1$ and $b_i(I'', w) = 1$ with the average expected value $5$ in both $I'$ and $I''$.

\subsubsection{Storing the Information Set Map}\label{sec:ircfr_ismap}
In this section we discuss the details of storing the mapping of information sets $\mathcal{I}$ of $G$ to $\mathcal{I}^t$ of $G^t$ in \IRCFR. 

\textbf{Initial Abstraction Storage.} The initial abstraction is identical to the initial abstraction used by FPIRA, hence the mapping is stored without using any additional memory as described in Section \ref{sec:fpira_abstraction_storage}.

\textbf{Regret Bound Update.} The regret bound update resulting in $G^t$ only removes information sets from $G$ from the abstracted information sets in $G^{t-1}$. Hence similarly to Section \ref{sec:fpira_abstraction_storage} Case 2, we use the mapping used in $G^{t-1}$ in all information sets of $G$ mapped to abstracted information sets in $G^t$, and the information set structure provided by the domain description in the rest. Therefore, no additional memory is needed to store the mapping for $G^t$ compared to the mapping required in $G^{t-1}$.

\textbf{Heuristic Update.} The heuristic update resulting in $G^t$, can split existing abstracted information set $I$ in $G^{t-1}$ to a set of information sets $\mathcal{I}'$ such that some $I' \in \mathcal{I}'$ are still abstracted information sets. Hence in this case we need to store the new mapping as described in Section \ref{sec:fpira_abstraction_storage} Case 1.

\subsubsection{Memory and Time Efficiency}\label{sec:ircfr_memory}
The CFR+ requires storing the regret and average strategy for each $I \in \mathcal{I}$ and $a \in \mathcal{A}(I)$ in $G$. Hence, when storing the regrets and average strategy in $G^t$ in iteration $t$ in \IRCFR, the algorithm achieves memory savings directly proportional to the reduction in the number of information sets in $G^t$ compared to the number of information sets in $G$. Additional memory used to store the current abstraction mapping is discussed in Section \ref{sec:ircfr_ismap}. Finally, \IRCFR stores additional regrets in $k_b + k_h$ information sets of $G$ in every iteration for the abstraction update. Since $k_b$ and $k_h$ are parameters of the \IRCFR, this memory can be adjusted as necessary. In Section \ref{sec:experiments} we empirically demonstrate that memory used to store the information set mapping is small and so it does not substantially affect the memory efficiency of \IRCFR. Additionally, we show how the different $k_h$ and $k_b$ affect the convergence of \IRCFR.

The iteration of \IRCFR consists of a tree traversal in function $\mathsf{ComputeRegrets}$ (Algorithm \ref{alg:CFR+_func}) and the two updates of the abstraction. The update of regrets $r$ as a part of $\mathsf{ComputeRegrets}$ takes the same time as standard CFR+ iteration applied to $G$. Additionally, in $\mathsf{ComputeRegrets}$, \IRCFR updates $r_h$ and $r_b$. Since there is no additional tree traversal necessary to obtain the values for the update of $r_h$ and $r_b$, the update takes time proportional to the number of states in $\hat{\mathcal{I}}_b \cup \hat{\mathcal{I}}_h$. In the worst case, the abstraction updates take time proportional to $|\mathcal{I}|$. 
In Section \ref{sec:experiments} we provide experimental evaluation of the runtime of \IRCFR. 

{
\paragraph{Scalability limitations}
The computation time of \IRCFR is given by the time required for traversing all histories in the unabstracted game tree, in order to compute the regrets. This computation is slightly simpler than the computation of best response required by FPIRA, but it is still linear in the number of histories in the game. The extra overhead per iteration, compared to CFR+, is negligible, so we can estimate the scalability limitations of a domain-specific implementation of the algorithm based on existing results with CFR+. Solving Limit Texas Hold'em with approximately $10^{18}$ reported in \cite{bowling2015heads} required 900 core years on a large cluster. Even though an iteration of \IRCFR is not substantially more expensive than plain CFR+, \IRCFR needs to perform more iterations, since the structure of the game is changing, which can waste a part of the previous computation. Therefore, we expect running \IRCFR on the same game would take at least an order of magnitude more computation than running plain CFR+. This is also supported by the experiments in Section~\ref{sec:experiments}.
}

\section{Experiments}\label{sec:experiments}

In this section, we present the experimental evaluation of FPIRA and \IRCFR.
{
Since both algorithms do full traversals of the game tree of the original game, they are not expected to solve the games faster than CFR+. However, they should require substantially less memory in the solution process and produce smaller strategies in the end. We focus mainly on this aspect of the algorithms in the evaluation.
}

First, we briefly describe the Double Oracle algorithm (DOEFG)~\cite{bosansky2014} which will be used as a baseline in the experiments and introduce domains used for the experimental evaluation. 
Next, we demonstrate the convergence of \IRCFR compared to CFR+ and explain our choice of values of the $k_h$ and $k_b$ parameters in the rest of the experimental evaluation (these parameters control the memory used by \IRCFR to update the abstraction, see Sections \ref{sec:regr_bound_update} and \ref{sec:heur_update} for more details).  
We follow with the comparison of the memory requirements of FPIRA, \IRCFR and the DOEFG as a function of the exploitability of the resulting strategies. 
Finally, we provide a comparison of the runtime of FPIRA, \IRCFR, and DOEFG. 
When reporting the results for \IRCFR, B$x$H$y$\IRCFR stands for \IRCFR where $k_b = x$ and $k_h = y$. 

\subsection{Experimental Settings}
The experiments were performed using domain independent implementation of all algorithms in Java\footnote{The implementation is available at http://jones.felk.cvut.cz/repo/gtlibrary. \IRCFR and FPIRA are available in package /src/cz/agents/gtlibrary/experimental/imperfectrecall/\\automatedabstractions/memeff.}. DOEFG uses IBM CPLEX 12.6. to solve the underlying linear programs. Both FPIRA and \IRCFR used the initial abstraction built as described in Section \ref{sec:init_abstr}. Both CFR+ and \IRCFR used the delay of 100 iterations during the average strategy update, i.e., the average strategies are not computed for the first 100 iterations, and the average strategies in iteration $T > 100$ are computed only from current strategies in iterations $\{101, ..., T\}$. And finally, the $L_I^{T_j}$ functions in \IRCFR, used during the regret bound update (see Section \ref{sec:regr_bound_update}), were set to 

{\small
$$L_I^{T_j}(T') = \frac{\Delta_I \sqrt{\mathcal{A}(I)}}{100\sqrt{T'}}, \forall I \in \mathcal{I}, \forall T_j, T_{j+1}, \forall T' \in \{T_j + 1, ..., T_{j+1}\}.$$}

\subsection{Double Oracle Algorithm}
The Double oracle algorithm for solving perfect recall EFGs (DOEFG, \cite{bosansky2014-jair}) is an adaptation of column/constraint generation techniques for EFGs.
The main idea of DOEFG is to create a restricted game where only a subset of actions is allowed to be played by players. The algorithm then incrementally expands this restricted game by allowing new actions.
The restricted game is solved as a standard zero-sum extensive-form game using the sequence-form LP. The expansion of the restricted game is performed using best response algorithms that search the original unrestricted game to find new sequences to add to the restricted game for each player. 
The algorithm terminates when the best response calculated in the unrestricted game provides no improvement to the solution of the restricted game for either of the players.

DOEFG uses two main ideas: (1) the algorithm assumes that players play some pure default strategy outside of the restricted game (e.g., playing the first action in each information set given some ordering), (2) temporary utility values are assigned to leaves in the restricted game that correspond to inner nodes in the original unrestricted game (so-called temporary leaves), which form an upper bound on the expected utility.

We chose the DOEFG as a baseline for comparison since it is the state-of-the-art domain-independent algorithm for solving large EFGs with prohibitive memory requirements. Furthermore, the conceptual idea of DOEFG is similar to FPIRA and \IRCFR since DOEFG also creates a smaller version of the original game and repeatedly refines it until the desired approximation of the Nash equilibrium of the original game is found. Our algorithms, however, exploit a completely different type of sparseness than DOEFG. 

\subsection{Domains}
Here we introduce the zero-sum domains used in the experimental evaluation. These domains were chosen for their diverse structure both in the utility and the source of imperfect information. 
The players perfectly observe the actions of their opponent in poker, only the actions of the chance player at the start of the game (corresponding to card deal) are hidden. In II-Goofspiel and Graph pursuit, on the other hand, the imperfect information is created by partial observability of the moves of the opponent through the whole game. Furthermore, in II-Goofspiel and poker, the utility of players cumulates during the play (the chips and the value of the cards won), while in Graph pursuit the utility depends solely on whether the attacker is intercepted or not or whether he reaches his goal.
\subsubsection{Poker}
As a first domain, we use a two-player poker, which is commonly used as a benchmark in imperfect-information game solving~\cite{rubin2011computer}. We use a version of poker with a deck of cards with 4  card types 3 cards per type. There are two rounds. In the first round, each player places an ante of 1 chip in the pot and receives a single private card. A round of betting follows. Every player can bet from a limited set of allowed values or check. After a bet, the other player can raise, again choosing the value from a limited set, call or forfeit the game by folding. The number of consecutive raises is limited. A shared card is dealt after one of the players calls or after both players check. Another round of betting takes place with identical rules. The player with the highest pair wins. If none of the players has a pair, the player with the highest card wins. We create different poker domains by varying the number of bets $b$, the number of raises $r$ and the number of consecutive raises allowed $c$. We refer to these instances as P$brc$, e.g., P234 stand for a poker which uses two possible values of bets, 3 values of raises and allows 4 consecutive raises.
\subsubsection{II-Goofspiel}
II-Goofspiel is a modification of the Goofspiel game \cite{Ross71Goofspiel} which is commonly used as a benchmark domain (see, e.g., \cite{lisy2015online,lanctot2013monte}). Similarly to Goofspiel, II-Goofspiel is a card game with three identical packs of cards, two for players and one randomly shuffled and placed in the middle. In our variant, both players know the order of the cards in the middle pack. The game proceeds in rounds. Every round starts by revealing the top card of the middle pack. Both players proceed to bet simultaneously on this card using their own cards. The cards used to bet are discarded, and the player with the higher value of the card used to bet wins the middle card. After the end of the game, each player gets utility equal to the difference between the points collected by him and the number of points collected by his opponent. The players do not observe the bet of their opponent. Instead, they learn whether they have won, lost, or if there was a tie caused by both players using cards with equal value. We change the number of cards in all 3 decks, by GS$x$, we refer to the II-Goofspiel where each deck has $x$ cards.

\subsubsection{Graph Pursuit}
\begin{figure}[t]
\centering
\includegraphics[width=4cm]{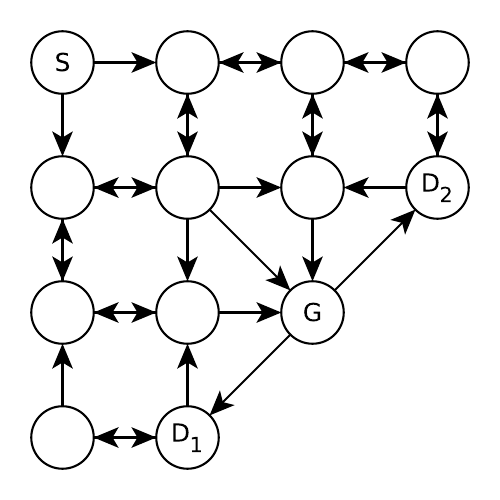}
\caption{Graph used in the Graph pursuit domain.}
\label{fig:graph_pursuit}
\end{figure}

Graph pursuit is a game played between the defender and the attacker on the graph depicted in Figure \ref{fig:graph_pursuit}. 
The attacker starts in the node labeled $S$ and tries to reach the node labeled $G$. The defender controls two units which start in nodes $D_1$ and $D_2$. The players move simultaneously and are forced to move their units each round. Both the attacker and defender only observe the content of the nodes with distance less or equal to 2 from the current node occupied by any of their units. The attacker gets utility 2 for reaching the goal $G$. If the attacker is caught by crossing the same edge as any of the units of the defender or by moving to a node occupied by the defender, he obtains the utility -1 and the game ends. If a given number of moves occurs without any of the previous events, the game is a tie, and both players get 0. We create different versions of Graph pursuit by changing the limit on the number of moves. By GP$x$ we denote Graph pursuit where there are $x$ moves of each player allowed.

\subsection{Convergence of \IRCFR}
\label{sec:experiments:parameters}
In this section, we provide the experiments showing the convergence of \IRCFR with varying $k_h$ and $k_b$ parameters compared to CFR+ applied directly to the unabstracted game. Additionally, we justify our choice of values of the $k_h$ and $k_b$ parameters in the rest of the experimental evaluation.

\begin{figure}[t]
\begin{center}\includegraphics[height=0.175cm]{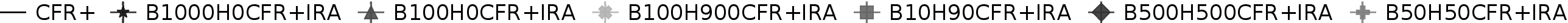}\end{center}
\vspace{-0.3cm}
\includegraphics[width=0.33\textwidth]{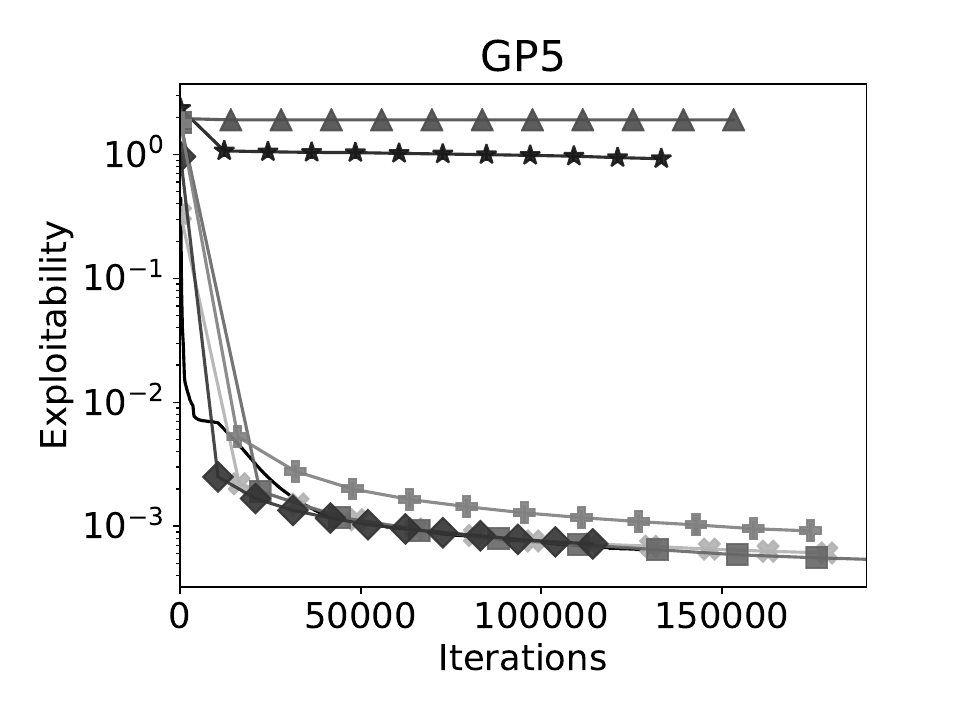}\includegraphics[width=0.33\textwidth]{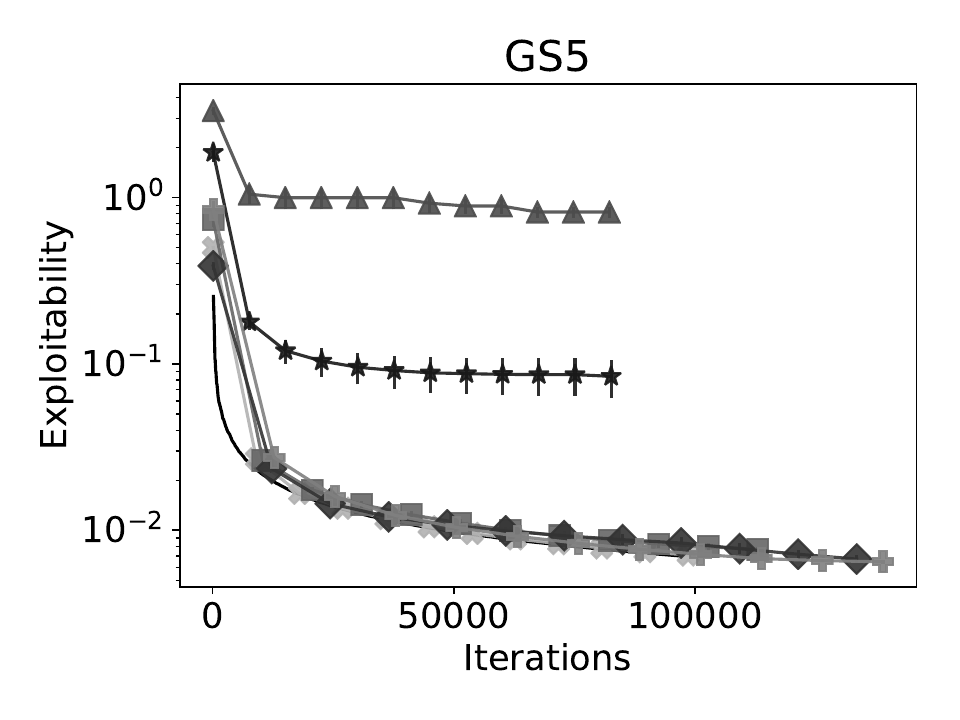}\includegraphics[width=0.33\textwidth]{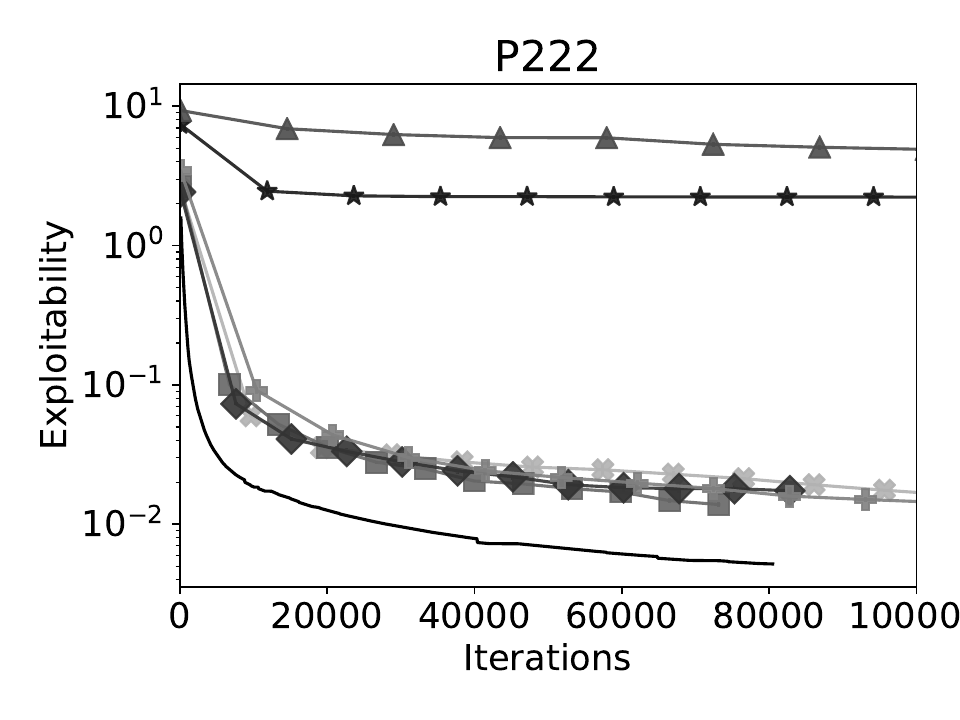}
\caption{The plots showing the sum of exploitabilities of the resulting strategies for player 1 and 2 (log y-axis) as a function of iterations (x-axis) for GP5, GS5, P222.}
\label{fig:medium_exploitability}
\end{figure}

In Figure \ref{fig:medium_exploitability} we present the sum of exploitabilities of the resulting strategies of player 1 and 2 computed by \IRCFR with various settings compared to CFR+ for GP5, GS5, and P222 as a function of the number of iterations. We depict the results for \IRCFR as averages and standard error over 10 runs of the algorithm (the standard error is usually too small to be visible). Each run uses a different seed to randomly sample the information sets for regret bound and heuristic abstraction updates. We show 3 settings of \IRCFR where $k_h+k_b = 100$ and 3 settings where $k_h+k_b = 1000$. The convergence of \IRCFR with $k_h=0$ is significantly worse compared to the rest of the settings across all domains even when we increase $k_h+k_b$ from 100 to 1000. For this reason, we focus only on settings where $k_h > 0$ in the following experimental evaluation. In GP5 and GS5 all \IRCFR parametrizations with $k_h > 0$ converge similarly to CFR+. Note that \IRCFR can converge faster than CFR+ since the final abstracted game which allows convergence can have significantly fewer information sets and hence tighter bound on the average external regret. On the other hand, in P222 CFR+ converges faster than \IRCFR. Additionally, there is very little difference between the convergence speed of the different settings of \IRCFR where $k_h > 0$.

\begin{figure}[t]
\begin{center}\includegraphics[height=0.175cm]{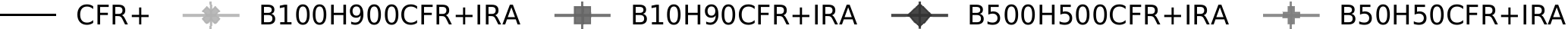}\end{center}
\vspace{-0.3cm}
\includegraphics[width=0.33\textwidth]{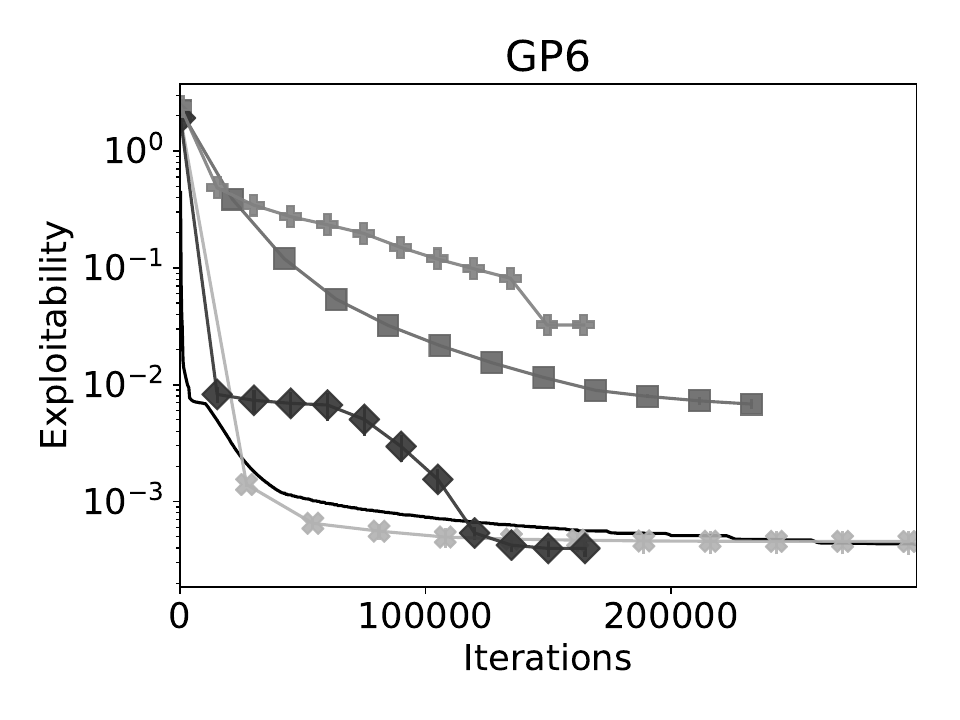}\includegraphics[width=0.33\textwidth]{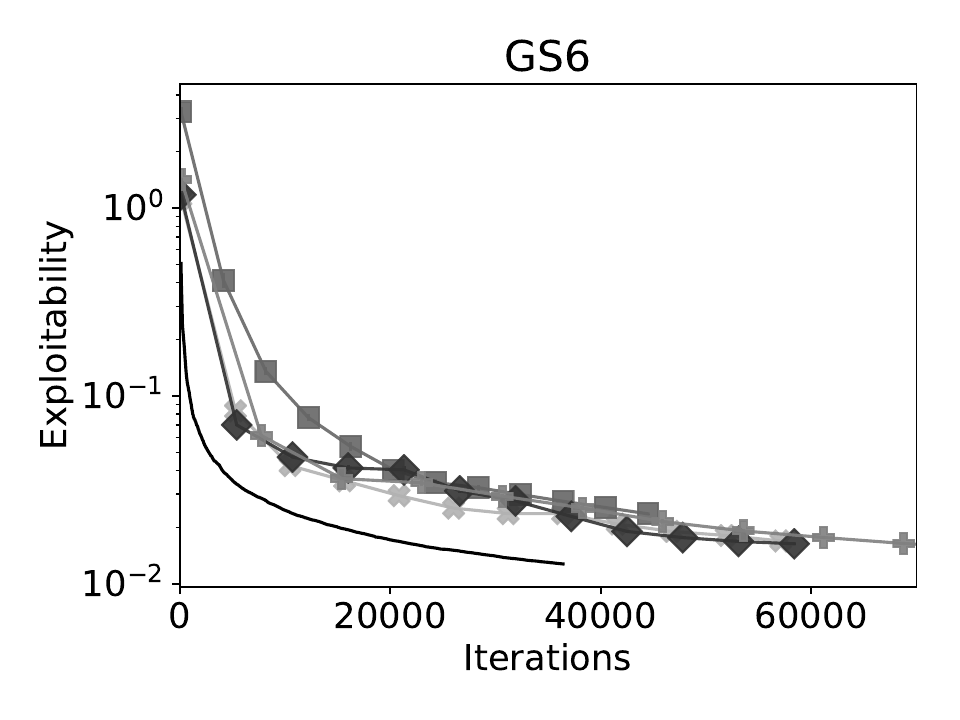}\includegraphics[width=0.33\textwidth]{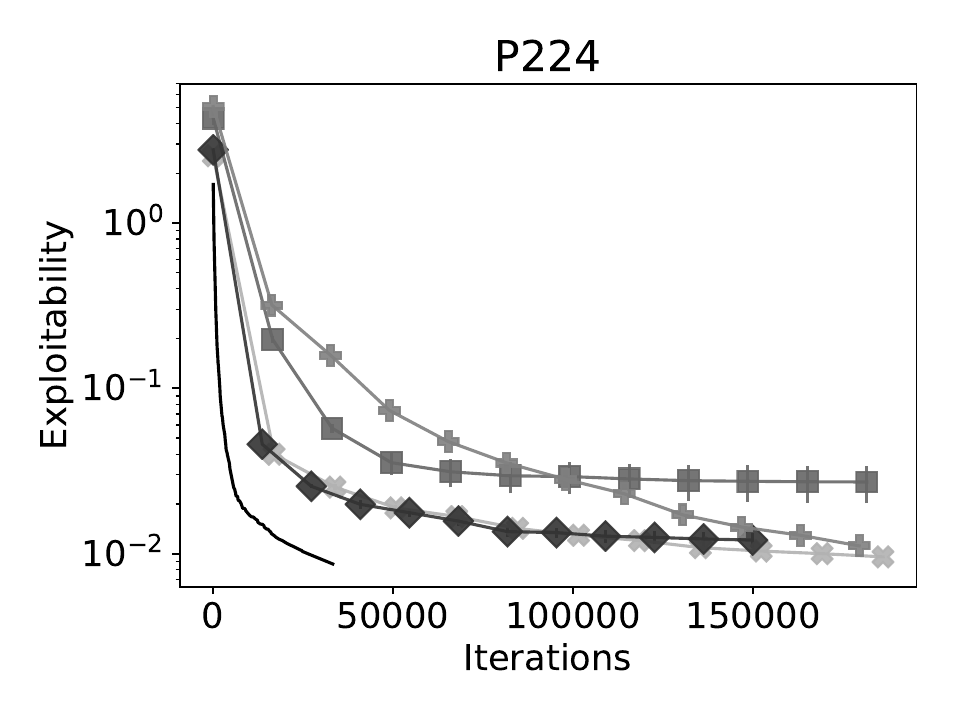}
\caption{The plots showing the sum of exploitabilities of the resulting strategies of player 1 and 2 (log y-axis) as a function of iterations (x-axis) for GP6, GS6, P224.}
\label{fig:large_exploitability}
\end{figure}

In Figure \ref{fig:large_exploitability} we show the same results for GP6, GS6 and P224. The plots show slower convergence of \IRCFR with $k_h+k_b = 100$ compared to $k_h+k_b = 1000$. Additionally, except for GP6, all versions of \IRCFR converge slower than CFR+. This is expected, since using $k_h+k_b = 100$ and $k_h+k_b = 1000$ means that the algorithm uses less than $0.1\%$ and $1\%$ of information sets for abstraction update in all 3 domains. Hence it takes longer to refine the abstraction to allow strategies with a smaller exploitability.

The results above suggest that all the evaluated settings of \IRCFR with fixed $k_h + k_b$ and $k_h > 0$ perform similarly. Additionally, we have observed that this similarity also holds for the memory required by \IRCFR. On the other hand, increasing the sum of $k_h + k_b$ from 100 to 1000 improved the convergence speed of \IRCFR in the larger domains. Hence, for clarity, in the following experiments we report only two settings of \IRCFR, namely B10H90\IRCFR and B100H900\IRCFR. 

\subsection{Memory Requirements of Algorithms}
In this section, we discuss the memory requirements of FPIRA, \IRCFR, and DOEFG as a function of the exploitability of the resulting strategies on two sets of domains: GP5, GS5, P222 with approx. $10^4$ information sets and GP6, GS6, P224 with approx. $10^5$ information sets. Since the actual memory usage is implementation dependent, we analyze the size of the abstractions built by \IRCFR and FPIRA and the size of the rest of the data structures required by these algorithms. We compare these results with the size of the restricted game built by DOEFG and other data structures stored by this algorithm. Next, we compare the number of 32-bit words stored during the run of FPIRA, \IRCFR, and DOEFG.
We conclude this section by more detailed analysis of the scalability of \IRCFR in the poker domain and compare this scalability with DOEFG and FPIRA.

\subsubsection{GP5, GS5 and P222}
\begin{figure}[t]
\begin{center}\includegraphics[height=0.175cm]{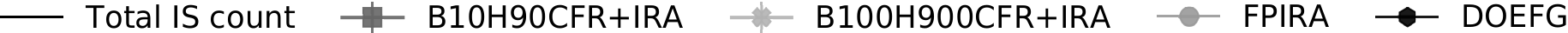}\end{center}
\vspace{-0.3cm}
\includegraphics[width=0.33\textwidth]{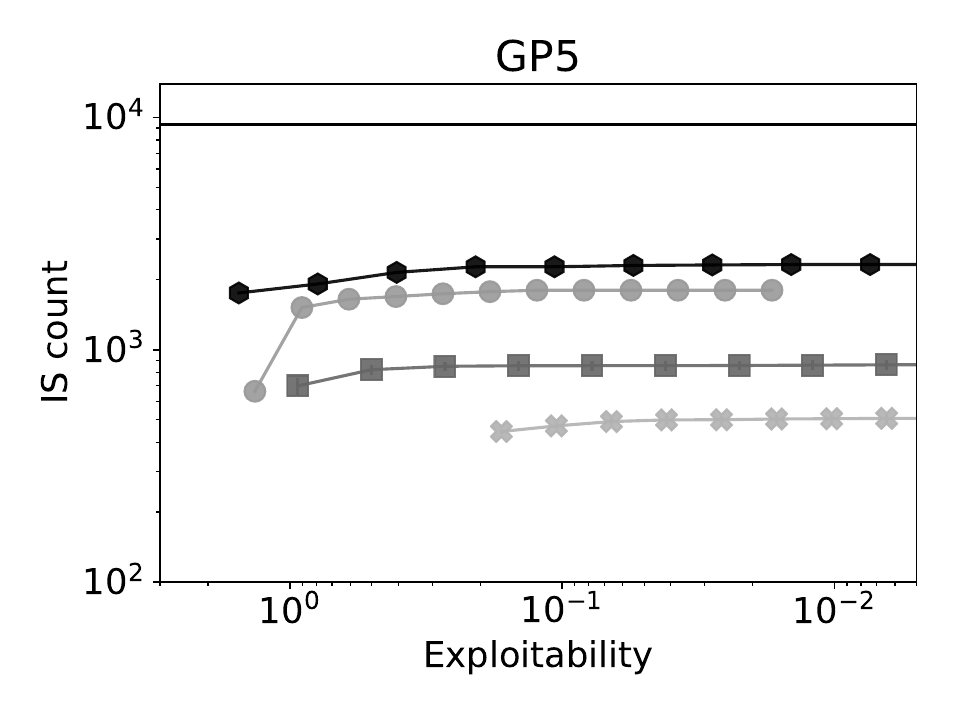}\includegraphics[width=0.33\textwidth]{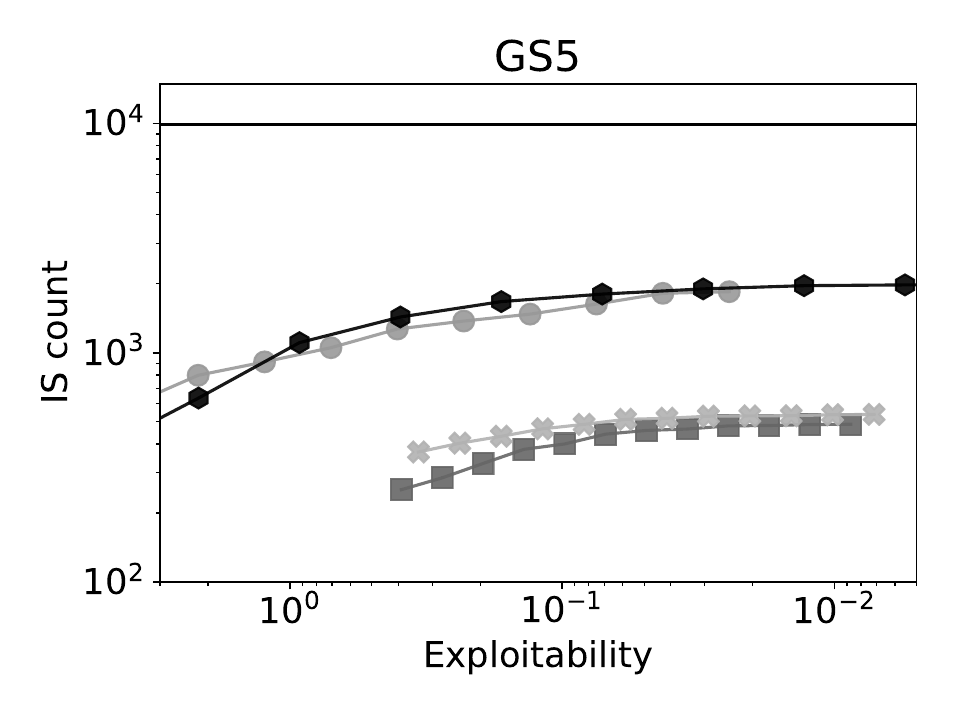}\includegraphics[width=0.33\textwidth]{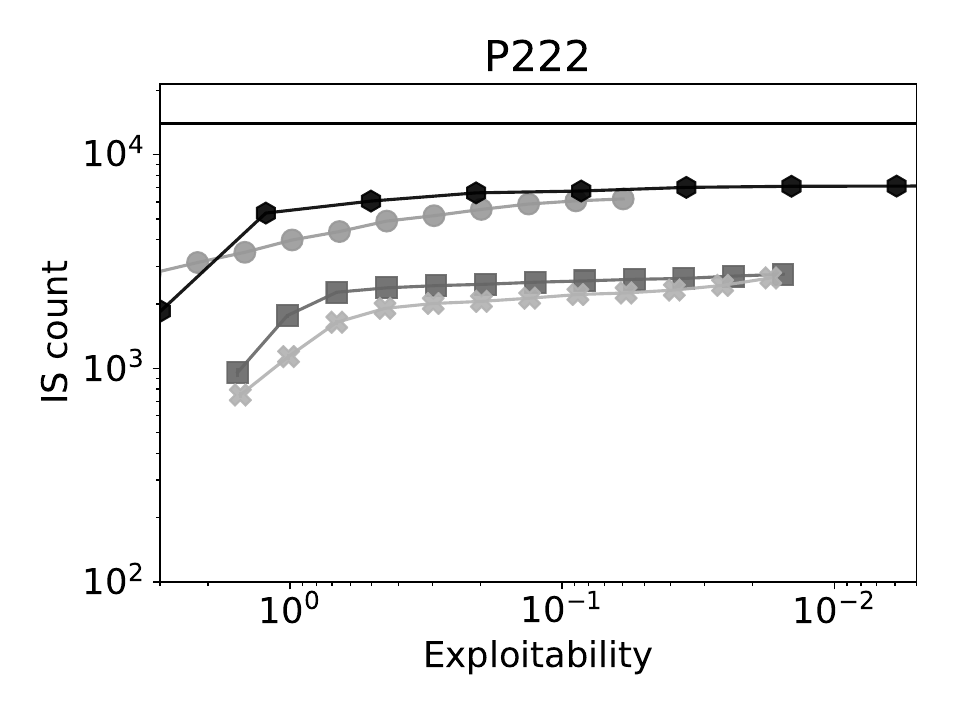}
\caption{The plots depicting the number of information sets (log y-axis) used by algorithms to compute strategies with the exploitability depicted on the log x-axis for GP5, GS5, P222.}
\label{fig:medium_abstr_size}
\end{figure}

In Figure \ref{fig:medium_abstr_size} we show the number of information sets in the abstraction (or restricted game) built by the algorithms as a function of the sum of exploitabilities of the resulting strategies of player 1 and 2 for GP5, GS5, and P222. We report these results for DOEFG, FPIRA and two settings of \IRCFR. The lines for \IRCFR depict the average number of information sets over 10 runs of \IRCFR. Each run uses a different seed to randomly sample the information sets for regret bound and heuristic abstraction updates. Note that the standard error is too small to be visible in these plots.
FPIRA and DOEFG require a similar number of information sets to reach strategies with equal exploitability. The \IRCFR, on the other hand, requires significantly fewer information sets than FPIRA and DOEFG for these domains. 
For exploitability $0.05$, the B10H90\IRCFR uses on average $9.5\%$, $3.1\%$, $14.9\%$ of information sets of the total information set count of GP5, GS5 and P222 respectively, while FPIRA uses $19.6\%$, $15.6\%$, $42.9\%$ and DOEFG $24.0\%$, $17.2\%$, $47.8\%$. 

\begin{figure}[t]
\begin{center}\includegraphics[height=0.165cm]{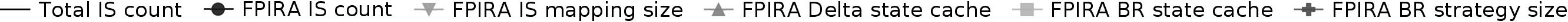}\end{center}
\vspace{-0.3cm}
\includegraphics[width=0.33\textwidth]{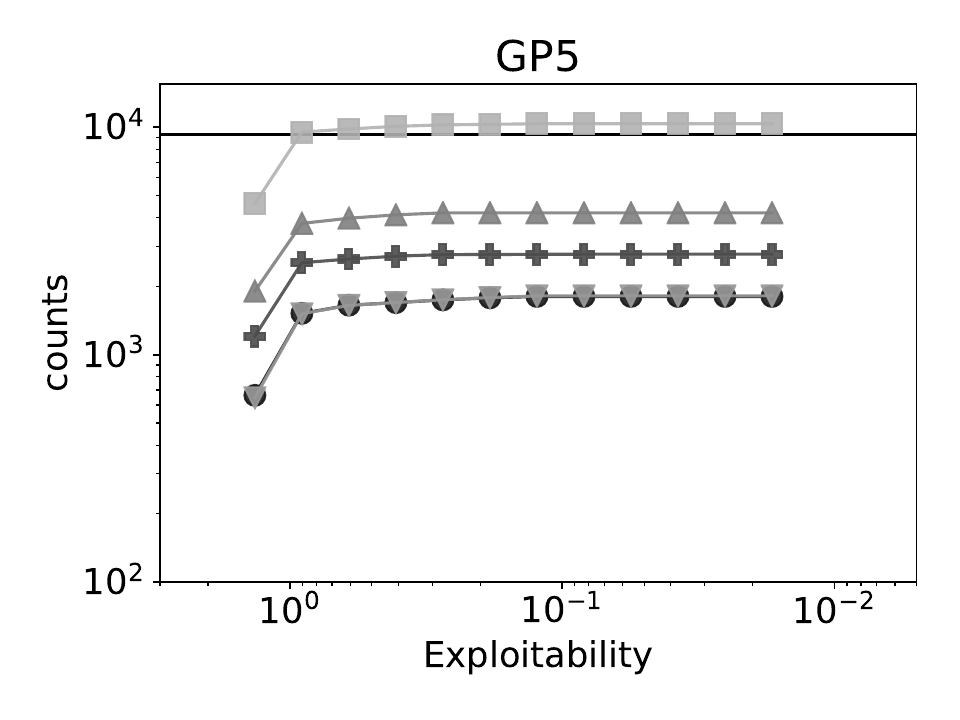}\includegraphics[width=0.33\textwidth]{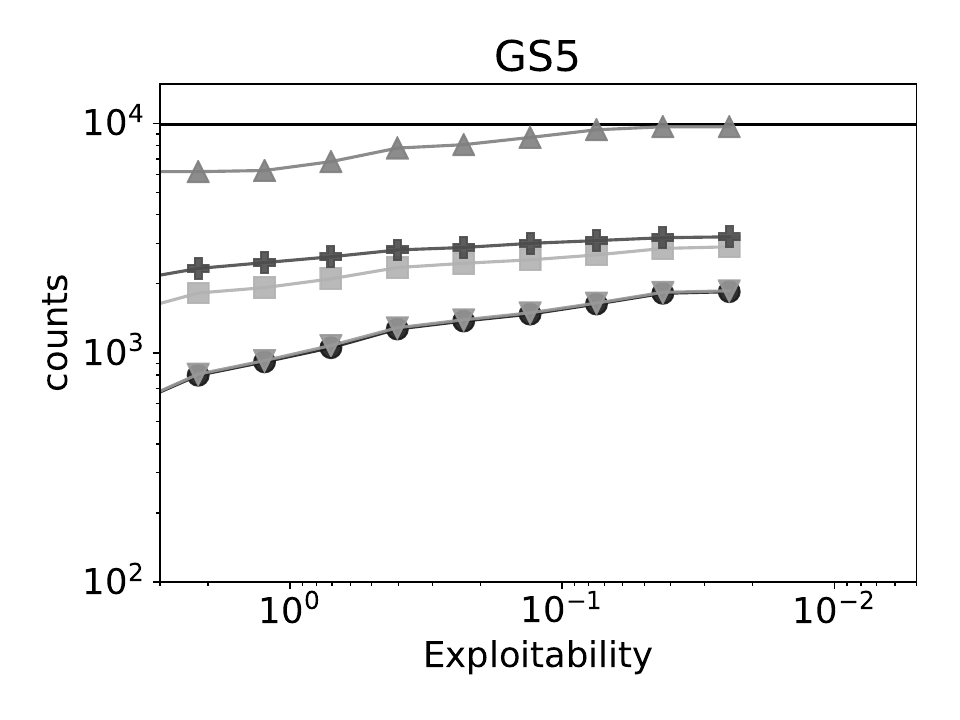}\includegraphics[width=0.33\textwidth]{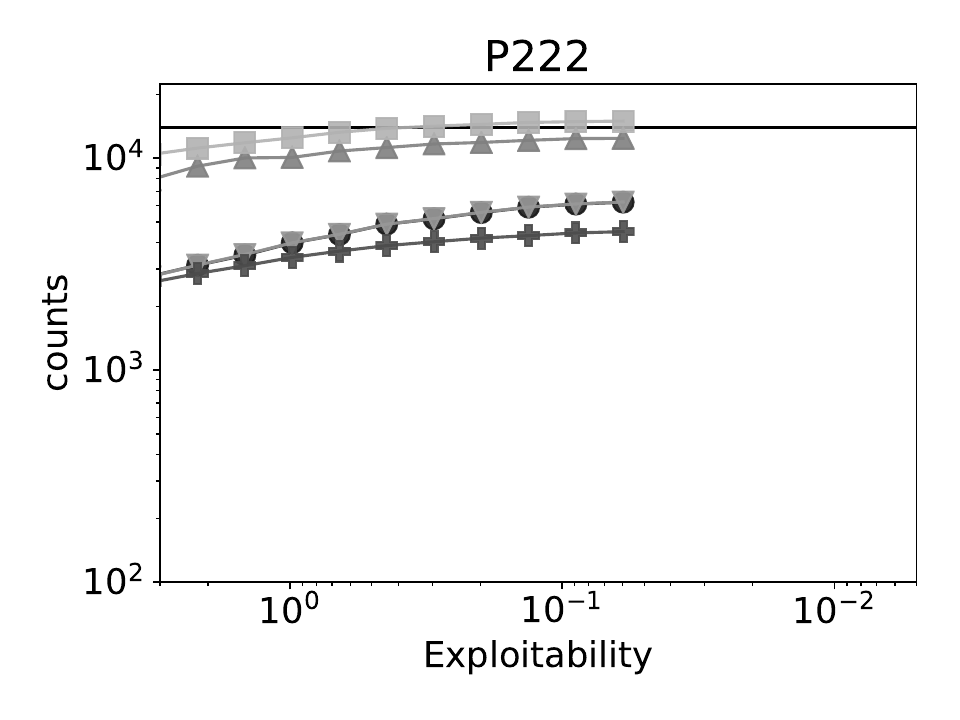}
\vspace{-0.4cm}
\begin{center}\includegraphics[height=0.175cm]{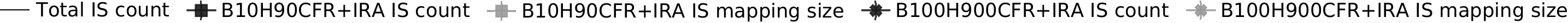}\end{center}
\vspace{-0.3cm}
\includegraphics[width=0.33\textwidth]{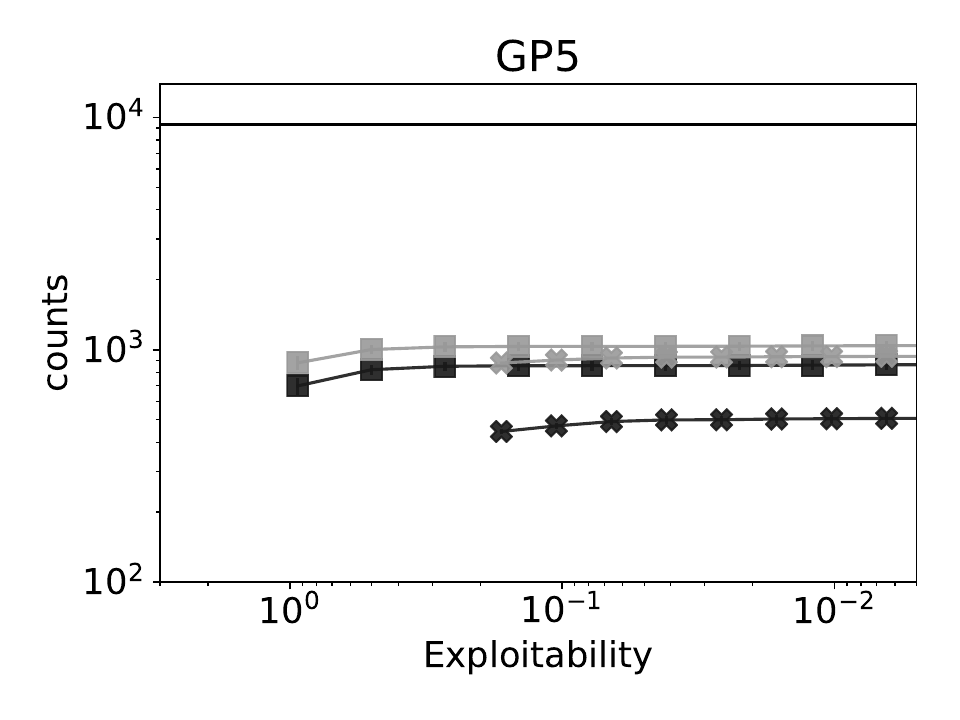}\includegraphics[width=0.33\textwidth]{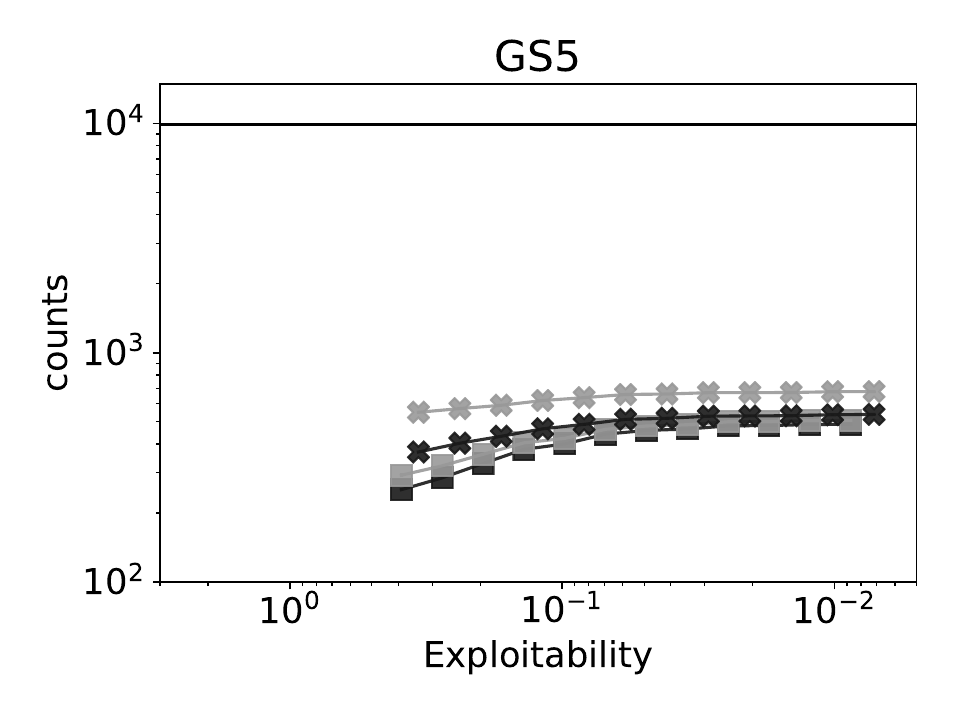}\includegraphics[width=0.33\textwidth]{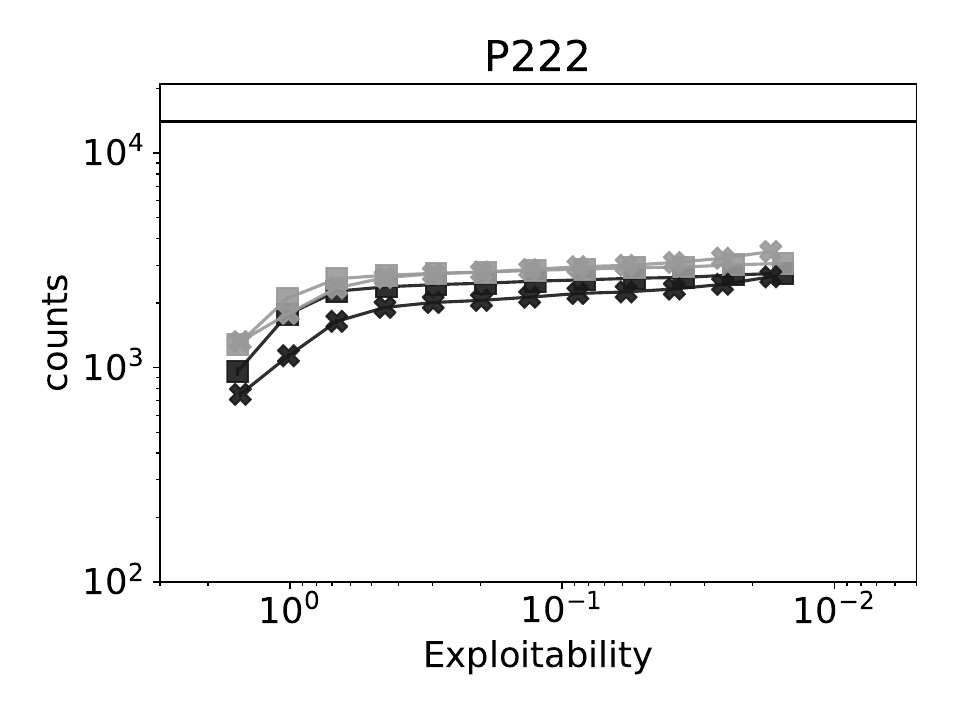}
\vspace{-0.4cm}
\begin{center}\includegraphics[height=0.165cm]{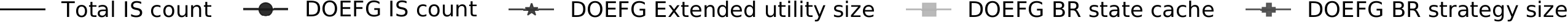}\end{center}
\vspace{-0.3cm}
\includegraphics[width=0.33\textwidth]{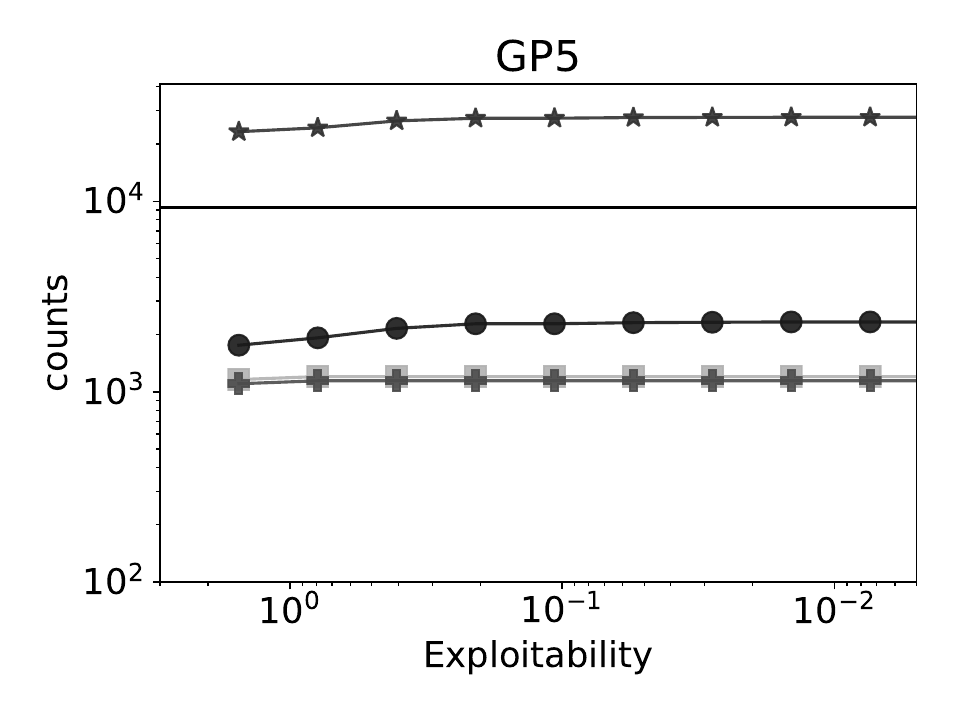}\includegraphics[width=0.33\textwidth]{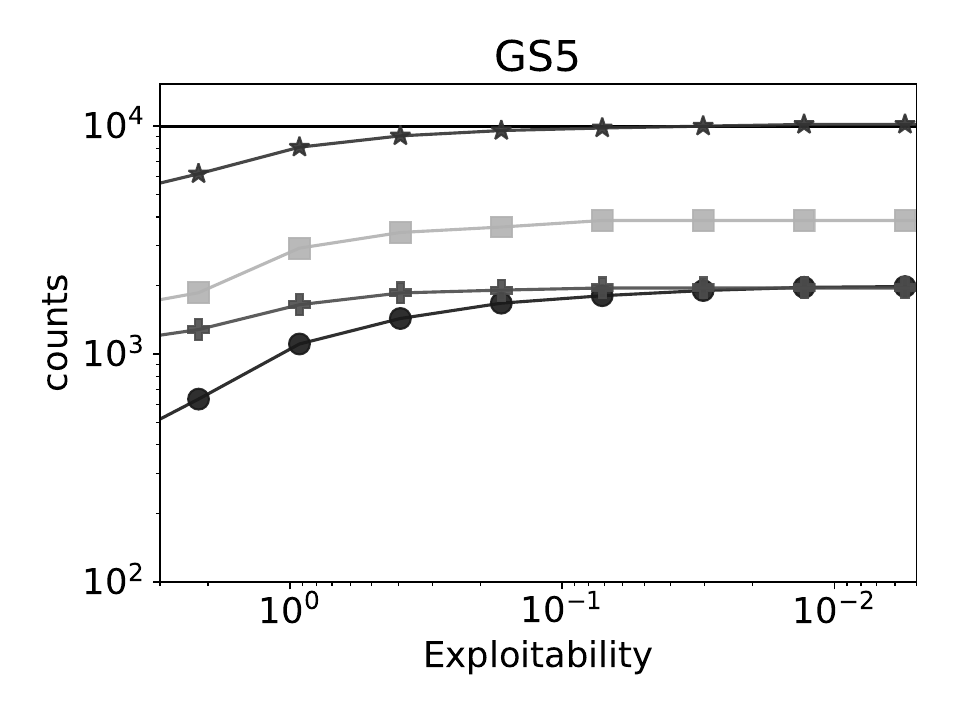}\includegraphics[width=0.33\textwidth]{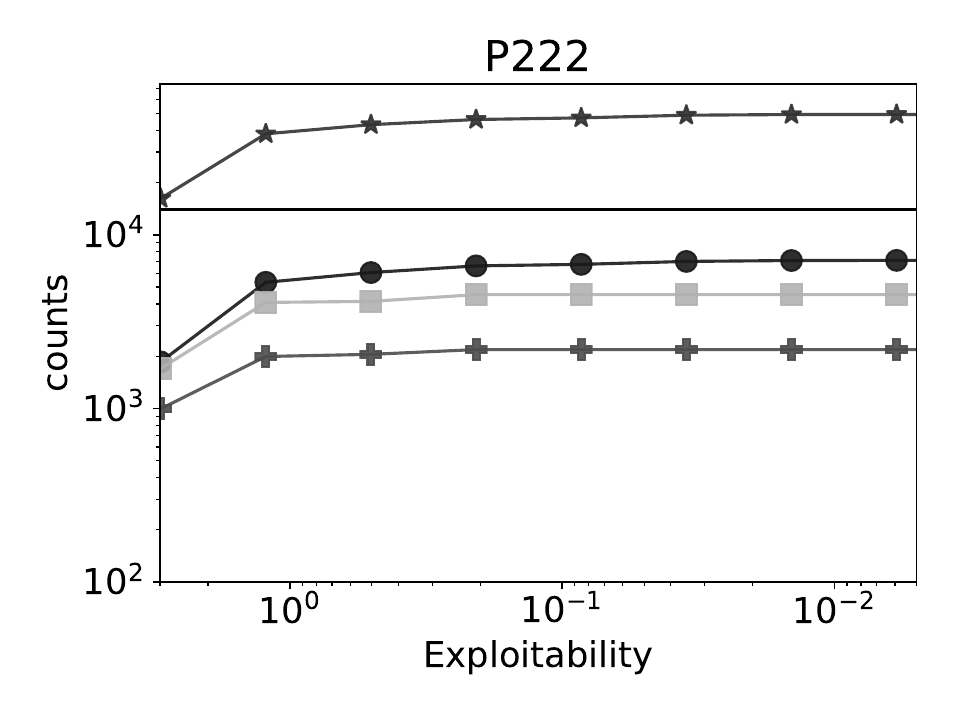}
\caption{Plots showing size of data stored during the run of FPIRA in the first row, \IRCFR in the second row and DOEFG in the third row (log y-axis) as a function of the sum of exploitabilities of the resulting strategies of player 1 and 2 (log x-axis) for GP5, GS5 and P222.}
\label{fig:medium_memory}
\end{figure}

In Figure \ref{fig:medium_memory} we show the size of data structures stored during the run of FPIRA, \IRCFR, and DOEFG for GP5, GS5, and P222. 

\textbf{FPIRA:} The results for FPIRA are presented in the plots in the first row of the Figure. We report the number of information sets of the abstraction, the size of the information set mapping, the size of the cache used during the best response computation, the size of the best response and the size of the cache used during the $\Delta$ computation as a function of the sum of exploitabilities of the resulting strategies of player 1 and 2. 
Note that the cache sizes and best response strategy size are reported as the maximum size encountered until FPIRA reached strategies with the corresponding exploitability. 
The results show that the size of the abstraction mapping remains small. On the other hand, the cache used during the best response computation and the $\Delta$ computation can require prohibitive memory when applied to large games (the cache stores value for each state encountered during the best response computation, and hence its size can be significantly larger than the number of information sets). For example, for exploitability $0.05$ the size of the information set mapping was $20.5\%$, $18.2\%$, $43.4\%$ of the total information set count of GP5, GS5, and P222, while the size of the cache in the best response computation was $109.6\%$, $23.8\%$, $106.1\%$. 

\textbf{\IRCFR:} The plots in the second row show that both the average mapping size and the average number of information sets in the abstraction in \IRCFR remain small: B10H90\IRCFR stored the mapping on average only for $10.7\%$, $3.9\%$, $22.1\%$ of the total information sets for GP5, GS5, and P222 respectively. 
Additionally, in case of GP5, GS5 and P222 $k_h + k_b = 100$ corresponds to storing regrets required for the abstraction update only in $1.1\%, 1.0\%, 0.7\%$ of information sets of the whole game respectively. 

\textbf{DOEFG:} Finally in the third row we show the data structures required by the DOEFG. Similarly to FPIRA, we report the maximum size of the cache used in the best response and the maximum size of the best response strategy since DOEFG uses the best response computation with the cache as oracle extending the restricted game (see \cite{bosansky2014}, Section 4.2). Additionally, we report the size of the extended utility, which is required to construct the sequence-form LP for the current restricted game. The extended utility stores one value for each combination of sequences leading to some leaf or temporary leaf (see \cite{bosansky2014} for more details). Finally, we report the number of information sets in the restricted game. Note that it is not clear whether these are all the data required by the DOEFG since DOEFG was never implemented or described with emphasis on memory efficiency. However, we believe that these data form a necessary subset of the data required by the DOEFG. The results show that storing the extended utility requires prohibitive memory as its size can be significantly larger than the number of information sets. For example, for exploitability $0.05$ the size of the extended utility was $297\%$, $102\%$ and $352\%$ of the total information set count of GP5, GS5, and P222.

\begin{figure}[t]
\begin{center}\includegraphics[height=0.175cm]{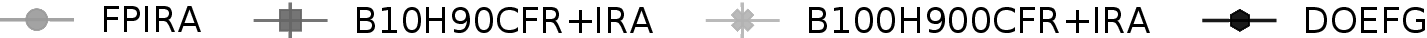}\end{center}
\vspace{-0.3cm}
\includegraphics[width=0.33\textwidth]{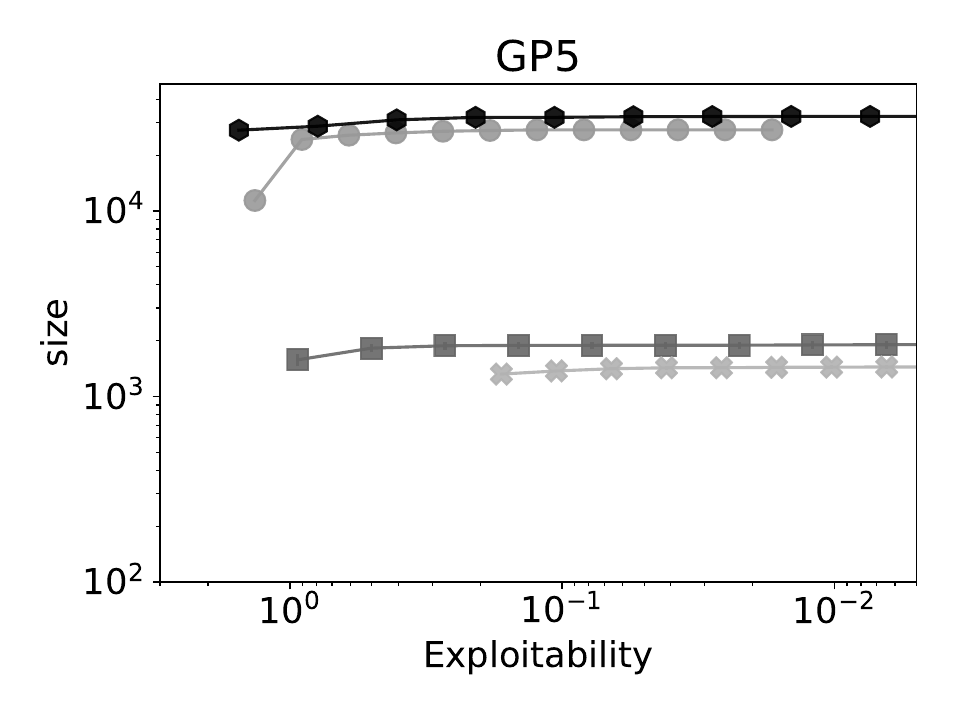}\includegraphics[width=0.33\textwidth]{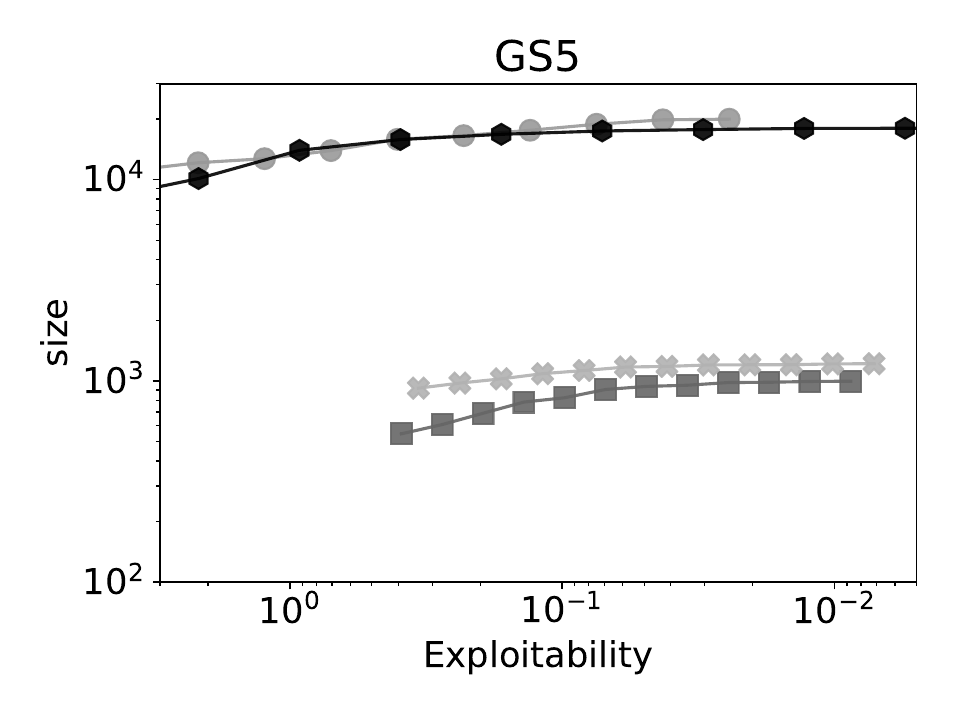}\includegraphics[width=0.33\textwidth]{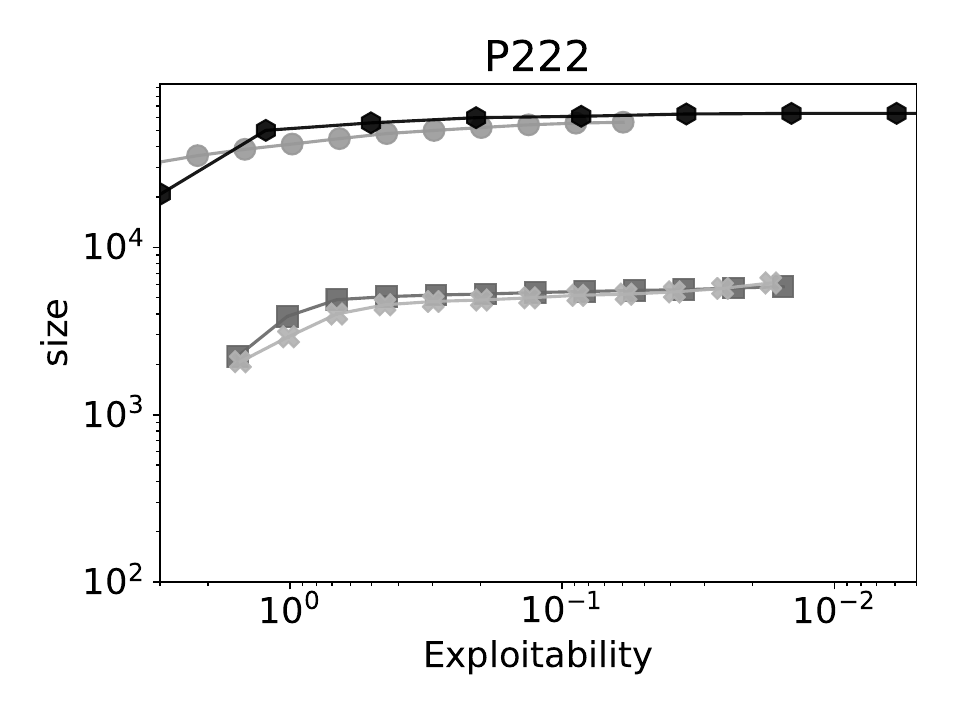}
\caption{The number of 32 bit words the algorithms store (log y-axis) as a function of the exploitability of the resulting strategies (log x-axis) for GP5, GS5 and P222.}
\label{fig:medium_number_count}
\end{figure}

Finally, in Figure \ref{fig:medium_number_count} we present the total number of 32-bit words (integers and floats) the algorithms need to store in GP5, GS5, and P222  as a function of the exploitability of the resulting strategies. These values were computed from the data depicted in Figure \ref{fig:medium_memory}. Furthermore, to properly reflect the data stored by FPIRA in the abstraction, we replace the abstraction size by the number of floats that are used to represent the average strategy stored in the current abstraction. Similarly, for \IRCFR we replace the average abstraction size by the average number of floats that are required to represent the regrets and average strategies stored in the current abstraction. These results show that the number of words stored by FPIRA and DOEFG is comparable. On the other hand, \IRCFR in both settings requires an order of magnitude smaller memory than FPIRA and DOEFG.

\subsubsection{GP6, GS6 and P224}
\begin{figure}[t]
\begin{center}\includegraphics[height=0.175cm]{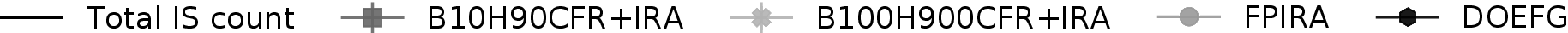}\end{center}
\vspace{-0.3cm}
\includegraphics[width=0.33\textwidth]{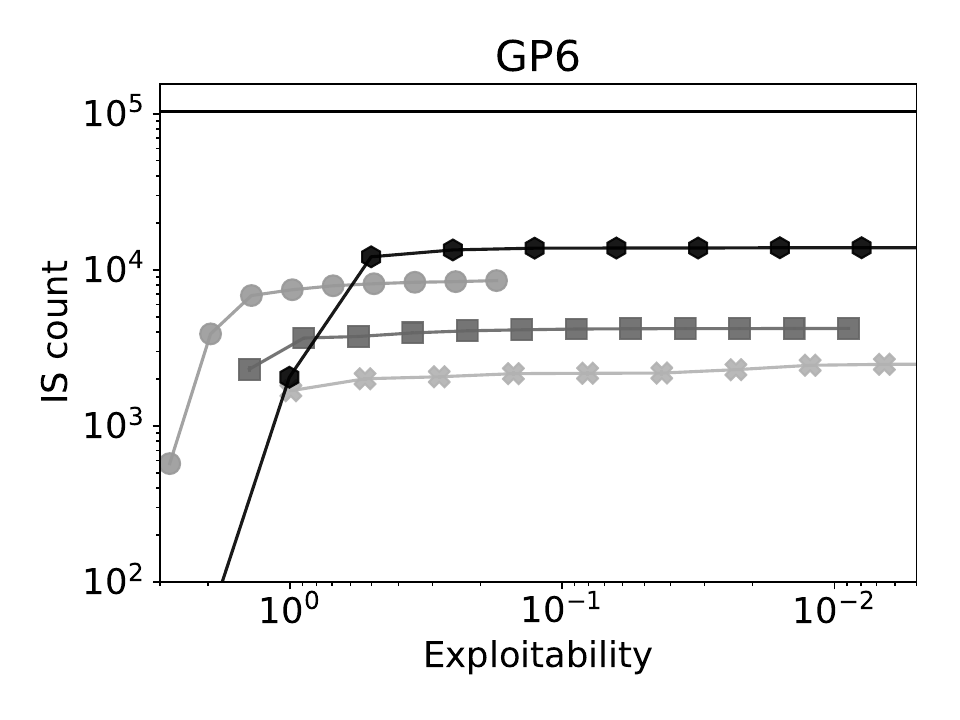}\includegraphics[width=0.33\textwidth]{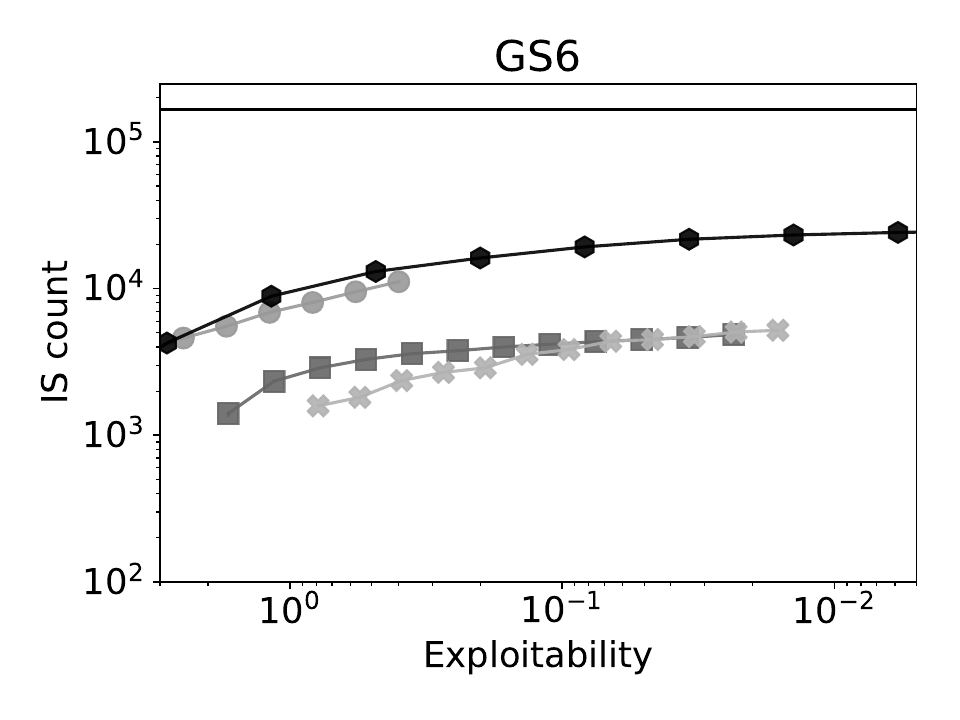}\includegraphics[width=0.33\textwidth]{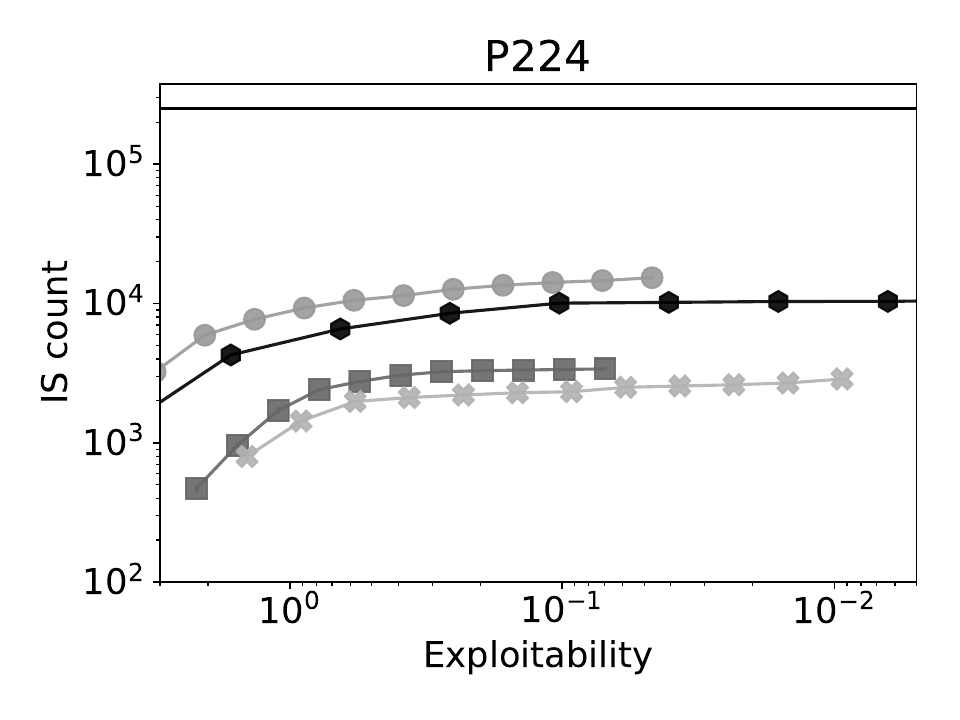}
\caption{The plots depicting the number of information sets (log y-axis) used by algorithms to compute strategies with the sum of their exploitabilities depicted on the log x-axis for GP6, GS6 and P224.}
\label{fig:large_abstr_size}
\end{figure}
In Figure \ref{fig:large_abstr_size} we present the results showing the abstraction size for GP6, GS6 and P224. We depict the results for \IRCFR as averages with the standard error over 5 runs with different seeds (the standard error is again too small to be visible). The \IRCFR is capable of solving the games using abstractions with significantly less information sets than the rest of the algorithms. 
For exploitability of the resulting strategies $0.05$, the B100H900\IRCFR  uses on average $2.0\%$, $2.4\%$, $0.9\%$ of information sets of GP6, GS6 and P224, while DOEFG uses $13.3\%$, $12.0\%$, $4.0\%$. A slow runtime prevented FPIRA from convergence to strategies with exploitability 0.05 in the given time (see Section \ref{sec:experiments:runtime} for runtime analysis). Additionally, in case of GP6, GS6 and P224, $k_h + k_b = 100$ corresponds to storing regrets required for the abstraction update only in $0.09\%, 0.06\%, 0.03\%$ of information sets of the whole game respectively.

\begin{figure}[t]
\begin{center}\includegraphics[height=0.175cm]{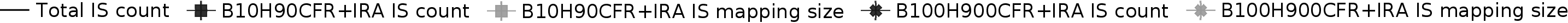}\end{center}
\vspace{-0.3cm}
\includegraphics[width=0.33\textwidth]{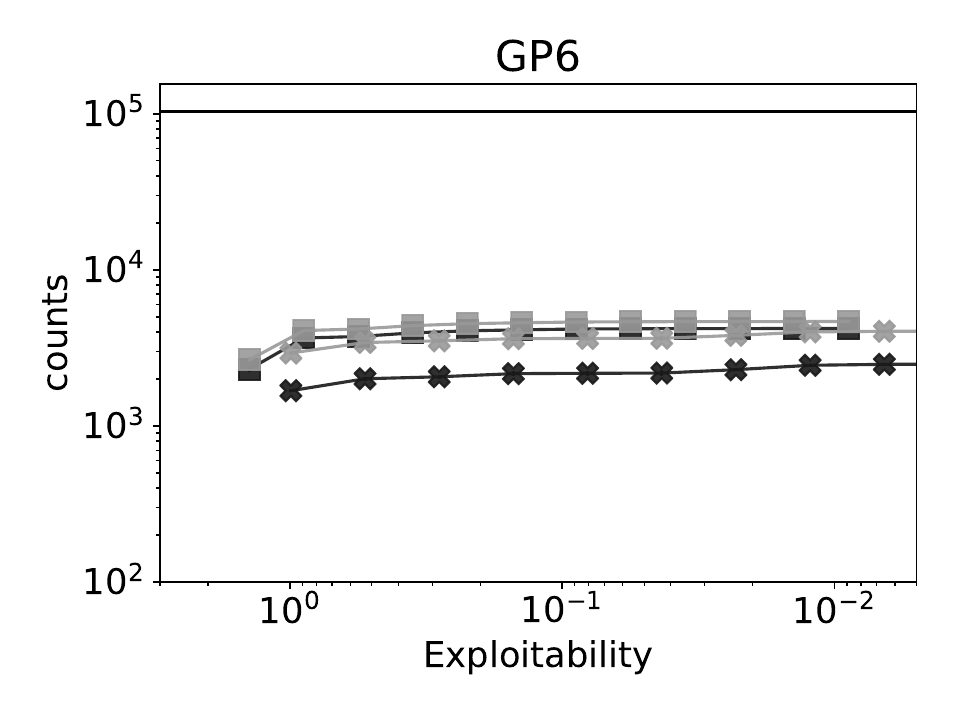}\includegraphics[width=0.33\textwidth]{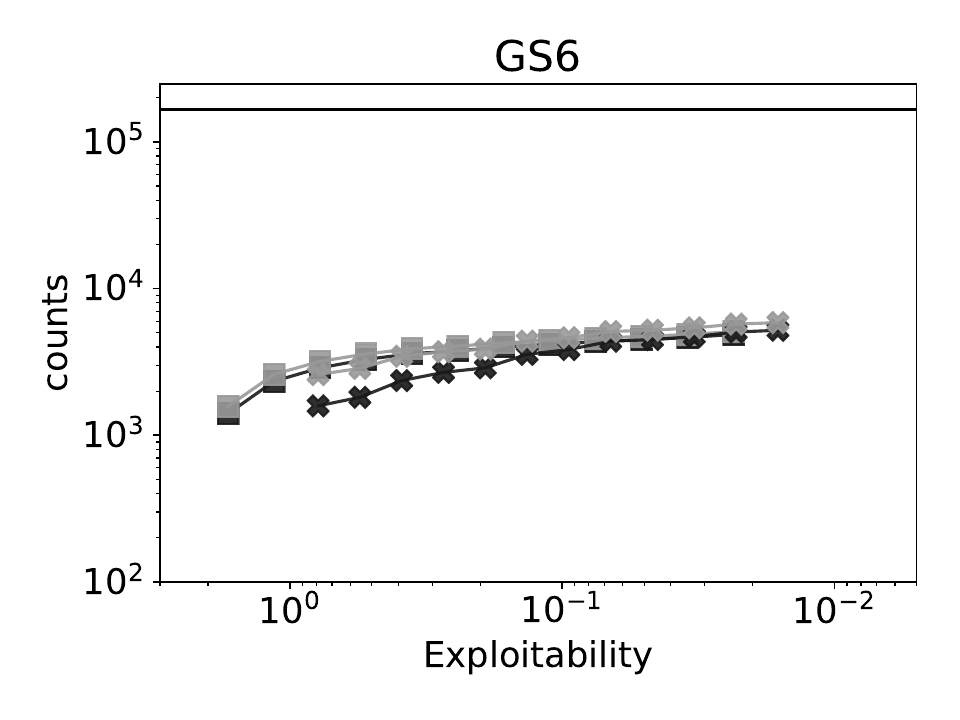}\includegraphics[width=0.33\textwidth]{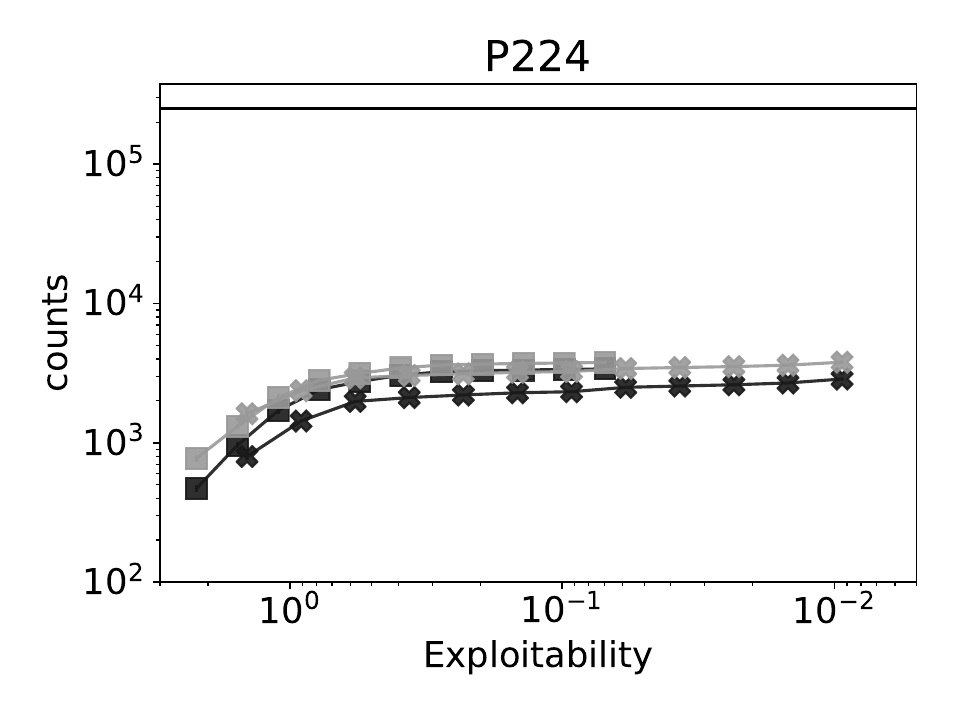}\\
\vspace{-0.6cm}
\begin{center}\includegraphics[height=0.165cm]{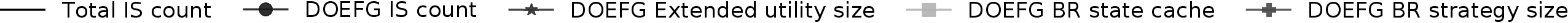}\end{center}
\vspace{-0.3cm}
\includegraphics[width=0.33\textwidth]{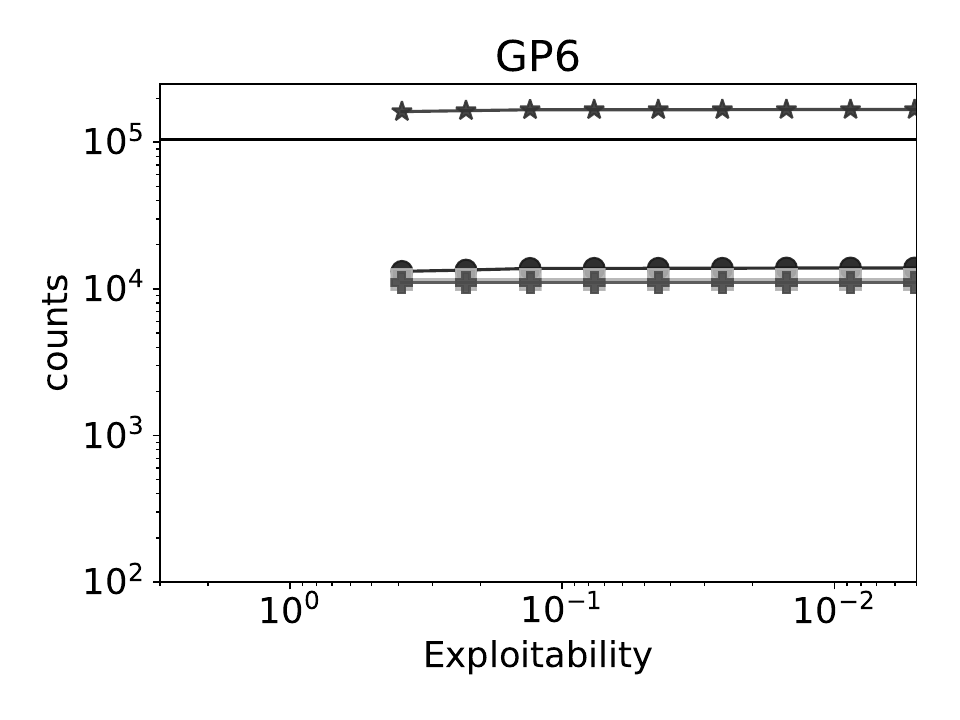}\includegraphics[width=0.33\textwidth]{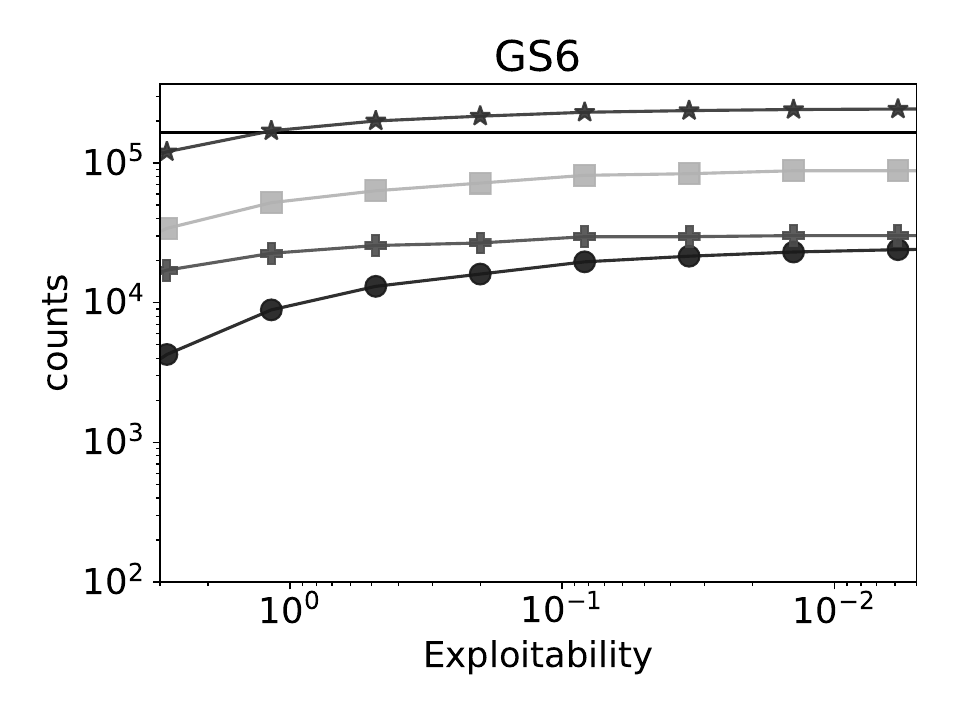}\includegraphics[width=0.33\textwidth]{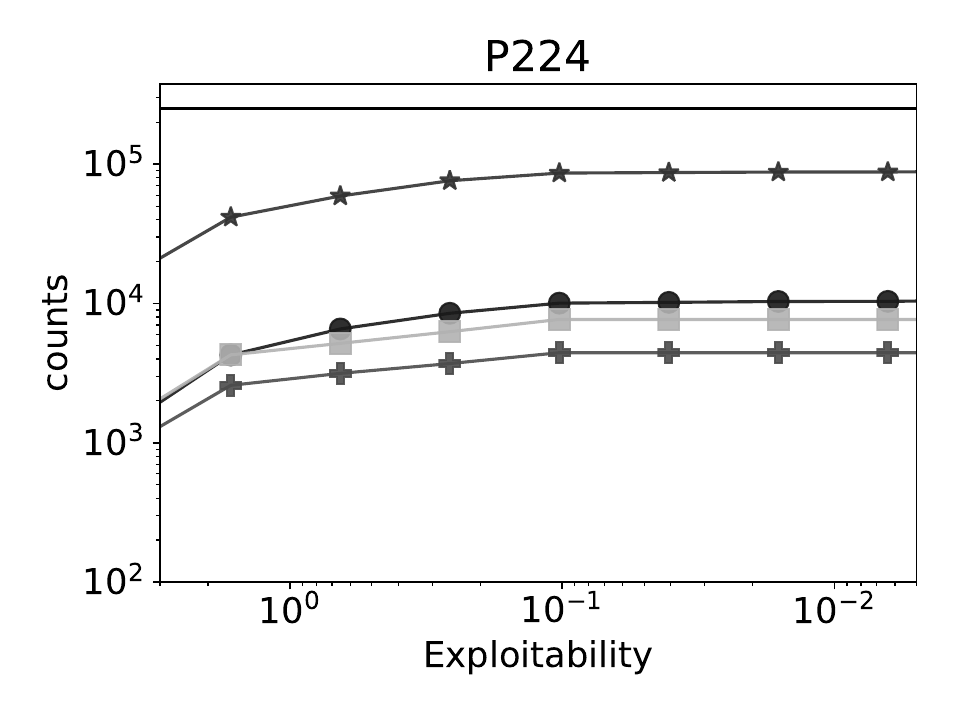}
\caption{Plots showing size of data stored during the run of \IRCFR in the first row and DOEFG in the second row (log y-axis) as a function of the sum of exploitabilities of the resulting strategies of player 1 and 2 (log x-axis) for GP6, GS6 and P224.}
\label{fig:large_memory}
\end{figure}

In Figure \ref{fig:large_memory} we present the size of data structures stored during the run of \IRCFR (first row) and DOEFG (second row) for GP6, GS6, and P224.
These results confirm that \IRCFR requires small memory even in larger domains, while the data stored by DOEFG remain large. %

\begin{figure}[t]
\begin{center}\includegraphics[height=0.175cm]{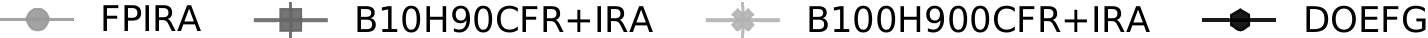}\end{center}
\vspace{-0.3cm}
\includegraphics[width=0.33\textwidth]{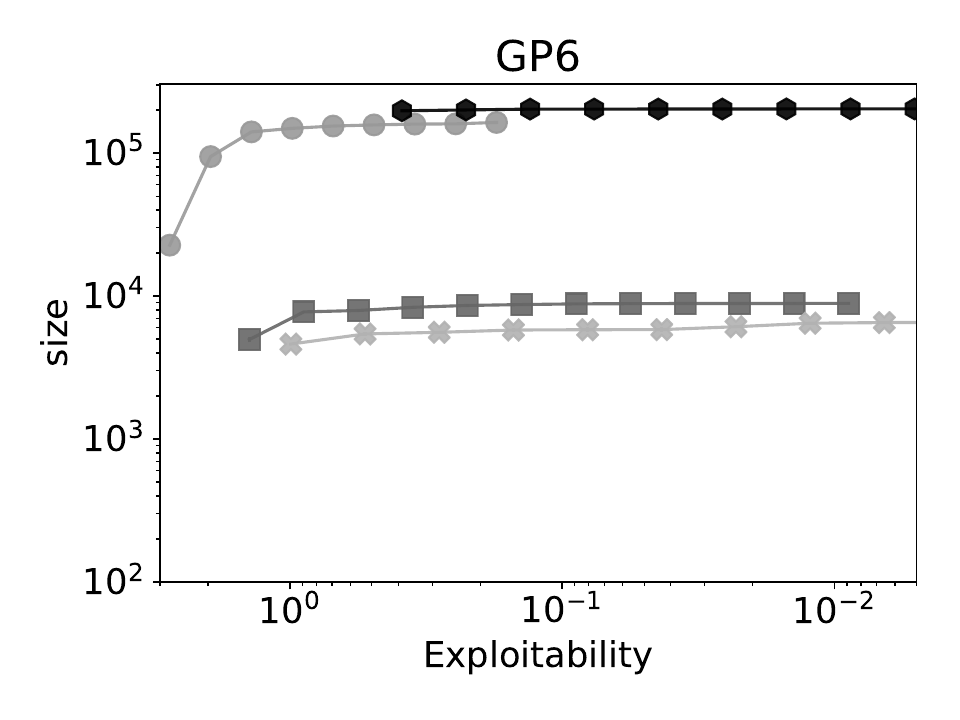}\includegraphics[width=0.33\textwidth]{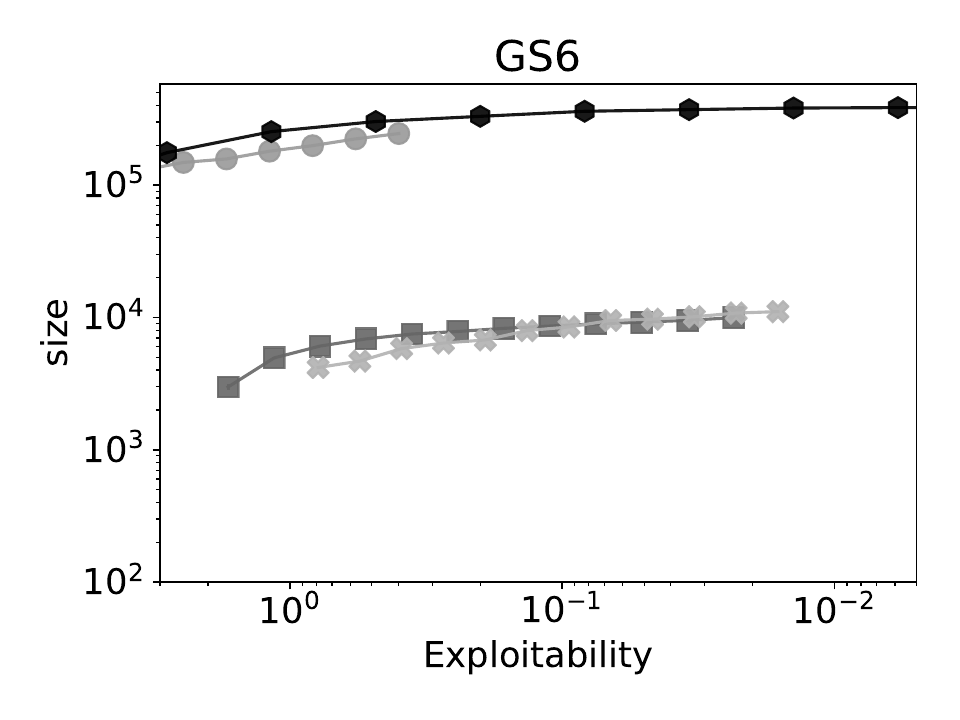}\includegraphics[width=0.33\textwidth]{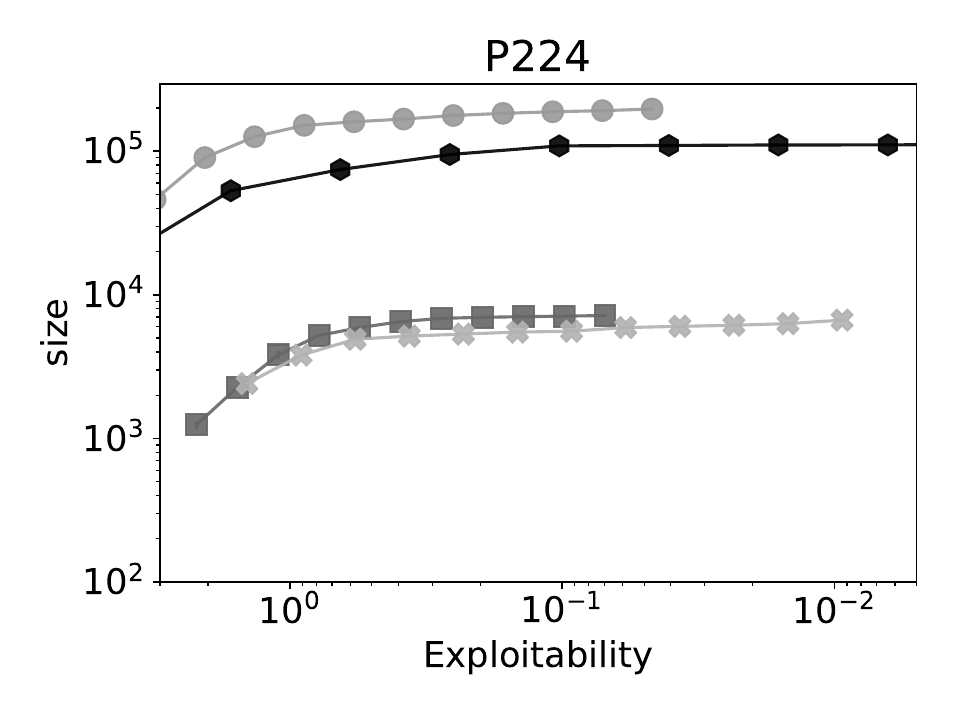}
\caption{The number of 32 bit words the algorithms store (log y-axis) as a function of the exploitability of the resulting strategies (log x-axis) for GP6, GS6 and P224.}
\label{fig:large_number_count}
\end{figure}

Finally, in Figure \ref{fig:large_number_count} we again depict the number of 32-bit words stored by the algorithms for GP6, GS6, and P224. These results further confirm that \IRCFR requires at least an order of magnitude less memory than FPIRA and DOEFG.

\subsubsection{Relative Size of \IRCFR Abstractions}
\begin{figure}[t]
\includegraphics[width=0.33\textwidth]{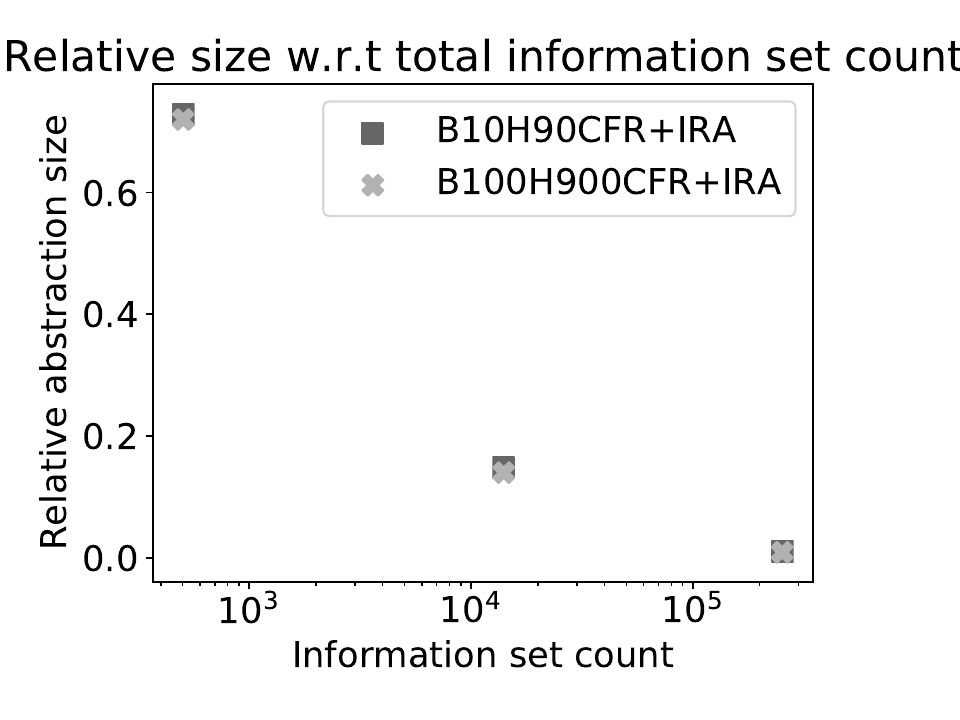}\includegraphics[width=0.33\textwidth]{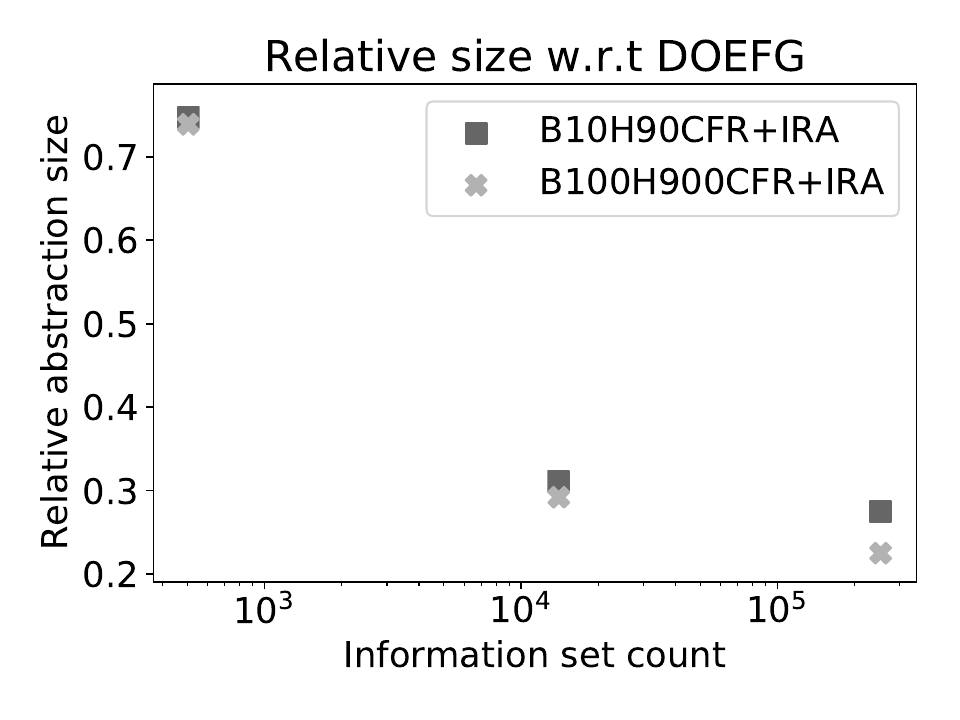}\includegraphics[width=0.33\textwidth]{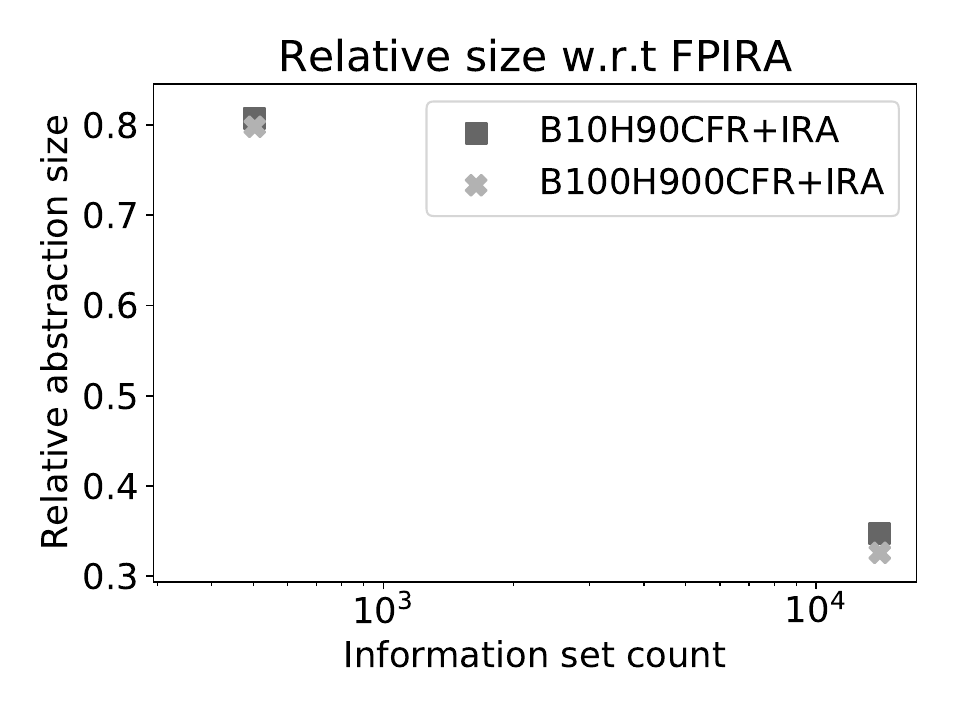}
\caption{Plots showing the relative size of the abstractions used by \IRCFR (y-axis) compared to the total information set count, the size of the restricted game used by DOEFG and the size of the abstraction used by FPIRA to compute resulting strategies with the sum of exploitabilities 0.05 as a function of the information set count of different poker instances (log x-axis).}
\label{fig:poker_sizes}
\end{figure}
 Next, we compare how the size of the abstraction of the \IRCFR scales with the size of the solved domain. In Figure \ref{fig:poker_sizes} we provide the relative size of the abstraction required by \IRCFR to compute strategies with exploitability 0.05 on P111, P222, and P224 as a function of the total information set count of the solved domains. The relative sizes reported are computed with respect to the total number of information set of the solved domains (left plot), the size of the restricted game required by DOEFG (middle plot) and the size of the abstraction required by FPIRA (right plot). The plots show that the relative size of the abstraction required by \IRCFR decreases in all three settings. This suggests that for larger domains the relative size of the abstraction built by \IRCFR will further decrease not only compared to the total information set count but also compared to the restricted game required by DOEFG and the abstraction size required by FPIRA.

\subsection{Runtime}\label{sec:experiments:runtime}
In this section, we provide a comparison of the runtime of the algorithms.
\begin{figure}[t]
\begin{center}\includegraphics[height=0.175cm]{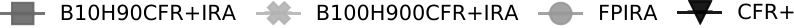}\end{center}
\vspace{-0.3cm}
\includegraphics[width=0.33\textwidth]{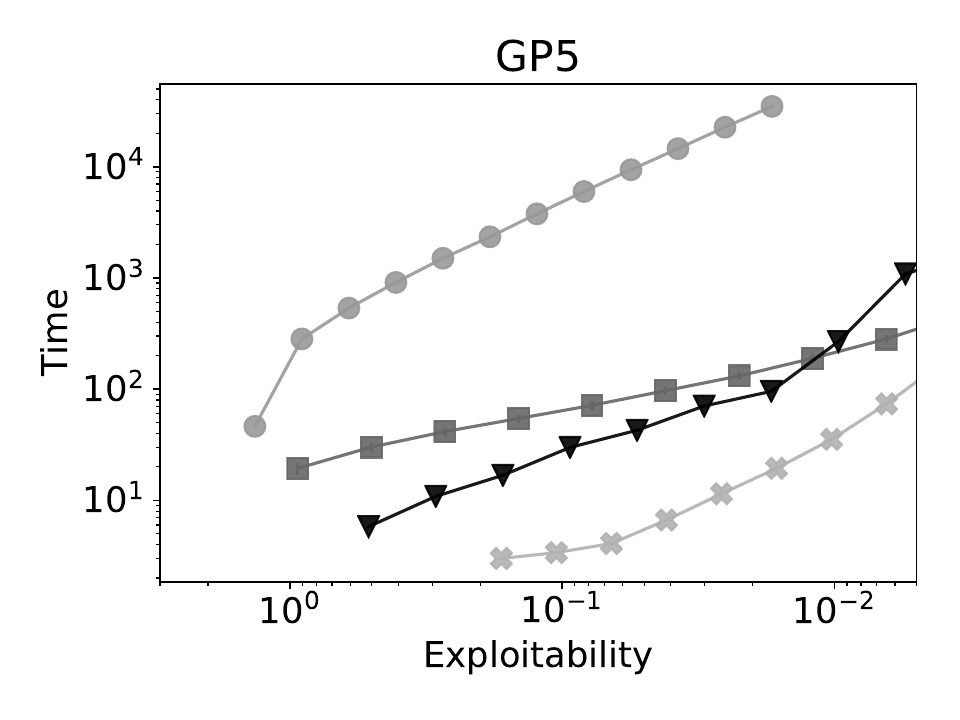}\includegraphics[width=0.33\textwidth]{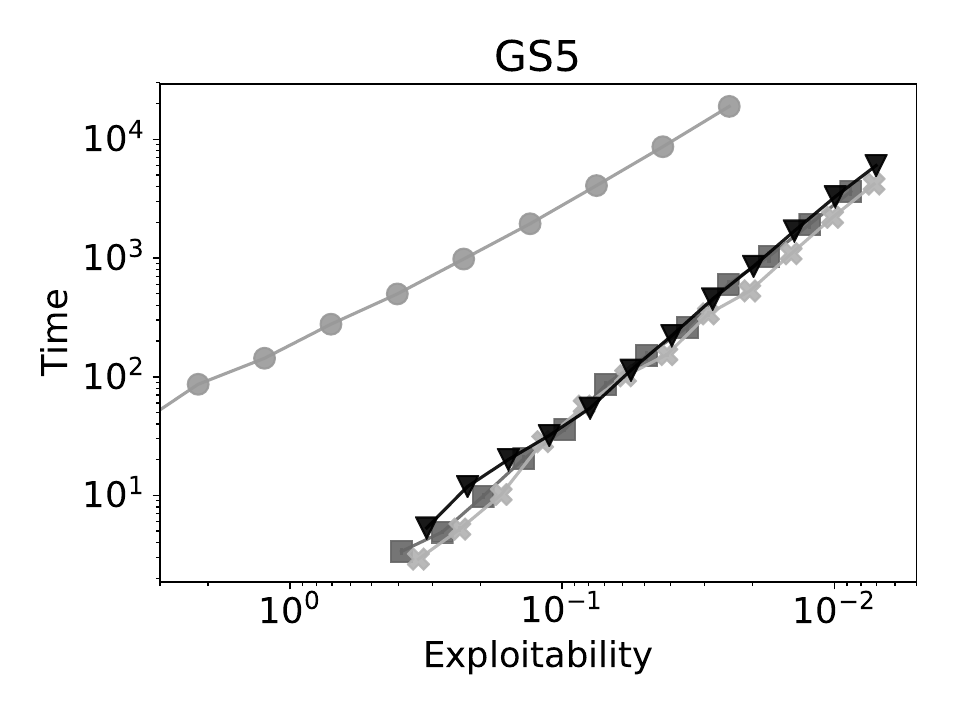}\includegraphics[width=0.33\textwidth]{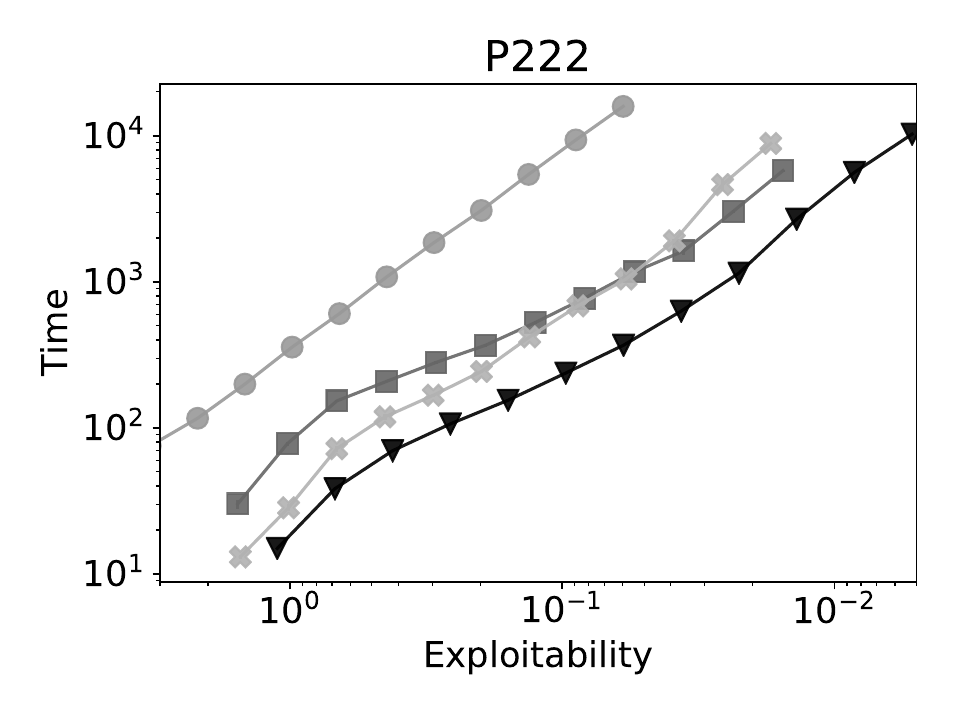}
\caption{The plots showing runtime of FPIRA and \IRCFR in seconds (log y-axis) required to reach the given sum of exploitabilities of resulting strategies of player 1 and 2 (log x-axis) for GP5, GS5, and P222 respectively.}
\label{fig:medium_runtime}
\end{figure}
The plots in Figure \ref{fig:medium_runtime} show the runtime comparison in seconds of \IRCFR, FPIRA and CFR+ for GP5, GS5, and P222. We again depict the results for \IRCFR as averages with the standard error over 10 runs with different seeds. The runtime of \IRCFR in all versions is consistently better than FPIRA and except for P222 is comparable to the runtime of CFR+. We omitted the DOEFG from this comparison since all \IRCFR, FPIRA and CFR+ use a domain-independent implementation in Java, while our implementation of DOEFG uses efficient IBM CPLEX LP solver. Furthermore, as discussed above, DOEFG was never implemented with emphasis on memory efficiency. And so it heavily exploits additional caches for the restricted game not reported in the previous section which significantly improves its runtime while increasing its memory requirements. Hence, e.g., in the case of GP5, the DOEFG took $19$ seconds to find strategies with exploitability 0.01, while B100H900\IRCFR took $38$ seconds and the B10H90\IRCFR $200$ seconds. On the other hand, DOEFG required 1 GB of memory, while both settings of \IRCFR used 38 MB. 

\begin{figure}[t]
\begin{center}\includegraphics[height=0.175cm]{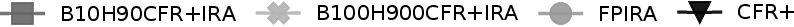}\end{center}
\vspace{-0.3cm}
\includegraphics[width=0.33\textwidth]{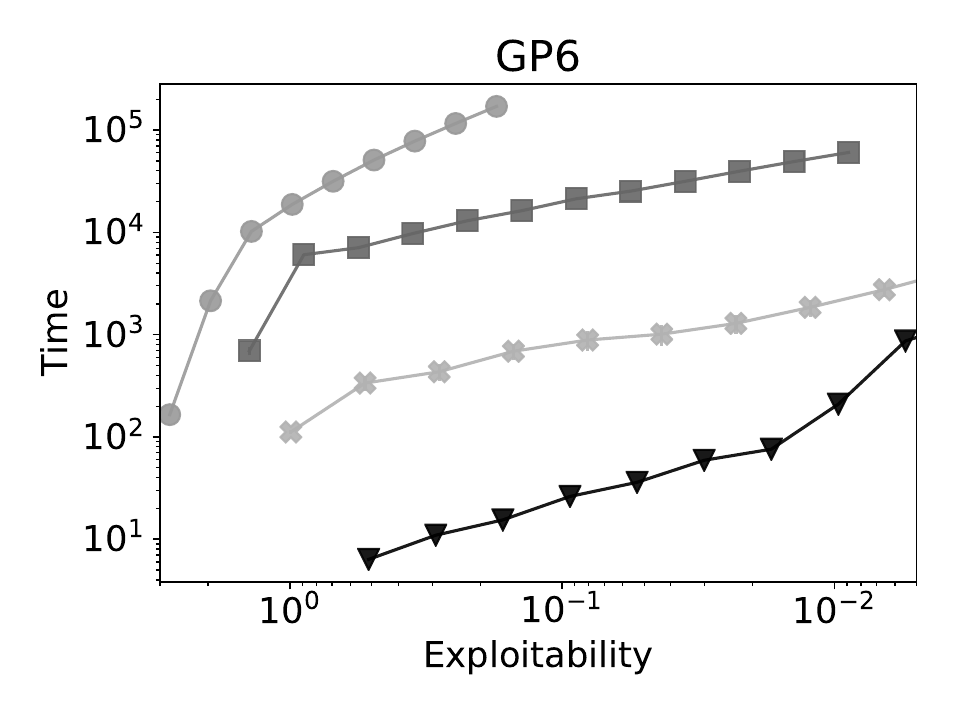}\includegraphics[width=0.33\textwidth]{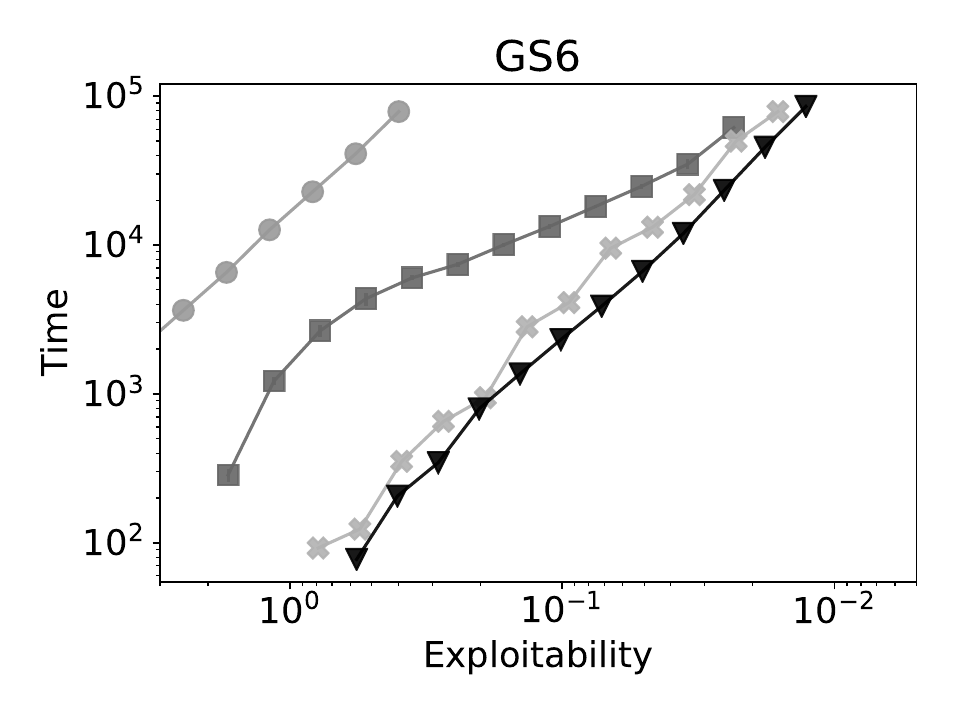}\includegraphics[width=0.33\textwidth]{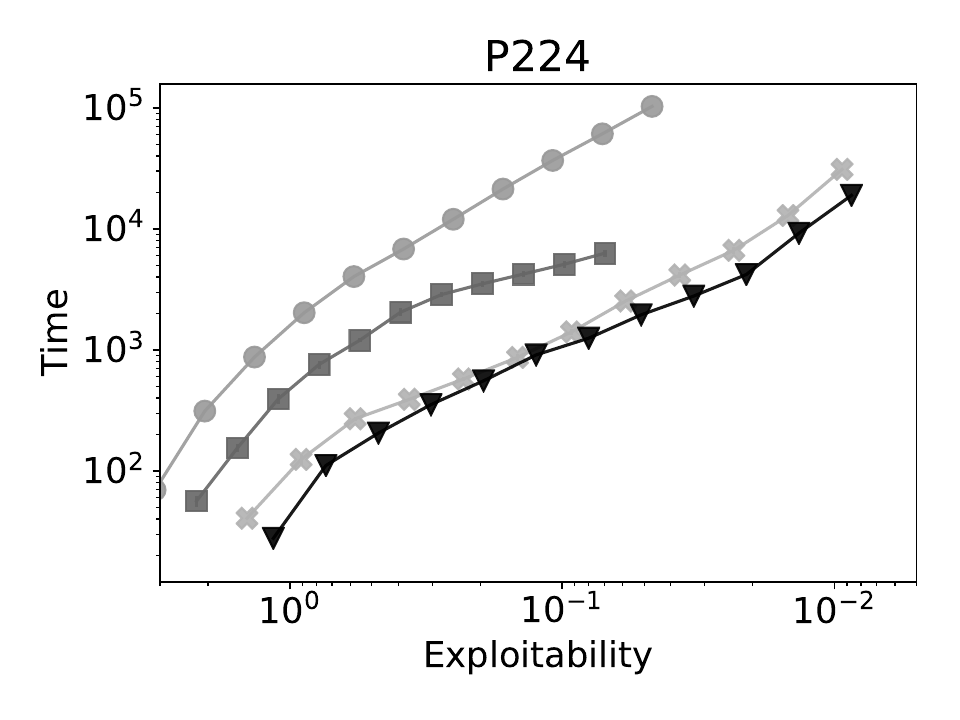}
\caption{The plots showing runtime of FPIRA and \IRCFR in seconds (log y-axis) required to reach the given sum of exploitabilities of resulting strategies of player 1 and 2 (log x-axis) for GP6, GS6 and P224 respectivelly.}
\label{fig:large_runtime}
\end{figure}

The plots in Figure \ref{fig:large_runtime} show the runtime comparison in seconds of \IRCFR, FPIRA and CFR+ for GP6, GS6 and P224. These plots further confirm the runtime dominance of \IRCFR over FPIRA. The high runtime of FPIRA is the cause for omitting the results of FPIRA for smaller exploitabilities of the resulting strategies, as the time required to compute them becomes prohibitive.  Furthermore, the results show that the runtime is worse for \IRCFR where $k_b + k_h = 100$ compared to the case with $k_b + k_h = 1000$. Additionally, there is a more profound difference between the \IRCFR runtime and the runtime of CFR+. Both observations are expected since using $k_h+k_b = 100$ and $k_h+k_b = 1000$ means that the algorithm uses less than $0.1\%$ and $1\%$ of information sets for the abstraction update in all 3 domains. Hence it takes longer to refine the abstraction to allow strategies with a smaller exploitability.

{
\subsection{Comparison of Computed Abstractions}
Further, we qualitatively compare the abstractions created by FPIRA and \IRCFR. These abstractions are automatically tuned to accommodate the data required by the individual algorithms. Hence, we compare only the final abstraction after converging close to the optimal strategy.

Let us start with single player games. In these games, optimal strategies do not need to be randomized, and the games can be solved using a single best response computation. FPIRA will find this optimal strategy in the first iteration and refine the initial abstraction to be able to represent it. If there are no chance decisions in the game, only one information set for each level of the game is necessary. With chance, only the information that leads to playing a different action after different chance outcomes is used to refine the abstraction. Using the full best response computation allows the very efficient global view to refining the abstraction. On the other hand, \IRCFR uses only a local view of counterfactual regrets in individual information sets. The abstractions updates based on the bound are rather conservative and harder to analyze, so we focus on the heuristic. The initial iterations are strongly influenced by the pure strategy used to compute the regrets in the first iteration. %
The heuristic can unnecessarily refine the abstraction closer to the root of the game before the further information sets converge to the optimal solution. Therefore, FPIRA will find smaller abstractions in single player problems than \IRCFR, and we assume the same may hold even for games, where single player problem solving is a significant component.

Next, we analyze the final abstractions computed by the algorithms in Imperfect Information Goofspiel with only three cards (3,2,1). The only Nash equilibrium of this small game is to always match the card value. There are 108 information sets in the game. Starting from the initial abstraction with only one information set per level of a game ($6$ in total), FPIRA finds the equilibrium with $20$ and IRCFR with $16$ information sets. The Nash equilibrium can be represented with only the initial 6 information sets. Hence, none of the algorithms finds the minimal possible abstraction to represent it. A notable difference between the final abstractions is in the final two levels of the game, where each player has only the last card left to play. The players clearly do not have to remember anything to do this and \IRCFR correctly identifies this. These 56 information sets from the original games are kept merged to just two by \IRCFR. 
FPIRA, however, ends with 10 information sets on these levels. The reason is that even though a strategy cannot be represented by FPIRA in the current abstraction because some information is missing close to the root, this information is not forgotten until the end of the game. Some of the information sets, leading to higher winnings for either of the players are detached from the rest. For playing the second card (with two cards left), FPIRA leaves the 14 information sets merged to 8 while \IRCFR to 12. The only information sets kept merged by \IRCFR represent forgetting whether the opponent played action 2 or 3 (player lost or tied the trick) after the player played $2$ in the first round. This is because in both cases, continuing with a 3 dominates playing 1, regardless of what the opponent does. FPIRA keeps more information sets merged, because it quickly realizes that only playing $3$ in the first round is meaningful and does not refine the information sets after the other actions.

Kuhn Poker is a small poker game with 12 information sets \cite{Kuhn1950}. FPIRA completely refines the game while computing the equilibrium. \IRCFR finishes with 11 information sets. The only information forgotten is for the second player, if she holds the strongest card, she forgets whether the first player check or bets. Regardless of this, the second player wants to take the more aggressive action (call or bet). A version of Kuhn Poker with 5 cards instead of three has 20 information sets. The final abstraction with \IRCFR has 16 information sets and with FPIRA 17 information sets. The first player has 10 information sets. Both algorithms forget whether the first player holds the strongest or the second strongest card after the second player bets. Always calling is optimal in both these cases. \IRCFR also ignores this information for the first decision of the first player, while FPIRA does not and split the information sets. The first player must randomize among the options in this information sets, which is more difficult to do without splitting the information set for FPIRA. For the second player, the algorithms find different abstractions, but both consist of 8 information sets.
\IRCFR keeps unrefined for the second player the situations where she has the strongest card regardless of the action of the opponent together with having the second strongest card and the opponent checking. These three situations belong to distinct information sets in the original game, but they are the same in the abstraction. With FPIRA, the second player forgets whether she has the strongest or the second strongest card after the first player bets. Also, it forgets the same information after the first player checks. However, she remembers whether the opponent checks or bets.

These example strategies suggest that FPIRA may perform better if the original game contains dominated actions and the best responses will never visit large parts of the game. On the other hand, \IRCFR may perform better in balancing the actions within the support without unnecessary splits of the information sets.
}

\subsection{Experiment Summary}
We have shown that \IRCFR requires at least an order of magnitude less memory to find strategies with a given exploitability compared to the memory required by FPIRA and DOEFG. Furthermore, the results suggest that when increasing the size of the solved domains, the relative memory requirements of \IRCFR will further decrease not only compared to the total information set count of the solved domain but also compared to the memory requirements of FPIRA and DOEFG.
Additionally, we have shown that the heuristic abstraction update provides a good indication of the parts of the abstraction that need to be updated and hence significantly enhances the convergence speed of \IRCFR. 

From the runtime perspective, the DOEFG is the most efficient algorithm. However, we show that its good performance comes at the cost of high memory requirements. Hence, \IRCFR proves useful, as there are domains where the memory requirements of DOEFG are too large.

\section{Conclusion}

The imperfect recall abstraction methodology can significantly reduce the memory required to solve large extensive-form games {and the size of the final solution}. However, solving the resulting imperfect recall abstract games is a {computationally hard problem and the standard algorithms for solving extensive form games are not applicable or lose their convergence guarantees.} Hence, there is only a limited amount of work that focuses on using imperfect recall abstractions.
Previous works use either very restrictive subclasses of imperfect recall abstractions~\cite{lanctot2012,kroer2016imperfect,bosansky2015combining}, heuristic approaches~\cite{waugh2009}, or use computationally complex algorithms to solve the imperfect recall abstracted game~\cite{cermak2018approximating}.

In this work, we propose a novel approach to imperfect recall information abstraction, which does not require any specific structure of the imperfect recall abstraction of a game nor does it use computationally complex algorithms to solve it. Instead, we introduce two domain-independent algorithms FPIRA and \IRCFR which can start with an arbitrary imperfect recall abstraction of the solved two-player zero-sum perfect recall extensive-form game. The algorithms simultaneously solve the abstracted game, detect the missing information causing problems and refine the abstraction to include it. This process is repeated until provable convergence to the desired approximation of the Nash equilibrium of the original game. As a consequence, the strategy from the abstracted game can be used directly in the original game without the need to use translation techniques, and the quality of such strategy is not affected by choice of the initial abstraction. {The cost for all these advantages is that unlike in standard abstraction methodology, the proposed algorithms traverse the full unabstracted game in its iterations. If the size of a game prohibits complete traversal, the standard abstraction methodology is still usable, while the proposed algorithms in their current form are not applicable anymore.
}

The experimental evaluation shows that \IRCFR requires at least an order of magnitude less memory than FPIRA and the Double Oracle algorithm (DOEFG, \cite{bosansky2014}) to solve given domains. Even when using trivial automatically build initial imperfect recall abstraction, \IRCFR is capable of closely approximating Nash equilibrium of large extensive-form games using abstraction with as little as $0.9\%$ of information sets of the original game. Furthermore, the results suggest that the relative size of the abstraction used by \IRCFR will further decrease as the size of the solved game increases. 

\subsection{Future Work}
There are several directions for future work. 

Currently, we use a mapping of actions defined by the domain description when merging the information sets as a part of the construction of the initial abstraction for FPIRA and \IRCFR. A more sophisticated mapping, which would take into account the similarity of the actions, could greatly reduce the number of refinements of the initial abstraction \IRCFR and FPIRA need to perform to find the desired approximation of the Nash equilibrium of the original game.

CFR based approaches are known to perform extremely well in poker domains \cite{moravcik2017deepstack,tammelin2015solving,brown2017superhuman}.  Since as a part of \IRCFR we use a domain-independent implementation of CFR+, we were not able to reach the full performance potential of \IRCFR. Hence, domain-specific implementation of \IRCFR, even outside of poker domain, would greatly improve its performance and would allow further significant scale-up of the algorithm.  

During the evaluation of \IRCFR and FPIRA, we focused on automatically built initial abstractions. We believe that applying \IRCFR to existing abstractions commonly used in poker would significantly improve the convergence speed of \IRCFR as these abstractions are built by domain experts and are known to perform well in practice. Applying \IRCFR to such abstractions would have two benefits. (1) \IRCFR could suggest further improvements of the abstractions in places where they are too coarse. (2) \IRCFR would use these abstractions to provide a better approximation of the Nash equilibrium of the original domain than the current solvers, as it would improve the abstraction if necessary. This, combined with the domain-specific implementation could lead to significant improvement of the quality of the strategies computed in the existing abstractions, as there are typically no guarantees that the abstraction allows computation of sufficient approximation of the Nash equilibrium in the poker domain.

{Finally, a notable limitation of the presented methods is the need to traverse the complete game tree of the unabstracted game in each iteration. Therefore, it would be interesting to investigate a combination of \IRCFR with sampling techniques, which would allow keeping the small memory requirements of the algorithm, but make it applicable to much larger games, in which it is impossible to traverse the complete game tree. Similarly, it might be possible to adapt FPIRA to use a fast approximation of best response, such as Local Best Response \cite{lisy2017lbr}, to avoid traversing the complete game tree.}

\section*{Acknowledgments}
This research was supported by the Czech Science Foundation (grant no. 15-23235S and 18-27483Y),  and by the Grant Agency of the Czech Technical University in Prague, grant No. SGS16/235/OHK3/3T/13. Computational resources were provided by the CESNET LM2015042 and the CERIT Scientific Cloud LM2015085, provided under the programme "Projects of Large Research, Development, and Innovations Infrastructures".
\section*{References}
\bibliography{refs}
\end{document}